\documentclass{article}

\usepackage{ijcai11}
\usepackage{times}

 \usepackage{amsmath,amssymb,amsthm}
 \usepackage{epsfig}
 \usepackage{color}
 \usepackage{bbm}
 \usepackage{boxedminipage}
 \usepackage{pifont}
 \usepackage{xspace}
 \usepackage{graphics}
 \usepackage{misc}
 \usepackage{url}
 \usepackage{psfrag}
 \usepackage{graphicx}

 \usepackage{subfig}

\DeclareMathSymbol{\mybigsqcup}{\mathop}{largesymbols}{"46}
\DeclareSymbolFont{stmaryrd}{U}{stmry}{m}{n}
\DeclareMathSymbol{\mybigsqcap}{\mathop}{stmaryrd}{"64}

\setlength{\belowcaptionskip}{-1ex} 

\title{
Description Logic TBoxes: Model-Theoretic
  Characterizations and Rewritability}

\author{
Carsten Lutz$^1$ $\qquad$ Robert Piro$^2$ $\qquad$ Frank Wolter$^2$\\[2mm]
  \begin{tabular}[c]{ccc}
\textnormal{$^1$ Fachbereich Informatik} && $^2$ \textnormal{Department of Computer Science}\\
\textnormal{Universit\"at Bremen, Germany} && \textnormal{University of Liverpool, UK}
  \end{tabular}
}

\begin{document}

\maketitle

\begin{abstract}
  We characterize the expressive power of description logic (DL)
  TBoxes, both for expressive DLs such as \ALC and \ALCQIO and
  lightweight DLs such as DL-Lite and \EL.  Our characterizations are
  relative to first-order logic, based on a wide range of semantic
  notions such as bisimulation, equisimulation, disjoint union, and
  direct product. We exemplify the use of the characterizations by a
  first study of the following novel family of decision problems:
  given a TBox \Tmc formulated in a DL \Lmc,
  decide whether \Tmc can be equivalently rewritten as a TBox in the fragment
  $\Lmc'$ of~\Lmc. 
\end{abstract}

\section{Introduction}

Since the emergence of description logics (DLs) in the 1970s and
80s, research in the area has been driven by the fundamental
trade-off between expressive power and computational complexity
\cite{Baader-et-al-03b}.  Over the years, the idea of what complexity is
`acceptable' has varied tremendously, from insisting on tractability
in the 1980s gradually up to \NExpTime- or even 2\NExpTime-hard DLs
in the 2000s, soon intermixed with a revival of DLs for
which reasoning is tractable or even in {\sc AC}$_0$ (in a database
context). Nowadays, it is widely accepted that there is no universal
definition of acceptable computational complexity, but that a variety
of DLs is needed to cater for the needs of different applications. For
example, this is reflected in the recent OWL 2 standard by the W3C,
which comprises one very expressive (and 2\NExpTime-complete) DL
and three tractable `profiles' to be used in
applications where the full expressive power is not needed and
efficient reasoning is crucial. 

While DLs have greatly benefited from this development, becoming much
more varied and usable, there are also new challenges that arise: how
to choose a DL for a given application? What to do when you have an
ontology formulated in a DL \Lmc, but would prefer to use a different
DL $\Lmc'$ in your application? How do the various DLs interrelate?
The first aim of this paper is to lay ground for the study of these
and similar questions by providing exact model-theoretic
characterizations of the expressive power of TBoxes formulated in the
most important DLs, including expressive ones such as \ALC and \ALCQIO
(the core of the expressive DL formalized as OWL 2) and lightweight ones
such as \EL and DL-Lite (the cores of two of the OWL 2 profiles).  We
characterize the expressive power of DL TBoxes relative to first-order
logic (FO) as a reference point, which (indirectly) also yields a
characterization of the expressive power of a DL relative to other
DLs. The second aim of this paper is to exemplify the use of the
obtained characterizations by developing algorithms for the novel
decision problem \emph{$\Lmc_1$-to-$\Lmc_2$-TBox rewritability}: given
an $\Lmc_1$-TBox~\Tmc, decide whether there is an $\Lmc_2$-TBox that
is equivalent to~\Tmc.  Note the connection to TBox approximation,
studied e.g.\ in
\cite{DBLP:conf/aaai/RenPZ10,DBLP:conf/aimsa/BotoevaCR10,DBLP:conf/rr/TserendorjRKH08}:
when $\Lmc_1$ is computationally complex and the goal is to
approximate \Tmc in a less expressive DL $\Lmc_2$, the optimal result
is of course an equivalent $\Lmc_2$-TBox~$\Tmc'$, i.e., when \Tmc can
be rewritten into $\Lmc_2$ without any loss of information.

We prepare the study of TBox expressive power with a characterization
of the expressive power of DL \emph{concepts} in
Section~\ref{sect:concdef}. These are in the spirit of the well-known
van Benthem Theorem~\cite{GorankoOtto}, giving an exact condition for
when an FO-formula with one free variable is equivalent to a DL
concept. We use different versions of bisimulation for \ALC and its
extensions, and simulations and direct products for \EL and DL-Lite.
There is related work by de Rijke and Kurtonina \cite{DBLP:journals/ai/KurtoninaR99},
which, however, does not cover those DLs that are considered central
today. We then move on to our main topics, characterizing the
expressive power of DL TBoxes and studying TBox rewritability in
Sections~\ref{sect:tboxdef} and~\ref{LightweightDLs}. To characterize
when a TBox is equivalent to an FO sentence, we use `global' and
symmetric versions of the model-theoretic constructions in
Section~\ref{sect:concdef}, enriched with various versions of
(disjoint and non-disjoint) unions and direct products. These results
are loosely related to work by Borgida
\cite{DBLP:journals/ai/Borgida96}, who focusses on DLs with complex
role constructors, and by Baader \cite{DBLP:journals/logcom/Baader96},
who uses a more liberal definition of expressive power. We use
our characterizations to establish decidability of TBox rewritability
for the $\ALCI$-to-\ALC and $\ALC$-to-$\EL$ cases. The
algorithms are highly non-trivial and a more detailed study of TBox
rewritability has to remain as future work.



Most proofs in this paper are deferred to the appendix.

\section{Preliminaries}
\label{sect:prelims}

In DLs, \emph{concepts} are defined inductively based on a set
of \emph{constructors}, starting with a set $\NC$ of \emph{concept
  names}, a set $\NR$ of \emph{role names}, and 
a set $\NI$
of \emph{individual names} (all countably infinite). The
concepts of the expressive DL $\ALCQIO$ are formed using the
constructors shown in Figure~\ref{tab:syntax-semantics}.

\begin{figure}[t]
  \begin{center}
    \small
    \leavevmode
\resizebox{\columnwidth}{!}{
    \begin{tabular}{|@{\hspace*{2pt}} l @{\hspace*{2pt}}|@{\hspace*{2pt}}c@{\hspace*{2pt}}|l@{\hspace*{2pt}}|}
     \hline Name &Syntax&Semantics\\ \hline\hline & &\\[-.85em]
      inverse role &$r^-$& 
      $(r^\Imc)^\smallsmile = \{(d,e)\mid (e,d)\in r^\Imc\}$
      \\ \hline\hline & &\\[-1em]
      nominal&$\{ a \}$& $\{ a^\Imc \}$ \\ \hline & &\\[-.85em]
      negation&$\neg C$&$\Delta^\I \setminus C^\I$\\ \hline & &\\[-.85em]
      conjunction&$C\sqcap D$&$C^\I\cap D^\I$\\ \hline & &\\[-.85em]
      disjunction&$C\sqcup D$&$C^\I\cup D^\I$\\ \hline & &\\[-.85em]
  
       at-least
       restriction
      &$\qnrgeq n r C$& 
      $\{ d \in \Delta^\Imc\mid \# (r^\I(d) \cap C^\I)  \geq n \}$\\ \hline & &\\[-.85em]
       at-most
       restriction
      &$\qnrleq n r C$& 
      $\{ d \in \Delta^\Imc\mid \#(r^\I(d) \cap C^\I)  \leq n \}$ 
      \\ \hline 
       \end{tabular} 
}
    \caption{Syntax and semantics of \ALCQIO.}
    \label{tab:syntax-semantics}
  \end{center}
\vspace*{-1ex}
\end{figure}

%
In
Figure~\ref{tab:syntax-semantics} and in general, we use 
$r^\I(d)$ to denote the set of all $r$-successors of $d$ in $\I$, 
$\#S$ for the cardinality of a set $S$, $a$ and $b$ to denote
individual names, $r$ and $s$ to denote roles (i.e., role names and
inverses thereof), $A,B$ to denote concept names, and $C,D$ to denote
(possibly compound) concepts.  As usual, we use $\top$ as abbreviation
for $A \sqcup \neg A$, $\bot$ for $\neg
\top$, $\rightarrow$ and $\leftrightarrow$ for the usual Boolean
abbreviations, $\exists r . C$ (\emph{existential restriction}) for
$\qnrgeq 1 r C$, and $\forall r .  C$ (\emph{universal restriction})
for $\qnrleq 0 r {\neg C}$.

Throughout the paper, we consider the expressive DL $\ALCQIO$, which
can be viewed as a core of the OWL 2 recommendation, and several
relevant fragments; a basic such fragment underlying the OWL 2 EL
profile of OWL 2 is the lightweight DL $\mathcal{EL}$, which allows
only for $\top$, $\bot$, conjunction, and existential restrictions.
By adding negation, one obtains the basic Boolean-closed DL
$\mathcal{ALC}$. Additional constructors are indicated by
concatenation of a corresponding letter: $\mathcal{Q}$ stands for
number restrictions, $\mathcal{I}$ for inverse roles, and
$\mathcal{O}$ for nominals. This explains the name $\ALCQIO$ and
allows us to refer to fragments such as $\ALCI$ and $\ALCQ$. From the
DL-Lite family of lightweight DLs \cite{CDLLR05,aaai07}, which
underlies the OWL 2 QL profile of OWL 2, we consider
\emph{DL-Lite$_\mn{horn}$} whose concepts are conjunctions of
\emph{basic concepts} of the form $A$, $\exists r.\top$, $\bot$, or
$\top$, where $A\in \NC$ and $r$ is a role name or its inverse. We
will also consider the DL-Lite$_\mn{core}$ variant, but defer a
detailed definition to Section~\ref{sect:tboxdef}. 
We use $\mn{DL}$ to denote the set of DLs just introduced, and
${\sf ExpDL}$ to denote the set of 
\emph{expressive DLs}, i.e., \ALC and its extensions introduced above.

The semantics of DLs is defined in terms of an
\emph{interpretation} $\I=(\Delta^\Imc,\cdot^\Imc)$, where
$\Delta^\Imc$ is a non-empty set and $\cdot^\Imc$ maps each concept name
$A\in\NC$ to a subset $A^\I$ of $\Delta^\Imc$, each role name
$r\in\NR$ to a binary relation $r^\I$ on $\Delta^\Imc$, and 
each individual name $a \in \NI$ to an $a^\Imc \in \Delta^\Imc$.
The extension of $\cdot^\Imc$ to inverse roles and arbitrary concepts
is inductively defined as shown in the third column of
Figure~\ref{tab:syntax-semantics}. 

For $\Lmc \in \mn{DL}$, an $\Lmc$-\emph{TBox} is a finite set of
\emph{concept inclusions} (CIs) $C \sqsubseteq D$, where $C$ and $D$
are $\Lmc$ concepts.  An interpretation \Imc \emph{satisfies} a CI $C
\sqsubseteq D$ if $C^\Imc \subseteq D^\Imc$ and is a \emph{model} of a
TBox \Tmc if it satisfies all inclusions in \Tmc.

Concepts and TBoxes formulated in any $\Lmc \in \mn{DL}$ can be
regarded as formulas in first-order logic (FO) with equality using
unary predicates from $\NC$, binary predicates from $\NR$, and
constants from $\NI$.  More precisely, for every concept $C$ there is
an FO-formula $C^{\sharp}(x)$
such that $\mathcal{I} \models C^{\sharp}[d]$ iff $d\in
C^{\mathcal{I}}$, for all interpretations $\mathcal{I}$ and $d\in
\Delta^{\Imc}$ \cite{Baader-et-al-03b}.  For every TBox $\Tmc$, the
FO sentence
\vspace*{-1mm}
$$
\mathcal{T}^{\sharp} = \bigwedge_{C \sqsubseteq D\in \mathcal{T}} 
\forall x.(C^{\sharp}(x) \rightarrow D^{\sharp}(x))
\vspace*{-1mm}
$$
is logically equivalent to $\mathcal{T}$. We will often not explicitly
distinguish between DL-concepts and TBoxes and their translation into
FO. For example, we write $\Tmc \equiv \vp$ for a TBox \Tmc and an
FO-sentence $\vp$ whenever $\Tmc^{\sharp}$ is equivalent to $\vp$.



\section{Characterizing Concepts}
\label{sect:concdef}

We characterize DL-concepts relative to FO-formulas with one free
variable, mainly to provide a foundation for subsequent
characterizations on the TBox level. We use the notion of an
\emph{object} $(\Imc,d)$, which consists of an interpretation $\Imc$
and a $d\in \Delta^{\Imc}$ and, intuitively, represents an object from
the real world.  Two objects $(\Imc_1,d_{1})$ and $(\Imc_2,d_{2})$ are
\emph{$\mathcal{L}$-equivalent}, written $(\Imc_1,d_{1}) \equiv_{\Lmc}
(\Imc_2,d_{2})$, if $d_{1}\in C^{\Imc_{1}}$ $\Leftrightarrow$
$d_{2}\in C^{\Imc_{2}}$ for all $\Lmc$-concepts $C$. Our first aim is to
provide, for each $\Lmc \in \mn{DL}$, a relation $\sim_{\Lmc}$ on
objects such that $\equiv_{\Lmc} \; \supseteq \; \sim_{\Lmc}$ and the
converse holds for a large class of interpretations.  To ease
notation, we use only $d$ to denote the object $(\Imc,d)$ when \Imc is
understood. 

\begin{figure}[tb]
  \begin{center}
    \small
    \leavevmode
    \begin{tabular}{|@{\,}l@{\,}|l|}
     \hline 
[Atom] & for all $(d_{1},d_{2})\in S$:
  $d_{1} \in A^{\Imc_{1}}$ iff $d_{2}\in A^{\Imc_{2}}$ \\ \hline
      [AtomR] & if $(d_{1},d_{2})\in S$ and
  $d_{1} \in A^{\Imc_{1}}$, then $d_{2}\in A^{\Imc_{2}}$\\ \hline
[Forth] & if $(d_1,d_2)\in S$ and $d_1'\in \text{succ}_{r}^{\Imc_{1}}(d_{1})$, $r \in \NR$,
then \\
 & there is a $d_2'\in \text{succ}_{r}^{\Imc_{2}}(d_{2})$ with $(d_1',d_2')\in S$.\\ \hline
[Back] & dual of [Forth]\\ \hline
$[$QForth$]$ & if $(d_1,d_2)\in S$ and  $D_1
\subseteq \text{succ}_{r}^{\Imc_{1}}(d_{1})$ finite, $r \in \NR$, \\
 & then there is a $D_2\subseteq \text{succ}_{r}^{\Imc_{2}}(d_{2})$
such that $S$ contains \\
& a bijection between $D_1$ and $D_{2}$.             
      \\ \hline 
$[$QBack$]$ & dual of [QForth]\\ \hline
$[$FSucc$]$ & if $(d_1,d_2)\in S$, $r$ a role, and
$\text{succ}_{r}^{\Imc_{1}}(d_{1})\not=\emptyset$,\\
& then $\text{succ}_{r}^{\Imc_{2}}(d_{2})\not=\emptyset$.\\ \hline 
       \end{tabular} 
    \caption{Conditions on $S \subseteq \Delta^{\Imc_{1}}\times \Delta^{\Imc_{2}}$.}
    \label{tab:conditions}
  \end{center}
\end{figure}

We start by introducing the classical notion of a bisimulation, which
corresponds to $\equiv_{\ALC}$ in the described sense. Two objects
$(\Imc_{1},d_{1})$ and $(\Imc_{2},d_{2})$ are \emph{bisimilar}, in
symbols $(\Imc_{1},d_{1}) \sim_{\ALC} (\Imc_{2},d_{2})$, if there
exists a relation $S \subseteq \Delta^{\Imc_{1}}\times
\Delta^{\Imc}_{2}$ such that the conditions [Atom] (for $A\in \NC$),
[Forth] and [Back] from Figure~\ref{tab:conditions} hold, where ${\sf
  succ}_{r}^{\Imc}(d) = \{ d' \in \Delta^{\Imc}\mid (d,d')\in
r^{\Imc}\}$ and `dual' refers to swapping the r\^oles of
$\Imc_1,d_1,d'_1$ and $\Imc_2,d_2,d'_2$; we call such an $S$ a
\emph{bisimulation} between $(\Imc_1,d_1)$ and $(\Imc_2,d_2)$.  To
address \ALCQ, we extend this to \emph{counting bisimilarity}
(cf. \cite{JaninLenzi}), in symbols
$\sim_{\mathcal{ALCQ}}$, and defined as bisimilarity, but with [Forth]
and [Back] replaced by [QForth] and [QBack] from
Figure~\ref{tab:conditions}.
Given $\sim_{\Lmc}$, the relation $\sim_{\mathcal{L}\mathcal{O}}$ for
the extension $\Lmc\Omc$ of \Lmc with nominals is defined by
additionally requiring $S$ to satisfy [Atom] for all concepts $A=\{a\}$
with $a\in \NI$. Similarly, $\sim_{\mathcal{L}\mathcal{I}}$ for the
extension $\Lmc\Imc$ of \Lmc with inverse roles demands that in all conditions of $\sim_{\Lmc}$,
$r$ additionally ranges over inverse roles. 
\begin{example}
{\em In Figure~\ref{fig:7} (L), $d_{1}\sim_{\ALC} d_{2}$ and a bisimulation
is indicated by 
dashed arrows. In contrast, $d_{1}\not\sim_{\Lmc}d_{2}$ for
$\Lmc\in \{\mathcal{ALCQ},\mathcal{ALCO},\mathcal{ALCI}\}$. It is instructive
to construct $\Lmc$-concepts $C$ that show $d_{1}\not\equiv_{\Lmc}d_{2}$.
}
\end{example}
We have provided a relation $\sim_{\Lmc}$ for each $\Lmc\in {\sf
  ExpDL}$.  For lightweight DLs with their restricted use of negation,
it
will be useful to consider \emph{non-symmetric} relations between
objects.  A relation $S\subseteq \Delta^{\Imc_{1}}\times
\Delta^{\Imc_{2}}$ is an \emph{$\EL$-simulation} from $\Imc_{1}$ to
$\Imc_{2}$ if it satisfies [AtomR] (for $A\in \NC$) and [Forth] from
Figure~\ref{tab:conditions}.  $S$ is a \emph{DL-Lite$_{\sf
    horn}$-simulation} from $\Imc_{1}$ to $\Imc_{2}$ if it satisfies
[AtomR] (for $A\in \NC$) and [FSucc]. Let $\Lmc\in \{\EL,\text{DL-Lite}_{\sf
  horn}\}$.  Then $(\Imc_{1},d_{1})$ is \emph{$\Lmc$-simulated} by
$(\Imc_{2},d_{2})$, in symbols $d_{1} \leq_{\Lmc} d_{2}$, if there
exists an $\Lmc$-simulation $S$ with $(d_{1},d_{2})\in S$.  The
relation $\sim_{\Lmc}$ that corresponds to (the inherently symmetric)
$\equiv_\Lmc$ is $\Lmc$-equisimilarity: $d_1$ and $d_2$ are
\emph{$\Lmc$-equisimilar}, written $d_{1} \sim_{\mathcal{\Lmc}}
d_{2}$, if $d_{1} \leq_{\Lmc} d_{2}$ and $d_{2} \leq_{\Lmc} d_{1}$.

\begin{example}
{\em In Figure~\ref{fig:7} (R), $d_{1}\sim_{\EL}d_{2}$, the $\EL$-simulations
are indicated by the dashed arrows. But $d_{1}\not\sim_{\ALC}d_{2}$.
}
\end{example}

It is known from modal logic that ${\equiv}_\ALC \supseteq
{\sim}_\ALC$ \cite{GorankoOtto}, 
but that the converse holds only for certain classes of
interpretations, called Hennessy-Milner classes, such as the class of
all interpretations of finite out-degree. For our purposes, we need a
class such that (i)~${\equiv}_\Lmc \subseteq {\sim}_\Lmc$ holds in
this class, for all $\Lmc \in \mn{DL}$ and (ii) every interpretation
is elementary equivalent (indistinguishable by FO sentences) to an
interpretation in the class. These conditions are satisfied by the
class of all $\omega$-saturated interpretations, as known from
classical model theory \cite{ChangKeisler} and defined in full detail
in the long version. For the reader, it is most important that this class
satisfies the above Conditions~(i) and~(ii).
It can be seen that every finite interpretation 
and modally saturated interpretation in the sense of \cite{GorankoOtto} is
$\omega$-saturated. 
%
\begin{figure}[t]
\begin{boxedminipage}{\columnwidth}
\begin{center}
\includegraphics[height=45pt]{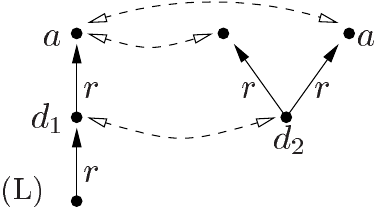}\hspace{1cm}
\includegraphics[height=45pt]{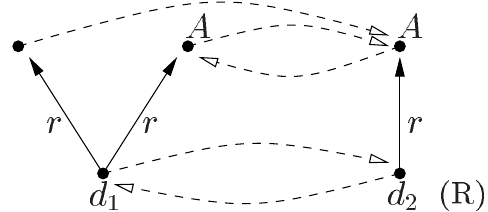}\hspace{1cm}
\end{center}
\end{boxedminipage}
\caption{Examples for $d_{1}\sim_{\Lmc}d_{2}$}
\label{fig:7}
\end{figure}

\begin{theorem}\label{equivalence}
Let $\mathcal{L}\in {\sf DL}$
and $(\mathcal{I}_{1},d_{1})$ and $(\mathcal{I}_{2},d_{2})$
be objects.
\vspace*{-4mm}
\begin{enumerate}
\setlength{\itemsep}{0pt}
\item If $d_{1} \sim_{\mathcal{L}} d_{2}$, then $d_{1} \equiv_{\mathcal{L}}d_{2}$;
\item If $d_{1}\equiv_{\mathcal{L}} d_{2}$ and $\Imc_1,\Imc_2$ are $\omega$-saturated, 
then $d_{1} \sim_{\mathcal{L}} d_{2}$.
\end{enumerate}
\end{theorem}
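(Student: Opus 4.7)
The plan is to prove~(1) by induction on $\Lmc$-concept structure and~(2) by taking $S := {\equiv_\Lmc}$ (respectively $S := {\leq_\Lmc}$ in the simulation DLs $\EL$ and DL-Lite$_{\sf horn}$) as the candidate (equi)simulation and appealing to $\omega$-saturation to inflate finite witnesses into single successors. For~(1), fix a witness $S$ with $(d_1,d_2) \in S$ and proceed by induction on $C$. Atomic concepts, including nominals $\{a\}$ when $\Lmc$ contains $\Omc$, are covered by [Atom] or [AtomR]; Boolean combinations reduce directly to the inductive hypothesis; $\exists r.C$ uses [Forth] (and [Back] in the symmetric bisimulation case); the DL-Lite$_{\sf horn}$ concept $\exists r.\top$ uses [FSucc]; and $\qnrgeq n r C$, $\qnrleq n r C$ use [QForth] and [QBack] applied to an $n$-element witness set. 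The $\Imc$-extension is covered by letting $r$ range over inverse roles throughout. For the simulation DLs, $\Lmc$-equivalence in both directions of equisimilarity yields $\equiv_\Lmc$-invariance.

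For~(2), the [Atom]/[AtomR]/nominal clauses on $S$ are immediate from the definition of $\equiv_\Lmc$ or $\leq_\Lmc$. The heart of the argument is the forth/back/successor clauses. For [Forth] in the $\ALC$ case: given $(d_1,d_2) \in S$ and $d_1' \in \text{succ}_r^{\Imc_1}(d_1)$, I would form the $\Lmc$-type $\Phi(x) := \{C^{\sharp}(x) : d_1' \in C^{\Imc_1}\}$. Any finite $\{C_1,\dots,C_n\} \subseteq \Phi$ yields $d_1 \in (\exists r.(C_1 \sqcap \cdots \sqcap C_n))^{\Imc_1}$, and $d_1 \equiv_\Lmc d_2$ transports this to $d_2$, so $\Phi(x) \cup \{r(d_2,x)\}$ is finitely satisfiable in $\Imc_2$. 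By $\omega$-saturation of $\Imc_2$, some $d_2' \in \text{succ}_r^{\Imc_2}(d_2)$ realizes all of $\Phi$, giving $(d_1',d_2') \in S$. [Back] is symmetric; $\ALCI$ works with the same argument with $r$ ranging over inverse roles; $\ALCO$ is immediate; and $\EL$ and DL-Lite$_{\sf horn}$ follow the identical pattern on ${\leq_\Lmc}$, the latter using $\exists r.\top$ witnesses for [FSucc].

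The main obstacle is [QForth] and [QBack] for $\ALCQ$, which demand a \emph{bijection} between two finite successor sets rather than the mere existence of one matching successor. I would partition $D_1 \subseteq \text{succ}_r^{\Imc_1}(d_1)$ into $\equiv_{\ALCQ}$-classes $E_1,\dots,E_k$ of sizes $n_1,\dots,n_k$ with representatives $e_1,\dots,e_k$. Every pair $e_i \not\equiv_{\ALCQ} e_j$ is separated by an $\ALCQ$-concept, so a finite conjunction $C_i$ holds at $e_i$ and fails at every $e_j$ with $j \neq i$. For any concept $C$ satisfied by $e_i$, the $n_i$ elements of $E_i$ witness $d_1 \in (\qnrgeq{n_i}{r}{(C_i \sqcap C)})^{\Imc_1}$, and $\equiv_{\ALCQ}$ transports this to $d_2$. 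Iterating $\omega$-saturation, by adding inequalities $x \neq f_1,\dots,x \neq f_{m-1}$ and using these $\qnrgeq{n_i}$-statements to keep the enlarged type finitely realized, produces $n_i$ distinct $r$-successors of $d_2$ each $\equiv_{\ALCQ}$-equivalent to $e_i$. Pairing these up across classes yields the bijection required by [QForth]; [QBack] is symmetric. For the simulation DLs, $d_1 \equiv_\Lmc d_2$ puts both $(d_1,d_2)$ and $(d_2,d_1)$ into ${\leq_\Lmc}$, so two applications of the one-sided forth argument deliver $d_1 \sim_\Lmc d_2$.
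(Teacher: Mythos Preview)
Your argument is correct and follows the paper's strategy: induction on concepts for (1), and for (2) taking the type-equivalence relation as the candidate (bi)simulation and verifying the back-and-forth clauses via $\omega$-saturation. Two remarks are worth making.

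First, your notation in the simulation case is circular as written: setting $S := {\leq_\Lmc}$ makes $(d_1,d_2)\in S$ exactly the statement to be proved. What you evidently intend (and what the paper does) is the type-inclusion relation $S := \{(e_1,e_2) \mid t^{\Imc_1}_\Lmc(e_1) \subseteq t^{\Imc_2}_\Lmc(e_2)\}$; then $(d_1,d_2)\in S$ follows immediately from $d_1 \equiv_\Lmc d_2$, and your $\omega$-saturation argument for [Forth] goes through. Your reference to ``the identical pattern'' confirms this reading, since [Forth] for the literal relation $\leq_\Lmc$ would be trivial and need no saturation.

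Second, for [QForth] the paper takes a more direct route than you do. Rather than partitioning $D_1$ into $\equiv_{\ALCQ}$-classes and iterating single-variable realizations with accumulating inequality parameters, it introduces one variable $x_d$ per $d \in D_1$ and realizes in a single step the multi-variable type
\[
\Gamma \;=\; \{\neg(x_d = x_{d'}) : d \neq d'\}\ \cup\ \{r(e_2,x_d) : d \in D_1\}\ \cup\ \bigcup_{d\in D_1} \{C^\sharp(x_d) : C \in t^{\Imc_1}_{\ALCQ}(d)\},
\]
using that the paper's definition of $\omega$-saturation covers types in finitely many free variables. This bypasses your separating concepts $C_i$, which are in fact redundant even in your own argument: the bound $d_1 \in (\qnrgeq{n_i}{r}{C})^{\Imc_1}$ is already witnessed by $E_i$ without conjoining $C_i$, and disjointness of the batches across classes follows automatically from $\equiv_{\ALCQ}$-inequivalence of their realized types. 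Your route is correct but longer; the paper's single realization is cleaner.
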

We now characterize concepts formulated in expressive
DLs relative to FO. An FO-formula $\varphi(x)$ is
\emph{invariant under $\sim_{\mathcal{L}}$} if for any two objects
$(\mathcal{I}_{1},d_{1})$ and $(\mathcal{I}_{2},d_{2})$, from
$\mathcal{I}_{1} \models \varphi[d_{1}]$ and $d_{1}\sim_{\mathcal{L}}
d_{2}$ it follows that $\mathcal{I}_{2} \models \varphi[d_{2}]$.
\begin{theorem}\label{local}
Let $\mathcal{L}\in {\sf ExpDL}$ and $\varphi(x)$ an FO-formula. 
Then the following conditions are equivalent:
\vspace*{-0.5mm}
\begin{enumerate}
\setlength{\itemsep}{0pt}

\item there exists an $\mathcal{L}$-concept $C$ such that
  $C \equiv \varphi(x)$;

\item $\varphi(x)$ is  invariant under $\sim_{\mathcal{L}}$.

\end{enumerate}
\end{theorem}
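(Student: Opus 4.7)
The proof follows the classical van Benthem strategy, adapted to the DL setting. The direction (1)$\Rightarrow$(2) is immediate from Theorem~\ref{equivalence}(1): if $C \equiv \varphi(x)$, then $d_1 \sim_{\mathcal{L}} d_2$ implies $d_1 \equiv_{\mathcal{L}} d_2$, which forces $\mathcal{I}_1 \models \varphi[d_1]$ iff $\mathcal{I}_2 \models \varphi[d_2]$, establishing invariance under $\sim_{\mathcal{L}}$.

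For (2)$\Rightarrow$(1), the plan is to define the set of $\mathcal{L}$-consequences
\[
T(x) = \{C^{\sharp}(x) \mid C \text{ is an }\mathcal{L}\text{-concept and } \varphi(x) \models C^{\sharp}(x)\},
\]
to prove $T(x) \models \varphi(x)$, and then invoke compactness: a finite subset $T_0 \subseteq T$ satisfies $\bigwedge T_0 \models \varphi$, and because $\mathcal{L}\in\mathsf{ExpDL}$ is Boolean-closed, $\bigwedge T_0$ is (equivalent to) the translation of an $\mathcal{L}$-concept. Combined with the trivial $\varphi \models \bigwedge T_0$, this yields the desired $\mathcal{L}$-concept equivalent to $\varphi(x)$.

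To prove $T(x) \models \varphi(x)$, I would fix $(\mathcal{I},d) \models T(x)$ and show $\mathcal{I} \models \varphi[d]$. Let $\mathrm{tp}_{\mathcal{L}}(\mathcal{I},d) = \{C^{\sharp}(x) : d \in C^{\mathcal{I}}\} \cup \{\neg C^{\sharp}(x) : d \notin C^{\mathcal{I}}\}$. The key step is to argue that $\{\varphi(x)\} \cup \mathrm{tp}_{\mathcal{L}}(\mathcal{I},d)$ is finitely satisfiable: otherwise, by compactness there is an inconsistent finite subconjunction, which after rearranging gives $\varphi \models (\neg C_1 \sqcup \cdots \sqcup \neg C_k \sqcup D_1 \sqcup \cdots \sqcup D_l)^{\sharp}$ for some $C_i$ with $d \in C_i^{\mathcal{I}}$ and $D_j$ with $d \notin D_j^{\mathcal{I}}$; by Boolean closure of $\mathcal{L}$ this disjunction is an $\mathcal{L}$-concept, hence lies in $T$, hence is satisfied by $d$---contradicting the choices of the $C_i$ and $D_j$. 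So we obtain $(\mathcal{I}',d') \models \{\varphi(x)\} \cup \mathrm{tp}_{\mathcal{L}}(\mathcal{I},d)$, which means $d \equiv_{\mathcal{L}} d'$. Passing to $\omega$-saturated elementary extensions $\mathcal{I}^{*} \succeq \mathcal{I}$ and $\mathcal{I}'^{*} \succeq \mathcal{I}'$ preserves both $\mathcal{I}'^{*} \models \varphi[d']$ and $\mathcal{L}$-equivalence of $d,d'$ (since every $C^{\sharp}$ is an FO-formula). Theorem~\ref{equivalence}(2) now upgrades $\equiv_{\mathcal{L}}$ to $\sim_{\mathcal{L}}$ in the saturated models, and invariance yields $\mathcal{I}^{*} \models \varphi[d]$, which descends to $\mathcal{I} \models \varphi[d]$ by elementarity.

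The main obstacle, and the reason the statement is restricted to $\mathsf{ExpDL}$, is the Boolean closure needed in the compactness step to guarantee that the disjunction $\neg C_1 \sqcup \cdots \sqcup D_l$ is itself expressible in $\mathcal{L}$; without this (e.g.\ for $\mathcal{EL}$ or \text{DL-Lite}$_{\sf horn}$), the argument breaks down and a different invariant---simulation rather than bisimulation, together with disjoint-product-style constructions---is required, as will be developed in Section~\ref{LightweightDLs}.
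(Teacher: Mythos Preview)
Your proof is correct and follows essentially the same route as the paper's: define the set of $\mathcal{L}$-consequences of $\varphi$, use compactness together with Boolean closure of $\mathcal{L}$ to show this set entails $\varphi$, and bridge $\equiv_{\mathcal{L}}$ and $\sim_{\mathcal{L}}$ via Theorem~\ref{equivalence} on $\omega$-saturated interpretations. The paper frames the argument as a proof by contradiction and takes the counter-model $\omega$-saturated from the outset rather than passing to saturated elementary extensions at the end, but these are cosmetic differences only.
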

For $\mathcal{ALC}$, this result is exactly van Benthem's
characterization of modal formulae as the bisimulation invariant
fragment of FO \cite{GorankoOtto}.  For the modal logic variant of
$\mathcal{ALCQ}$, a similar, though more complex, characterization has
been given in \cite{DBLP:journals/sLogica/Rijke00}.

Concept definability in the lightweight DLs $\mathcal{EL}$ and
DL-Lite$_\mn{horn}$ cannot be characterized exactly as in
Theorem~\ref{equivalence}.  In fact, one can show that invariance
under $\sim_{\mathcal{EL}}$ characterizes FO-formulae equivalent to
\emph{Boolean combinations} of $\mathcal{EL}$-concepts, and invariance
under $\sim_{\text{DL-Lite}_\mn{horn}}$ characterizes FO-formulae
equivalent to DL-Lite$_{bool}$-concepts, see \cite{aaai07}.  To fix
this problem, we switch from $\sim_\Lmc$ to $\leq_\Lmc$ and
additionally require the FO-formula $\vp(x)$ to be preserved under
direct products. 
 Intuitively, the first modification addresses the
 restricted use of negation and the second one the lack of disjunction
 in \EL and $\text{DL-Lite}_\mn{horn}$. 

Let $\Imc_{i}$, $i\in I$, be a family of interpretations. 
The \emph{(direct) product} $\prod_{i\in I}\mathcal{I}_{i}$ is the interpretation defined as follows:
%
$$
\begin{array}{r@{\;}c@{\;}l}
\Delta^{\prod \Imc_i} &=&  \{ \bar d:I\rightarrow \bigcup_{i\in I} \Delta^{\Imc_i} \mid 
\text{ for }i\in I: \bar d_i =\bar d(i)\in \Delta^{\Imc_i}\} \\
A^{\prod \Imc_i} &=&\{ \bar d \in \Delta^{\prod \Imc_i} \mid \text{for } i\in I: d_i\in A^{\Imc_i}\} \quad \text{ for } 
A\in \NC\\
r^{\prod \Imc_i} &=&\{ (\bar d,\bar e) \mid \text{for } i\in I: (d_{i},e_{i}) \in r^{\Imc_i}\} \quad \text{ for } r\in \NR
\end{array}
$$
%
Note that products are closely related to Horn logic, both in the case of full
FO \cite{ChangKeisler} and modal
logic~\cite{DBLP:journals/sLogica/Sturm00a}.  An FO-formula
$\varphi(x)$ is \emph{preserved under products} if for all families
$(\Imc_i)_{i \in I}$ of interpretations and all $\bar d \in
\Delta^{\prod \Imc_i}$ with $\mathcal{I}_{i}\models \varphi[\bar
d_{i}]$ for all $i\in I$, we have $\prod_{i\in
  I}\mathcal{I}_{i},\models \varphi[\bar d]$. This notion is adapted
in the obvious way to FO sentences. For $\mathcal{L}\in
\{\mathcal{EL},\text{DL-Lite}_\mn{horn}\}$, an FO-formula $\varphi(x)$
is \emph{preserved under $\leq_{\Lmc}$} if
$(\Imc_{1},d_{1})\leq_{\Lmc} (\Imc_{2},d_{2})$ and $\Imc_{1}\models
\varphi[d_{1}]$ imply $\Imc_{2}\models \varphi[d_{2}]$.
\begin{theorem}\label{local1}
Let $\mathcal{L}\in \{\mathcal{EL},\text{DL-Lite}_\mn{horn}\}$
and $\varphi(x)$ an FO-formula. 
Then the following conditions are equivalent:
\begin{enumerate}
\setlength{\itemsep}{0pt}
\item there exists an $\mathcal{L}$-concept $C$ such that $C \equiv \varphi(x)$;
\item $\varphi(x)$ is preserved under $\leq_{\mathcal{L}}$ and under products.
\end{enumerate}
\end{theorem}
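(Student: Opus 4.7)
The plan is to prove each direction separately. For $(1) \Rightarrow (2)$ I would use a routine structural induction on the $\mathcal{L}$-concept $C$. Preservation under $\leq_{\mathcal{L}}$ is handled case-by-case: the atom case from [AtomR], conjunctions by descending to both conjuncts, and $\exists r . D$ (in $\EL$) or $\exists r . \top$ (in $\text{DL-Lite}_\mn{horn}$) via [Forth] or [FSucc] respectively. For preservation under products I would actually prove a stronger Horn-like property: for every $\mathcal{L}$-concept $C$ and every family $(\mathcal{I}_i, d_i)_{i \in I}$, $\bar d \in C^{\prod \mathcal{I}_i}$ iff $d_i \in C^{\mathcal{I}_i}$ for all $i \in I$. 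The only non-trivial case, $\exists r . D$, unfolds from the observation that a tuple $\bar e$ is an $r$-successor of $\bar d$ in the product iff each $e_i$ is an $r$-successor of $d_i$ in $\mathcal{I}_i$, so witnesses chosen in the factors assemble into a product witness and conversely.

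For $(2) \Rightarrow (1)$ the strategy is to set $\Gamma(x) = \{C(x) : C \text{ an } \mathcal{L}\text{-concept with } \varphi \models C\}$, show $\Gamma \models \varphi$, and invoke FO compactness: a finite $\Gamma_0 \subseteq \Gamma$ will then satisfy $\Gamma_0 \models \varphi$, and the conjunction of all concepts in $\Gamma_0$ is an $\mathcal{L}$-concept equivalent to $\varphi$. To show $\Gamma \models \varphi$, I would take an arbitrary $(\mathcal{I}_2, d_2) \models \Gamma$, produce an auxiliary model $(\mathcal{I}_1, d_1) \models \varphi$ whose $\mathcal{L}$-type at $d_1$ is included in that of $(\mathcal{I}_2, d_2)$ at $d_2$, and then transfer $\varphi$ back to $(\mathcal{I}_2, d_2)$ via an $\mathcal{L}$-simulation and the preservation hypothesis.

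Producing such $(\mathcal{I}_1, d_1)$ is the main obstacle and is precisely where preservation under products is used. Suppose for contradiction that $\varphi(x) \cup \{\neg C(x) : C \text{ an } \mathcal{L}\text{-concept and } d_2 \notin C^{\mathcal{I}_2}\}$ is inconsistent. By FO compactness, $\varphi \models C_1 \vee \cdots \vee C_n$ for $\mathcal{L}$-concepts $C_j$ not satisfied by $d_2$. Since $(\mathcal{I}_2, d_2) \models \Gamma$, no $C_j$ lies in $\Gamma$, so each $\varphi \wedge \neg C_j$ has a model $(\mathcal{I}'_j, d'_j)$. Forming the product $\prod_{j=1}^{n} \mathcal{I}'_j$ with $\bar d = (d'_1, \ldots, d'_n)$, preservation yields $\prod_j \mathcal{I}'_j \models \varphi[\bar d]$, so $\bar d \in C_k^{\prod_j \mathcal{I}'_j}$ for some $k$; but then the Horn-like property from the first paragraph forces $d'_k \in C_k^{\mathcal{I}'_k}$, contradicting $(\mathcal{I}'_k, d'_k) \models \neg C_k$. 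So the auxiliary $(\mathcal{I}_1, d_1)$ must exist.

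For the final step I would pass to an $\omega$-saturated elementary extension $\mathcal{I}_2^*$ of $\mathcal{I}_2$ (the $\mathcal{L}$-type of $d_2$ is preserved) and define $S \subseteq \Delta^{\mathcal{I}_1} \times \Delta^{\mathcal{I}_2^*}$ by $(e_1, e_2) \in S$ iff every $\mathcal{L}$-concept satisfied by $e_1$ in $\mathcal{I}_1$ is satisfied by $e_2$ in $\mathcal{I}_2^*$. Then $(d_1, d_2) \in S$ and [AtomR] is immediate. For [Forth] (respectively [FSucc]), $\omega$-saturation realises the type $\{r(e_2, y)\} \cup \{C^\sharp(y) : e_1' \in C^{\mathcal{I}_1}\}$ in $\mathcal{I}_2^*$: any finite sub-type is witnessed because $\exists r . (C_1 \sqcap \cdots \sqcap C_k)$ is an $\mathcal{L}$-concept satisfied by $e_1$ in $\mathcal{I}_1$ and hence, by the definition of $S$, by $e_2$ in $\mathcal{I}_2^*$. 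Thus $(\mathcal{I}_1, d_1) \leq_{\mathcal{L}} (\mathcal{I}_2^*, d_2)$, preservation under $\leq_{\mathcal{L}}$ gives $\mathcal{I}_2^* \models \varphi[d_2]$, and elementary equivalence of $\mathcal{I}_2$ and $\mathcal{I}_2^*$ then yields $\mathcal{I}_2 \models \varphi[d_2]$, establishing $\Gamma \models \varphi$.
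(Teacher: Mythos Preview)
Your proposal is correct and follows essentially the same strategy as the paper: define the set $\Gamma$ of $\mathcal{L}$-consequences of $\varphi$, show $\Gamma \models \varphi$ by building (via products) a model of $\varphi$ whose $\mathcal{L}$-type is contained in that of an arbitrary model of $\Gamma$, and then transfer $\varphi$ along an $\mathcal{L}$-simulation obtained from type inclusion into a saturated target. The only cosmetic differences are that the paper takes a single infinite product indexed by \emph{all} $\mathcal{L}$-concepts false at $d_2$ (rather than your compactness-then-finite-product detour) and saturates both interpretations before invoking a separate lemma, whereas you correctly observe that saturating only the simulation target suffices for the forth condition.
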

\begin{example}
{\em In Figure~\ref{fig:9}, $d_{i}\in (\exists r.A_{1}\sqcup \exists r.A_{2})^{\Imc_{i}}$ for
$i=1,2$, but $(d_{1},d_{2})\not\in (\exists r.A_{1}\sqcup \exists r.A_{2})^{{\Imc_{1}}\times \Imc_{2}}$.
Thus, disjunctions of $\EL$-concepts are not preserved under products.
}
\end{example}
\begin{figure}[t]
\begin{boxedminipage}{\columnwidth}
 \begin{center}
 \includegraphics[height=44.3pt]{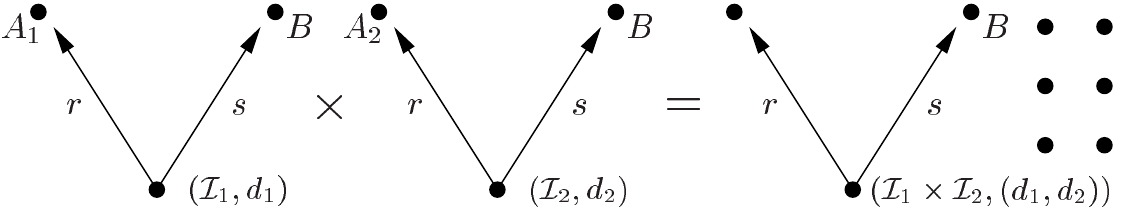}
 \end{center}
\end{boxedminipage}
\caption{A product}
\label{fig:9}
 \end{figure}
 It is known that an FO-formula is preserved under products in the
 above sense iff it is preserved under binary products (where $I$ has
 cardinality~2) \cite{ChangKeisler}.  Likewise (and because of that),
 all results stated in this paper hold both for unrestricted produces
 and for binary ones.


\section{Characterizing TBoxes, Expressive DLs}
\label{sect:tboxdef}

A natural first idea for lifting Theorem~\ref{local} from the concept
level to the level of TBoxes is to replace the `local' relations
$\sim_\Lmc$ with their `global' counterpart $\sim_{\mathcal{L}}^{g}$, i.e.,
 $\mathcal{I}_{1} \sim_{\mathcal{L}}^{g} \mathcal{I}_{2}$ iff
for all $d_{1}\in \Delta^{\mathcal{I}_{1}}$ there exists $ d_{2}\in
\Delta^{\mathcal{I}_{2}}$ with
$(\mathcal{I}_{1},d_{1})\sim_{\mathcal{L}} (\mathcal{I}_{2},d_{2})$
and vice versa.  It turns out that, in this way, we characterize
Boolean \Lmc-TBoxes rather than \Lmc-TBoxes for all $\Lmc \in
\mn{ExpDL}$, where a \emph{Boolean \Lmc-TBox} is an expression built
up from \Lmc-concept inclusions and the Boolean operators
$\neg$,~$\wedge$,~$\vee$. 
The proof exploits compactness and Theorem~\ref{equivalence}.
\begin{theorem}\label{global}
  Let $\mathcal{L}\in {\sf ExpDL}$ and $\varphi$ an FO-sentence.  Then
  the following conditions are equivalent:
\begin{enumerate}
\setlength{\itemsep}{0pt}

\item there exists a Boolean $\mathcal{L}$-TBox $\mathcal{T}$ such
  that $\mathcal{T}\equiv \varphi$;

\item $\varphi$ is invariant under $\sim_{\mathcal{L}}^{g}$.

\end{enumerate}
\end{theorem}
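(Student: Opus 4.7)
For $(1) \Rightarrow (2)$, I would proceed by induction on the Boolean structure of $\Tmc$. The atomic case is a single inclusion $C \sqsubseteq D$: assuming $\Imc_1 \models C \sqsubseteq D$ and $\Imc_1 \sim_{\Lmc}^{g} \Imc_2$, any $d_2 \in C^{\Imc_2}$ admits, by the ``vice versa'' clause of $\sim_{\Lmc}^{g}$, some $d_1 \in \Delta^{\Imc_1}$ with $d_1 \sim_{\Lmc} d_2$; Theorem~\ref{equivalence}(1) then yields $d_1 \equiv_{\Lmc} d_2$, so $d_1 \in C^{\Imc_1}$, hence $d_1 \in D^{\Imc_1}$, hence $d_2 \in D^{\Imc_2}$. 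The Boolean induction step is immediate, since $\sim_{\Lmc}^{g}$-invariance is closed under $\neg, \wedge, \vee$.

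For $(2) \Rightarrow (1)$, the plan is a standard compactness argument. Let $\Gamma$ be the set of Boolean $\Lmc$-TBoxes entailed by $\varphi$. I would show $\Gamma \models \varphi$; compactness then yields a finite conjunction $\psi_1 \wedge \cdots \wedge \psi_n$, itself a Boolean $\Lmc$-TBox, equivalent to $\varphi$. Fix $\Imc$ with $\Imc \models \Gamma$. A second application of compactness, using that Boolean $\Lmc$-TBoxes are closed under negation and conjunction, produces $\Imc'$ with $\Imc' \models \varphi$ such that $\Imc$ and $\Imc'$ satisfy exactly the same Boolean $\Lmc$-TBoxes. Passing to $\omega$-saturated elementary extensions $\Imc^*, \Imc'^*$ preserves $\varphi$ and all Boolean $\Lmc$-TBoxes, both being FO.

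The heart of the proof, and the main obstacle, is establishing $\Imc^* \sim_{\Lmc}^{g} \Imc'^*$. Given $d \in \Delta^{\Imc^*}$, let $T$ be its full $\Lmc$-type, i.e., the set of all $\Lmc$-concepts $C$ with $d \in C^{\Imc^*}$; since $\Lmc \in \mn{ExpDL}$ is closed under negation, $T$ is in effect a complete type. For any finite $T_0 \subseteq T$, the concept $\bigsqcap T_0$ is nonempty in $\Imc^*$, so the Boolean $\Lmc$-TBox $\bigsqcap T_0 \sqsubseteq \bot$ fails in $\Imc^*$, hence fails in $\Imc'^*$, and $\bigsqcap T_0$ is realized there. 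Finite satisfiability of $T$ together with $\omega$-saturation of $\Imc'^*$ produces some $d' \in \Delta^{\Imc'^*}$ realizing $T$; then $d \equiv_{\Lmc} d'$, and Theorem~\ref{equivalence}(2) supplies $d \sim_{\Lmc} d'$. The symmetric argument gives the other direction of $\sim_{\Lmc}^{g}$.

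Invariance of $\varphi$ under $\sim_{\Lmc}^{g}$ then transfers $\Imc'^* \models \varphi$ to $\Imc^* \models \varphi$, and elementary equivalence returns $\Imc \models \varphi$. The delicate point throughout is the realization of complete $\Lmc$-types in a saturated model, which relies essentially on closure of $\Lmc$ under negation; this is precisely why the theorem is restricted to $\mn{ExpDL}$, and why the lightweight DLs treated in Section~\ref{LightweightDLs} will require the additional machinery (products, non-symmetric simulations) already foreshadowed by Theorem~\ref{local1}.
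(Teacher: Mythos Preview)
Your proposal is correct and follows essentially the same route as the paper: for $(2)\Rightarrow(1)$ both arguments take $\mathsf{cons}(\varphi)$, use compactness twice to produce two models agreeing on all $\Lmc$-concept inclusions (one satisfying $\varphi$, one not), pass to $\omega$-saturated elementary extensions, and then realize full $\Lmc$-types via saturation and Theorem~\ref{equivalence}(2) to obtain $\sim_\Lmc^g$. Your handling of $(1)\Rightarrow(2)$ by induction on the Boolean structure is in fact cleaner than the paper's, which assumes ``w.l.o.g.'' that $\Tmc=\{\top\sqsubseteq C_\Tmc\}$---a normal form that does not literally hold for Boolean TBoxes containing negated inclusions.
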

%

%
%
%
To characterize TBoxes rather than Boolean TBoxes, we thus need to
strengthen the conditions on $\vp$. We first consider 
DLs without nominals. Let
$(\mathcal{I}_{i})_{i\in I}$ be a family of interpretations.
The \emph{union}
$\sum_{i\in I}\mathcal{I}_{i}$
 is defined by setting
\begin{itemize}
\setlength{\itemsep}{0pt}

\item $\Delta^{\sum_{i\in I}\mathcal{I}_{i}} = \bigcup_{i\in I} \Delta^{\mathcal{I}_{i}}$;

\item $X^{\sum_{i\in I}\mathcal{I}_{i}} =  \bigcup_{i\in I} X^{\mathcal{I}_{i}}$
  for $X \in \NC \cup \NR$.

\end{itemize}
If $\Delta^{\Imc_i} \cap \Delta^{\Imc_j} = \emptyset$ for all distinct
$i,j \in I$, then $\sum_{i\in I}\mathcal{I}_{i}$ is a \emph{disjoint
 union}. An FO-sentence $\vp$ is \emph{invariant under
 disjoint unions} if for all families $(\mathcal{I}_{i})_{i\in I}$ of
interpretations with pairwise disjoint domains, we have $\sum_{i\in
  I}\mathcal{I}_{i} \models \varphi$ iff $\mathcal{I}_{i} \models
\varphi$ for all $i\in I$.
Similar to products, one can show that an FO-sentence is invariant under disjoint unions
iff it is invariant under binary disjoint unions.
\begin{example} {\em Examples of Boolean TBoxes not invariant under
    disjoint unions are (i)~$\vp_1 = (\top \sqsubseteq A) \vee (\top
    \sqsubseteq B)$, since the disjoint union $\Imc$ of
    interpretations $\Imc_{1},\Imc_{2}$ with
    $A^{\Imc_{1}}=\Delta^{\Imc_{1}}$, $B^{\Imc_{1}}=\emptyset$, and,
    respectively, $B^{\Imc_{2}}=\Delta^{\Imc_{2}}$,
    $A^{\Imc_{2}}=\emptyset$ is not a model of $\vp_1$; and (ii)~$\vp_2=\neg
    (\top \sqsubseteq A)$, since $\Imc$ is a model of $\vp_2$, but $\Imc_{1}$ is not.  }
\end{example}
\begin{theorem}
\label{thm:withdisjunion}
  Let $\mathcal{L}\in {\sf ExpDL}$ not contain nominals and
  $\varphi$ be an FO-sentence. The following conditions
  are equivalent:
\begin{enumerate}
\setlength{\itemsep}{0pt}

\item there exists a $\mathcal{L}$-TBox $\mathcal{T}$ such that
  $\mathcal{T}\equiv \varphi$;

\item $\varphi$ is invariant under $\sim_{\mathcal{L}}^{g}$ and disjoint unions.

\end{enumerate}
\end{theorem}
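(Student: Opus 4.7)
The plan is to follow the strategy of Theorem~\ref{global} while additionally exploiting invariance under disjoint unions to eliminate the Boolean structure over CIs.

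Direction $(1) \Rightarrow (2)$ is routine. For invariance under $\sim_{\mathcal{L}}^{g}$: given $\Imc_1 \models \Tmc$, $\Imc_1 \sim_{\mathcal{L}}^{g} \Imc_2$, a CI $C \sqsubseteq D \in \Tmc$, and $d_2 \in C^{\Imc_2}$, pick $d_1 \in \Delta^{\Imc_1}$ with $d_1 \sim_{\mathcal{L}} d_2$ and apply Theorem~\ref{equivalence}(1) to transfer membership in $C$ and in $D$. For invariance under disjoint unions: the absence of nominals implies that for every $\mathcal{L}$-concept $E$ one has $E^{\sum_i \Imc_i} = \bigcup_i E^{\Imc_i}$, since role edges (and thus successors, predecessors, and counting witnesses) never cross components; hence each CI holds in the disjoint union iff it holds in every summand.

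For direction $(2) \Rightarrow (1)$, let $\Tmc = \{C \sqsubseteq D \mid \vp \models C \sqsubseteq D,\ C,D\ \mathcal{L}\text{-concepts}\}$. Trivially $\vp \models \Tmc$; once I prove $\Tmc \models \vp$, FO compactness applied to $\Tmc \cup \{\neg \vp\}$ yields a finite $\Tmc_0 \subseteq \Tmc$ with $\Tmc_0 \equiv \vp$. To show $\Tmc \models \vp$, let $\Imc \models \Tmc$ be $\omega$-saturated (w.l.o.g., via elementary extension). For each $d \in \Delta^\Imc$, the $\mathcal{L}$-type $t_d = \{C : d \in C^\Imc\}$ is consistent with $\vp$: otherwise compactness gives $C_1,\dots,C_k \in t_d$ with $\vp \models C_1 \sqcap \cdots \sqcap C_k \sqsubseteq \bot$, a CI lying in $\Tmc$ but violated by $d$. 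For each $d$, pick an $\omega$-saturated $\Jmc_d \models \vp$ realizing $t_d$ at some $e_d$; then $(\Imc,d) \equiv_{\mathcal{L}} (\Jmc_d,e_d)$, hence $(\Imc,d) \sim_{\mathcal{L}} (\Jmc_d,e_d)$ by Theorem~\ref{equivalence}(2).

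Now let $\Jmc = \sum_{d \in \Delta^\Imc} \Jmc_d$, so $\Jmc \models \vp$ by invariance under disjoint unions. The crux is to show $\Imc + \Jmc \sim_{\mathcal{L}}^{g} \Jmc$. Because $\mathcal{L}$ has no nominals, the identity map is a bisimulation between any disjoint component (viewed on its own) and its embedding in the ambient disjoint union. Composing, for each $d$ in the $\Imc$-part of $\Imc + \Jmc$ one obtains $(\Imc+\Jmc,d) \sim_{\mathcal{L}} (\Imc,d) \sim_{\mathcal{L}} (\Jmc_d,e_d) \sim_{\mathcal{L}} (\Jmc,e_d)$, and for $e$ in the $\Jmc$-part $(\Imc+\Jmc,e) \sim_{\mathcal{L}} (\Jmc,e)$; the reverse direction uses the copy of each $e \in \Delta^\Jmc$ inside $\Imc+\Jmc$. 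Invariance of $\vp$ under $\sim_{\mathcal{L}}^{g}$ combined with $\Jmc \models \vp$ yields $\Imc + \Jmc \models \vp$, and a final application of disjoint-union invariance forces $\Imc \models \vp$.

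The main obstacle is verifying the component-to-union bisimulation uniformly for all $\mathcal{L}\in {\sf ExpDL}$ without nominals, i.e.\ checking [Atom], [Forth]/[Back], and [QForth]/[QBack] (the latter pair when $\mathcal{L}$ has number restrictions), including with $r$ ranging over inverse roles whenever $\mathcal{L}$ contains them. This is precisely where the no-nominals assumption is essential: successors, predecessors, and counting witnesses never leave a component, so the identity is a bisimulation; a nominal $\{a\}$ could detect elements of other components and destroy this step.
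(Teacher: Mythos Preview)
Your proof is correct and takes a genuinely different route from the paper's. Both arguments define $\Tmc=\mn{cons}(\vp)$ as the set of $\mathcal{L}$-CIs entailed by $\vp$ and reduce to showing $\mn{cons}(\vp)\models\vp$. The paper proceeds by contradiction: it builds $\Imc^+\models\vp$ as the disjoint union of countermodels to every CI \emph{not} in $\mn{cons}(\vp)$, and $\Imc^-\not\models\vp$ as the disjoint union of $\Imc^+$ with a model of $\mn{cons}(\vp)\cup\{\neg\vp\}$; then $\Imc^+$ and $\Imc^-$ satisfy exactly the same $\mathcal{L}$-CIs, and after passing to $\omega$-saturated elementary equivalents one argues (using closure of $\mathcal{L}$ under negation and conjunction, as in the proof of Theorem~\ref{global}) that this yields $\equiv_\mathcal{L}^g$ and hence $\sim_\mathcal{L}^g$, a contradiction. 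Your argument is more local and direct: for each point $d$ of a given saturated model $\Imc$ of $\mn{cons}(\vp)$ you realize its full $\mathcal{L}$-type inside some saturated $\Jmc_d\models\vp$, glue the $\Jmc_d$ into $\Jmc$, and then use only component-to-union bisimulations together with the single point-level bisimulation $d\sim_\mathcal{L} e_d$ to obtain $\Imc+\Jmc\sim_\mathcal{L}^g\Jmc$ without ever invoking the ``same CIs $\Rightarrow\equiv_\mathcal{L}^g$'' step. What you gain is that you never need to resaturate the large disjoint unions; what the paper gains is a construction that makes the role of CIs more transparent and that does not index a disjoint union by the possibly large domain $\Delta^\Imc$.
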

\begin{proof} (sketch) The direction $1 \Rightarrow 2$ is
  straightforward based on Theorem~\ref{equivalence}, Point~1.  For the
  converse, let $\vp$ be invariant under $\sim_{\mathcal{L}}^{g}$ and
  disjoint unions and consider the set $\mn{cons}(\vp)$ of all
  \Lmc-concept inclusions $C \sqsubseteq D$ such that $\vp
  \models C \sqsubseteq D$. We are done if we can show that
  $\mn{cons}(\vp) \models \vp$: by compactness, one can find a
  finite $\Tmc \subseteq \mn{cons}(\vp)$ with $\Tmc \models \vp$, thus
  $\Tmc$ is the desired \Lmc-TBox.  Assume to the contrary that
  $\mn{cons}(\vp) \not\models \vp$. Our aim is to construct
  $\omega$-saturated interpretations $\Imc^-$ and $\Imc^+$ such that
  $\Imc^- \not\models \vp$, $\Imc^+ \models \vp$, and for all $d_{1}\in
  \Delta^{\mathcal{I}_{1}}$ there exists $ d_{2}\in
  \Delta^{\mathcal{I}_{2}}$ with
  $(\mathcal{I}_{1},d_{1})\equiv_{\mathcal{L}}
  (\mathcal{I}_{2},d_{2})$ and vice versa. By
  Theorem~\ref{equivalence}, this implies $\Imc^- \sim^g_\Lmc \Imc^+$,
  in contradiction to $\vp$ being invariant under $\sim^g_\Lmc$.  For
  each \Lmc-concept inclusion $C \sqsubseteq D \notin \mn{cons}(\vp)$,
  take a model $\Imc_{C \not\sqsubseteq D}$ of $\vp$ that refutes $C
  \sqsubseteq D$. Then $\Imc^+$ is defined as the disjoint union of all $\Imc_{C
    \not\sqsubseteq D}$ and $\Imc^-$ is defined as the disjoint union of $\Imc^+$
  with a model of $\mn{cons}(\vp) \cup \{\neg \vp \}$. It follows from
  invariance of $\vp$ under disjoint unions that
  $\Imc^{-}\not\models\vp$ and $\Imc^{+}\models \vp$. Moreover,
  $\Imc^{-}$ and $\Imc^{+}$ satisfy the same $\Lmc$-concept
  inclusions. Using the condition that $\Lmc\in {\sf ExpDL}$, one can now show that
  $\omega$-saturated interpretations
  that are elementary equivalent to $\Imc^{+}$ and $\Imc^{-}$ are as required.
\end{proof}
In a modal logic context, disjoint unions have first been used to
characterize global consequence in \cite{RijStu}. We
exploit the purely model-theoretic characterizations given in
Theorems~\ref{global} and~\ref{thm:withdisjunion} to obtain an easy,
worst-case optimal algorithm deciding whether a Boolean TBox is equivalent to a TBox.
\begin{theorem}\label{Alg:Boolean0}
  Let $\mathcal{L}\in {\sf ExpDL}$ not contain nominals.  Then it is
  \ExpTime-complete to decide whether a Boolean $\Lmc$-TBox is
  invariant under disjoint unions (equivalently, whether it is
  equivalent to an $\Lmc$-TBox).
\end{theorem}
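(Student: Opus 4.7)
By Theorem~\ref{thm:withdisjunion}, a Boolean $\mathcal{L}$-TBox $\Phi$ is equivalent to some $\mathcal{L}$-TBox iff it is both invariant under $\sim_\mathcal{L}^g$ and invariant under disjoint unions; since every Boolean $\mathcal{L}$-TBox is automatically $\sim_\mathcal{L}^g$-invariant by Theorem~\ref{global}, the parenthetical equivalence in the statement is immediate. So the plan is to decide invariance under disjoint unions in \ExpTime{} and to give a matching lower bound.

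For the upper bound, write $\Phi$ as a Boolean combination $f(\phi_1,\ldots,\phi_n)$ of its $n\le|\Phi|$ atomic CIs $\phi_i$. The linchpin observation is that for nominal-free $\mathcal{L}\in{\sf ExpDL}$ every atomic CI is \emph{strongly} invariant under disjoint unions: $\Imc_1\uplus\Imc_2\models C\sqsubseteq D$ iff $\Imc_i\models C\sqsubseteq D$ for both $i$. This holds because concepts in $\mathcal{L}$ are bisimulation-invariant (Theorem~\ref{equivalence}) and the canonical inclusions $\Imc_i\hookrightarrow\Imc_1\uplus\Imc_2$ are pointwise bisimulations (no new roles, no nominals to pin elements), whence $C^{\Imc_1\uplus\Imc_2}=C^{\Imc_1}\cup C^{\Imc_2}$. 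Assign to each interpretation $\Imc$ its \emph{type} $t(\Imc)=\{i\mid \Imc\models\phi_i\}$. Then $t(\Imc_1\uplus\Imc_2)=t(\Imc_1)\cap t(\Imc_2)$, so $\Phi$ is invariant under (binary, hence arbitrary) disjoint unions iff
$$f(S_1\cap S_2)=f(S_1)\wedge f(S_2)\quad\text{for all realizable types }S_1,S_2\subseteq\{1,\ldots,n\}.$$
The algorithm enumerates all $2^n$ candidate types $S$, tests realizability by deciding satisfiability of the Boolean $\mathcal{L}$-TBox $\bigwedge_{i\in S}\phi_i\wedge\bigwedge_{i\notin S}\neg\phi_i$ — which is in \ExpTime{} for every nominal-free $\mathcal{L}\in{\sf ExpDL}$ — collects the realizable set $R$, and then verifies the displayed homomorphism condition on all pairs from $R$. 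Total work is in \ExpTime.

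For \ExpTime-hardness, reduce from (un)satisfiability of a concept $C_0$ w.r.t.\ an $\ALC$-TBox $\Tmc_0$. With $A_0$ a fresh concept name, take
$$\Phi\;=\;\Tmc_0\;\wedge\;\bigl((\top\sqsubseteq\neg C_0)\vee(\top\sqsubseteq A_0)\vee(\top\sqsubseteq\neg A_0)\bigr).$$
If $C_0$ is unsatisfiable w.r.t.\ $\Tmc_0$, then $\Phi\equiv\Tmc_0$ is a plain $\mathcal{L}$-TBox, hence trivially invariant under disjoint unions. Otherwise, pick any $\Imc\models\Tmc_0$ with $C_0^\Imc\neq\emptyset$ and expand $A_0$ once as $\Delta^\Imc$ and once as $\emptyset$, obtaining models $\Imc_1,\Imc_2$ of $\Phi$ on disjoint isomorphic copies; their disjoint union still satisfies $\Tmc_0$ (by the linchpin observation applied to $\Tmc_0$) but refutes all three right-hand disjuncts, witnessing non-invariance. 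The main point that needs careful verification is the linchpin observation itself — and this is precisely where the absence of nominals is essential, since a nominal $\{a\}$ would pick out a singleton in only one component of $\Imc_1\uplus\Imc_2$ and thereby break the strong invariance of atomic CIs.
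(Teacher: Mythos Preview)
Your proof is correct, but the upper bound takes a genuinely different route from the paper's. The paper reduces to a \emph{single} validity test of a Boolean $\mathcal{L}$-TBox via relativization: it introduces fresh concept names $A_1,A_2$ together with axioms $\chi$ forcing any model to split into two disjoint, role-disconnected, nonempty parts, and then checks validity of $\chi \rightarrow (\varphi_{A_1}\wedge\varphi_{A_2}\leftrightarrow\varphi)$, where $\varphi_{A_i}$ is the relativization of $\varphi$ to $A_i$. Your approach instead enumerates the exponentially many ``CI-types'' $S\subseteq\{1,\dots,n\}$, tests each for realizability by a separate satisfiability call, and then checks the propositional closure condition $f(S_1\cap S_2)=f(S_1)\wedge f(S_2)$ over realizable pairs. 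Both are sound and both land in \ExpTime; the paper's relativization is slicker (one oracle call, and it reuses a standard modal-logic trick), while your method makes the combinatorics of disjoint unions completely explicit via the identity $t(\Imc_1\uplus\Imc_2)=t(\Imc_1)\cap t(\Imc_2)$. For the lower bound the paper merely says ``mutual reduction with unsatisfiability'' without spelling out a construction; your concrete reduction using the fresh $A_0$ and the disjunct $(\top\sqsubseteq A_0)\vee(\top\sqsubseteq\neg A_0)$ is a clean instantiation of that idea.
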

\begin{proof} (sketch) The proof is by mutual reduction with the
  unsatisfiability problem for Boolean $\Lmc$-TBoxes, which is
  \ExpTime-complete in all cases \cite{Baader-et-al-03b}. We focus on the upper bound.  Let
  $\vp$ be a Boolean $\Lmc$-TBox. For a concept name $A$, denote by $\vp_{A}$ the
  relativization of $\varphi$ to $A$, i.e., a Boolean TBox such that
  any interpretation $\Imc$ is a model of $\vp_{A}$ iff the
  restriction of $\Imc$ to the domain $A^{\Imc}$ is a model of $\vp$. Take
  fresh concept names $A_1,A_2$ and let $\chi$ be the conjunction
  of 
\vspace*{-0.2cm}
$$
A_{1} \sqcap A_{2}\sqsubseteq \bot, \;\; \top \sqsubseteq A_1 \sqcup
A_2, \;\; A_{i} \sqsubseteq \forall r. A_i, \;\;
\neg (A_{i}\sqsubseteq \bot), 
\vspace*{-0.2cm}
$$
for all role names $r$ in $\vp$ and $i \in\{1,2\}$, expressing that
\Imc is partitioned into two disjoint and unconnected parts,
identified by $A_1$ and $A_2$.  Then $\vp$ is invariant under binary
disjoint unions iff the Boolean $\Lmc$-TBox $ \chi \rightarrow
(\vp_{A_{1}}\wedge \vp_{A_{2}} \leftrightarrow \vp) $ is a tautology.
\end{proof}
%
%
%
A further algorithmic application of Theorem~\ref{thm:withdisjunion}
and of other characterizations that we will establish later is based
on the following notion.
\begin{definition}[TBox-rewritability]
  Let $\Lmc_{1},\Lmc_{2}\in {\sf DL}$. A TBox \Tmc is
  \emph{$\Lmc_1$-rewritable} if it is equivalent to some
  $\Lmc_1$-TBox. Then \emph{$\Lmc_{1}$-to-$\Lmc_{2}$
    TBox-rewritability} is the problem to decide whether a given
  $\Lmc_{1}$-TBox is $\Lmc_2$-rewritable.
\end{definition}
If $\Lmc_{1},\Lmc_{2}\in {\sf ExpDL}$ do not contain nominals, then it
follows from Theorem~\ref{thm:withdisjunion} that an $\Lmc_{1}$-TBox
$\Tmc$ is $\Lmc_{2}$-rewritable iff $\Tmc$ it is invariant under
$\sim_{\Lmc_{2}}^{g}$. This provides a way to obtain decision
procedures for TBox-rewritability, which we explore for the first
few steps in this paper:
%
we consider $\ALCI$-to-$\ALC$ rewritability in this section, and
$\ALC$-to-$\mathcal{EL}$ and $\ALCI$-to-DL-Lite rewritability in the
subsequent one. The basis of the algorithms is that a TBox $\Tmc$ is
\emph{not} $\Lmc_{2}$-rewritable iff there are two interpretations
related by $\sim_{\Lmc_{2}}^{g}$ such that one is a model of $\Tmc$,
but the other one is not.
\begin{example} {\em A typical rewriting between $\ALCI$ and $\ALC$
    are range restrictions, which can be expressed by $\exists
    r^{-}.\top \sqsubseteq B$ in $\ALCI$ and rewritten as
    $\top\sqsubseteq \forall r.B$ in $\ALC$. Contrastingly, the
    $\ALCI$-TBox $\Tmc=\{\exists r^{-}.\top\sqcap \exists
    s^{-}.\top\sqsubseteq B\}$ is not invariant under
    $\sim_{\ALC}^{g}$: in Figure~\ref{fig:bisim}, $\Tmc$ is satisfied
    in $\Imc_{2}$, but not in $\Imc_{1}$ (where
    $B^{\Imc_{1}}=B^{\Imc_{2}}=\emptyset$). Thus, $\Tmc$ is not
    equivalent to any $\ALC$-TBox.  }
\end{example}

\begin{figure}[t]
\begin{boxedminipage}{\columnwidth}
\centering
 \includegraphics[height=60pt]{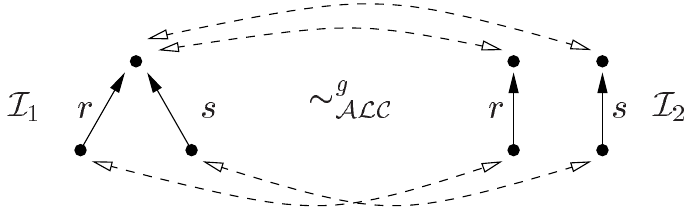}
\end{boxedminipage}
\caption{Globally bisimilar interpretations}
\label{fig:bisim}
\end{figure}

The following result is proved by a non-trivial 
refinement of the method of type
elimination known from complexity proofs in modal and description logic. 
We leave a matching lower complexity bound as an open
problem for now.
\begin{theorem}\label{ALCItoALC}
$\ALCI$-to-$\ALC$ TBox rewritability is decidable in 2-\ExpTime.
\end{theorem}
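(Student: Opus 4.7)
The plan is to combine Theorem~\ref{thm:withdisjunion} with a refined type-elimination procedure. Since every $\ALCI$-TBox is trivially invariant under disjoint unions (role edges never cross between components, so every $\ALCI$-concept distributes over disjoint union), Theorem~\ref{thm:withdisjunion} reduces the task to deciding whether $\Tmc$ is invariant under $\sim_{\ALC}^{g}$. Equivalently, we must decide whether there exist $\omega$-saturated interpretations $\Imc^{+} \models \Tmc$ and $\Imc^{-} \not\models \Tmc$ with $\Imc^{+} \sim_{\ALC}^{g} \Imc^{-}$; by Theorem~\ref{equivalence}(2), the latter means that every element on one side has an $\ALC$-equivalent partner on the other.

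The core step is type elimination over \emph{double-types}. Let $\mn{cl}$ be the closure of $\mn{sub}(\Tmc) \cup \{N_{\Tmc}\}$ under single negation, where $N_{\Tmc} = \bigsqcup_{C \sqsubseteq D \in \Tmc}(C \sqcap \neg D)$ witnesses violation of $\Tmc$. A \emph{double-type} is a pair $\tau = (t^{+}, t^{-})$ of propositionally consistent $\mn{cl}$-types such that $N_{\Tmc} \notin t^{+}$ and $t^{+}, t^{-}$ coincide on every $\ALC$-concept in $\mn{cl}$; intuitively, $t^{+}$ is realized at an element of $\Imc^{+}$ and $t^{-}$ at an $\ALC$-bisimilar partner in $\Imc^{-}$. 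Elimination is driven by a \emph{cross-matching} rule for forward existentials: every $\exists r.C \in t^{+}$ with $r \in \NR$ must be matched by a surviving double-type $(u^{+}, u^{-})$ whose $u^{+}$ is an $r$-successor of $t^{+}$ with $C \in u^{+}$ and whose $u^{-}$ is an $r$-successor of $t^{-}$, and symmetrically for $t^{-}$. Inverse-role existentials $\exists r^{-}.C$ need only be realized on their respective side, since they are invisible to $\ALC$-bisimulation.

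The main obstacle is that satisfying the inverse-role demands independently on the two sides can destroy the forward $\ALC$-bisimulation: choosing an $r$-predecessor on the $+$-side may force new $+$-types whose cross-matched $-$-partners are no longer available among the surviving double-types. This is exactly where the refinement of standard type elimination is needed: rather than eliminating individual double-types, one eliminates over \emph{sets} of surviving double-types together with a record of the inverse-successor structures that are legally realizable on each side; the admissibility of such a set is a non-local condition involving both forward cross-matching and side-local inverse realizability. Since there are $2^{2^{O(|\Tmc|)}}$ candidate sets and each elimination step runs in exponential time, one obtains a $2$-\ExpTime procedure. Completeness follows by unravelling a stable admissible set that contains some double-type with $N_{\Tmc} \in t^{-}$ into a pair of forest-shaped $\omega$-saturated interpretations $(\Imc^{+}, \Imc^{-})$ witnessing non-invariance of $\Tmc$ under $\sim_{\ALC}^{g}$; soundness is the easy direction, namely reading off an admissible set of double-types from any such witness and verifying that it survives the elimination.
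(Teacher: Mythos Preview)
Your high-level plan is right: by Theorem~\ref{thm:withdisjunion} the question reduces to invariance of $\Tmc$ under $\sim_{\ALC}^{g}$, and the algorithmic handle is a refined type-elimination. But the concrete data structure you propose is not adequate, and the patch you sketch (``eliminate over sets of double-types together with a record of the inverse-successor structures'') is not specific enough to carry the argument.

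The problem with pairs $(t^+,t^-)$ is this. Fix an element $d$ on the refuting side and look at the set of \emph{all} $\ALCI$-types realised by its bisimulation partners on the model side. When you move from $d$ along a forward role $r$ to a successor $e$, \emph{every} partner of $d$ must produce an $r$-successor bisimilar to $e$; hence the set of partner-types of $e$ is constrained simultaneously by all partner-types of $d$. A single pair $(t^+,t^-)$ forgets this coupling, and your cross-matching rule only enforces the constraint coming from one partner at a time. Conversely, to satisfy an inverse existential $\exists r^-.C$ in some $t\in S$ on the model side you must add a predecessor \emph{inside a model of~$\Tmc$}; that predecessor then acquires forward $r$-successors which re-enter the bisimulation game, and you have to know that it can be linked via $r$ to something bisimilar to $d$---again a condition on the whole set $S$, not on one pair.

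The paper's solution is to make the asymmetry explicit in the data structure: the basic object is a pair $(s,S)$ with $s$ a single $\Tmc$-type (the refuting side) and $S\subseteq\mn{tp}(\Tmc)$ a \emph{set} of types (all model-side bisimulation partners of one element). The coherence relation is lifted to $(s,S)\leadsto_r(s',S')$ by requiring $s\leadsto_r s'$ and, for every $t\in S$, some $t'\in S'$ with $t\leadsto_r t'$. Three elimination rules then suffice: (r1) witnesses forward existentials in $s$, (r2) witnesses inverse existentials in $s$ (via a predecessor $(s',S')$ with $(s',S')\leadsto_{r^-}(s,S)$), and (r3) witnesses forward existentials in each $t\in S$. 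Inverse existentials in $S$ are handled outside the elimination: in the model construction one attaches, at each model-side node, fresh tree models of~$\Tmc$ realising the required predecessors; these attachments are not in the domain of the bisimulation, and global bisimilarity is recovered afterwards by taking a disjoint union of the refuting interpretation with the model. The search space is $\{(s,S)\}$, of size $2^{O(|\Tmc|)}\cdot 2^{2^{O(|\Tmc|)}}$, and each rule application is checkable in time polynomial in that size, giving the $2$-\textsc{ExpTime} bound.

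So the missing idea is precisely this asymmetric $(s,S)$ abstraction and the three rules that operate on it; once you have that, neither ``records of inverse-successor structures'' nor elimination over sets of pairs is needed. (A minor point: ``forest-shaped $\omega$-saturated'' interpretations do not arise; the constructed witnesses are finite on the $(s,S)$-side and carry attached tree models of $\Tmc$ on the model side.)
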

Theorem~\ref{thm:withdisjunion} excludes DLs
with nominals since it is not clear how to interpret nominals in a
disjoint union such that they are still singletons. In the following, we
devise a relaxed variant of disjoint unions that respects nominals. For simplicity, 
we only consider DLs with nominals that have inverse roles as well
(our approach can also be made to work otherwise, but becomes
more technical).

A \emph{component} of an interpretation \Imc is a set $D \subseteq
\Delta^\Imc$ that is closed under neighbors, i.e., if $d \in D$ and
$(d,d') \in \bigcup_{r \in \NR} r^\Imc \cup (r^-)^\Imc$,
then $d' \in D$.  A \emph{component interpretation} of \Imc is the
restriction \Jmc of \Imc to some domain $\Delta^\Jmc \subseteq
\Delta^\Imc$ that is a component of \Imc, i.e., $A^{\Jmc}= A^{\Imc}
\cap \Delta^\Jmc$ for all $A\in \NC$, $r^{\Jmc}= r^{\Imc} \cap
(\Delta^{\Jmc} \times \Delta^{\Jmc})$ for all $r\in \NR$, and
$a^{\Jmc}= a^{\Imc}$ for $a\in \NI$ if $a^{\Imc}\in \Delta^\Jmc$;
otherwise, $a^{\Jmc}$ is simply undefined. We denote by ${\sf
  Nom}(\Jmc)$ the set of individual names interpreted by~$\Jmc$.  
Now let $(\Jmc_{i})_{i\in I}$ be a family of component interpretations
such that 
%
%
\begin{itemize}
\item $\bigcup_{i\in I}{\sf Nom}(\Jmc_{i}) = \NI$;
\item ${\sf Nom}(\Jmc_{i}) \cap {\sf Nom}(\Jmc_{j}) = \emptyset$ for all $i\not=j$.
\end{itemize}
Then the \emph{nominal disjoint union} of $(\Jmc_{i})_{i\in I}$,
denoted $\sum^\mn{nom}_{i\in I} \Jmc_{i}$, is the interpretation
obtained by taking the disjoint union of $(\Jmc_{i})_{i\in I}$ and
then interpreting each $a \in \NI$ as $a^{\Jmc_{i}}$ for the
unique $i \in I$ with $a^{\Jmc_{i}}$ defined.

An FO-sentence $\varphi$ is \emph{invariant under nominal disjoint
unions} if the following conditions hold for all families
$(\Imc_i,\Jmc_i)_{i \in I}$ with $\Imc_i$ an interpretation and $\Jmc_i$
a component interpretation of $\Imc_i$, for all $i \in I$:
\begin{itemize}
\setlength{\itemsep}{0pt}

\item[(a)] if $\Imc_i$ is a model of $\varphi$ for all $i \in I$, then
  so is $\sum^\mn{nom}_{i\in I} \Jmc_{i}$;

\item[(b)] if $\sum^\mn{nom}_{i\in I} \Jmc_{i}$ is a model of $\varphi$ and
  $\Imc_{i_{0}} = \Jmc_{i_{0}}$ for some $i_{0}\in I$, then
  $\Imc_{i_{0}}$ is a model of $\varphi$.

\end{itemize}
Note that, in Condition~(b), $\Imc_{i_{0}} = \Jmc_{i_{0}}$ implies
that $\mn{Nom}(\Jmc_{i_0})$ is the set of all individual names, but
not necessarily that $\sum^\mn{nom}_{i\in I} \Jmc_{i} =
\Jmc_{i_0}$. We can now characterize TBoxes formulated in expressive
DLs with nominals.
\begin{theorem}
\label{thm:charnom}
Let $\Lmc\in \{\mathcal{ALCIO},\mathcal{ALCQIO}\}$ and $\varphi$ be an FO-sentence.
Then the following conditions are equivalent:
\begin{enumerate}
\setlength{\itemsep}{0pt}

\item there exists an $\Lmc$-TBox $\Tmc$ such that $\Tmc \equiv \varphi$;

\item $\varphi$ is invariant under $\sim_{\Lmc}^{g}$ and nominal disjoint unions.

\end{enumerate}
\end{theorem}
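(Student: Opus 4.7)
Direction $1 \Rightarrow 2$. Invariance under $\sim^{g}_{\Lmc}$ will follow immediately from Theorem~\ref{equivalence}(1), exactly as in the nominal-free case of Theorem~\ref{thm:withdisjunion}. For invariance under nominal disjoint unions, I would first establish a \emph{locality lemma}: for any nominal disjoint union $\Kmc=\sum^{\mn{nom}}_{i\in I}\Jmc_{i}$ with $\Jmc_{i}$ a component of $\Imc_{i}$, every $\Lmc$-concept $C$, and every $d\in\Delta^{\Jmc_{i}}$, one has $d\in C^{\Kmc}$ iff $d\in C^{\Imc_{i}}$. This is proved by induction on $C$; the case $C=\{a\}$ uses that every nominal lies in exactly one $\Jmc_{j}$ and that, by the definition of component interpretation, $a^{\Imc_{i}}\in\Delta^{\Jmc_{i}}$ forces $a\in\mn{Nom}(\Jmc_{i})$; the case $C=\qnrgeq n r E$ uses closure of components under both $r$- and $r^{-}$-neighbors, so that the $r$-successors of $d$ in $\Kmc$, $\Jmc_{i}$, and $\Imc_{i}$ all coincide. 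Conditions (a) and (b) for CI-preservation then follow at once.

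Direction $2 \Rightarrow 1$. The plan is to adapt the proof of Theorem~\ref{thm:withdisjunion}, replacing disjoint unions by nominal disjoint unions. Let $\mn{cons}(\vp)=\{C\sqsubseteq D\mid\vp\models C\sqsubseteq D\}$; it suffices to show $\mn{cons}(\vp)\models\vp$, since compactness then yields an equivalent finite $\Lmc$-TBox. Assume for contradiction $\mn{cons}(\vp)\not\models\vp$, fix a sufficiently rich model $\Nmc$ of $\mn{cons}(\vp)\cup\{\neg\vp\}$, and for each $\Lmc$-CI $C\sqsubseteq D\notin\mn{cons}(\vp)$ fix a model $\Imc_{C\not\sqsubseteq D}$ of $\vp$ refuting it. I would then build $\omega$-saturated interpretations $\Imc^{+}$ and $\Imc^{-}$ with $\Imc^{+}\models\vp$, $\Imc^{-}\not\models\vp$, and $\Imc^{+}\sim^{g}_{\Lmc}\Imc^{-}$, contradicting invariance of $\vp$ under $\sim^{g}_{\Lmc}$. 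Both are nominal disjoint unions sharing a large collection of nominal-free components drawn from the $\Imc_{C\not\sqsubseteq D}$, chosen to refute every $\Lmc$-CI outside $\mn{cons}(\vp)$ and to realize every non-nominal $\Lmc$-type realized in $\Nmc$. The nominal-bearing component of $\Imc^{-}$ is $\Nmc$ itself with $\Jmc_{i_{0}}=\Imc_{i_{0}}=\Nmc$, so the contrapositive of condition~(b) yields $\Imc^{-}\not\models\vp$; the nominal-bearing component of $\Imc^{+}$ comes from an auxiliary model $\Imc^{0}$ of $\vp$ whose nominals satisfy $a^{\Imc^{0}}\equiv_{\Lmc}a^{\Nmc}$ for each $a\in\NI$, so condition~(a) yields $\Imc^{+}\models\vp$. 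After passing to $\omega$-saturated elementarily equivalent extensions, Theorem~\ref{equivalence}(2) together with the shared $\Lmc$-type-realization of $\Imc^{+}$ and $\Imc^{-}$ then delivers $\Imc^{+}\sim^{g}_{\Lmc}\Imc^{-}$.

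The main obstacle, which has no counterpart in the nominal-free proof, is producing the auxiliary model $\Imc^{0}$ (and, closely related, controlling type realizations so that $\Imc^{+}$ and $\Imc^{-}$ agree on all $\Lmc$-CIs). By compactness, existence of $\Imc^{0}$ reduces to consistency of $\vp\cup\{C(a)\mid a\in\NI,\,\Nmc\models C(a)\}$; a finite inconsistent fragment would yield concepts $G_{a}$ with $\Nmc\models G_{a}(a)$ for $a$ in some finite $A\subseteq\NI$ and $\vp\models\neg\bigwedge_{a\in A}G_{a}(a)$. To refute this I would exploit invariance of $\vp$ under nominal disjoint unions: for each $a\in A$, the single-nominal theory $\vp\cup\{G_{a}(a)\}$ is consistent (otherwise $\{a\}\sqsubseteq\neg G_{a}\in\mn{cons}(\vp)$, contradicting $\Nmc\models G_{a}(a)$), so witness models $\Imc_{a}\models\vp\cup\{G_{a}(a)\}$ exist; I would then assemble them via a nominal disjoint union in which nominal $a$ is sourced from $\Imc_{a}$ and invoke~(a) to conclude the result still satisfies $\vp$ while jointly realizing all $G_{a}(a)$. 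The delicate part will be handling witnesses whose $a$-component contains further nominals from $A$, forcing an iterative reassignment of nominal components across witnesses---again justified by (a)---until a nominal partition with the desired types at each $a\in A$ is achieved.
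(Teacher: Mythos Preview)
Your overall strategy is correct and mirrors the paper's proof: the locality lemma for $1\Rightarrow 2$ is right, and for $2\Rightarrow 1$ the use of condition~(b) with $\Jmc_{i_0}=\Imc_{i_0}=\Nmc$ to obtain $\Imc^{-}\not\models\vp$, and of condition~(a) to obtain $\Imc^{+}\models\vp$, is exactly what the paper does.

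The gap is in your construction of $\Imc^{0}$. The ``iterative reassignment of nominal components'' is not well-defined and does not obviously terminate. Concretely: when the component of $a^{\Imc_a}$ in $\Imc_a$ happens to contain some other $b\in A$, you cannot reassign $b$ to a different witness, because a component's nominal set is determined by connectivity, not by fiat; and there is no reason $\Imc_a$ should satisfy $G_b(b)$. Nor can you re-choose $\Imc_a$ to disconnect $a$ from $b$: the finite constraint $G_a(a)$ does not force this, and ``$a$ is in no component with $b$'' is an infinite conjunction of $\Lmc$-concepts, so it cannot simply be added to the finite fragment.

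The paper sidesteps this by never building a single $\Imc^{0}$. Instead it first partitions the nominals by the connected-component relation in $\Nmc$, choosing one representative $a$ per class, and then for each such $a$ takes a model $\Imc_a\models\vp$ realizing the \emph{full} $\Lmc$-type $t^{\Nmc}_{\Lmc}(a^{\Nmc})$ at $a^{\Imc_a}$ (consistency is immediate: any finite conjunction $D$ of this type satisfies $\{a\}\sqsubseteq\neg D\notin\mn{cons}(\vp)$). The point is that, since $\Lmc$ has inverse roles and nominals, the full type of $a^{\Nmc}$ already records exactly which nominals are reachable from $a$ in $\Nmc$ (via the concepts $\exists r_1.\cdots\exists r_n.\{b\}$ and $\forall r_1.\cdots\forall r_n.\neg\{b\}$). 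Consequently the component $\Jmc_a$ of $a^{\Imc_a}$ contains precisely the nominals in $a$'s $\Nmc$-class, and every such $b$ automatically gets $t^{\Imc_a}_{\Lmc}(b^{\Imc_a})=t^{\Nmc}_{\Lmc}(b^{\Nmc})$ (witnessed by $\exists r_1.\cdots\exists r_n.(\{b\}\sqcap C)$ and its negation in $a$'s type). The $\Jmc_a$ therefore partition $\NI$ automatically, and $\Imc^{+}$ is their nominal disjoint union together with the nominal-free components; no iteration is needed. This is the missing idea in your argument.
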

\begin{example}
{\em Condition~(a) of nominal disjoint unions can be used to
show that $\varphi=A(a) \vee A(b)$ cannot be rewritten as an
$\mathcal{ALCQIO}$-TBox. To see this, observe that $\Imc_{1}$ and $\Imc_{2}$ of
Figure~\ref{fig:4} satisfy $\varphi$ and $\sum_{i=1,2}^{\mn{nom}}\Jmc_{i}$ does not
satisfy $\varphi$. 
}
\end{example}

\begin{figure}[t]
\begin{center}
 \includegraphics[width=\columnwidth]{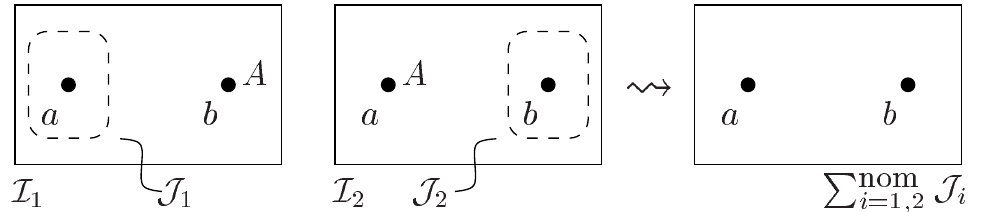}
\end{center}
\vspace*{-.5cm}
\caption{Nominal disjoint union}
\label{fig:4}
\end{figure}


Similar to the proof of Theorem~\ref{Alg:Boolean0}, one can use
relativization to reduce the problem of checking invariance under
nominal disjoint unions of Boolean $\Lmc$-TBoxes to the
unsatisfiability problem for Boolean $\Lmc$-TBoxes (which is \ExpTime-complete
for $\ALCIO$ and co\NExpTime-complete for $\mathcal{ALCQIO}$ \cite{Baader-et-al-03b}):
\begin{theorem}\label{Boolnom}
  It is \ExpTime-complete to decide whether a Boolean
  $\mathcal{ALCIO}$-TBox is invariant under nominal disjoint unions
  (equivalently, whether it is equivalent to an $\ALCIO$-TBox).  The
  problem is co\NExpTime-complete for Boolean
  $\mathcal{ALCQIO}$-TBoxes.
\end{theorem}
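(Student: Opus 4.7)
The plan is to adapt the relativization-based reduction used in the proof of Theorem~\ref{Alg:Boolean0} to the more subtle nominal setting, reducing invariance under nominal disjoint unions to the unsatisfiability problem for Boolean $\Lmc$-TBoxes. By Theorem~\ref{thm:charnom} this immediately yields the upper bounds, since Boolean TBox unsatisfiability is \ExpTime-complete for $\mathcal{ALCIO}$ and co\NExpTime-complete for $\mathcal{ALCQIO}$. The matching lower bounds are obtained by the same mutual-reduction pattern used in Theorem~\ref{Alg:Boolean0}, reducing Boolean TBox unsatisfiability to invariance checking; I focus on the upper bound below.

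Invariance under nominal disjoint unions is the conjunction of conditions~(a) and~(b) from the definition, each a universally quantified statement whose binary version suffices by a standard compactness-style argument. The negation of each condition becomes a satisfiability problem, so the whole test reduces to deciding satisfiability of a disjunction of two Boolean $\Lmc$-TBoxes. For condition~(a), I encode two candidate interpretations $\Imc_1,\Imc_2$ and their respective components $\Jmc_1,\Jmc_2$ inside a single interpretation. Fresh concept names $A_1,A_2$ form a disjoint, role-disconnected partition representing $\Imc_1,\Imc_2$, exactly as in the proof of Theorem~\ref{Alg:Boolean0}, and fresh concept names $J_1,J_2$ satisfying $J_i \sqsubseteq A_i$ together with $J_i \sqsubseteq \forall r.J_i \sqcap \forall r^-.J_i$ for every role name $r$ occurring in $\vp$ mark the two components inside the $A_i$-parts. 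Letting $\vp^{(i)}$ denote the relativization of $\vp$ to $A_i$ with every original nominal $a$ replaced by a fresh nominal $a_i$, condition~(a) fails iff the Boolean $\Lmc$-TBox $\chi_{(a)} \wedge \vp^{(1)} \wedge \vp^{(2)} \wedge \neg\vp_{J_1 \sqcup J_2}$ is satisfiable, where $\chi_{(a)}$ collects the closure constraints above together with the nominal-handling fragment described next.

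The nominal-handling is the crux of the construction. Because the nominal $\{a\}$ is a rigid singleton of the single encoded interpretation, the original $a$ cannot serve as the interpretation of $a$ inside both the $A_1$- and $A_2$-part simultaneously. Instead, for each nominal $a$ occurring in $\vp$ I introduce fresh nominals $a_1,a_2$ with $\{a_i\} \sqsubseteq A_i$ to serve as the interpretation of $a$ in $\Imc_i$, and force the original $a$ to coincide with that $a_i$ for which $a_i$ lies in $J_i$, by adding $\{a\} \sqsubseteq J_1 \sqcup J_2$ together with $\{a\} \sqcap J_i \equiv \{a_i\} \sqcap J_i$ for $i \in \{1,2\}$. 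Under these constraints the substructure induced by $J_1 \cup J_2$ is, up to isomorphism, exactly $\Jmc_1 \sqcup^{\mn{nom}} \Jmc_2$ with its canonical nominal interpretation, so $\vp_{J_1 \sqcup J_2}$ using the original nominals correctly expresses $\Jmc_1 \sqcup^{\mn{nom}} \Jmc_2 \models \vp$.

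For condition~(b) the encoding is simpler: $\Jmc_{i_0} = \Imc_{i_0}$ forces all nominals onto one side, so I use the same partition $A_1,A_2$ with $\{a\} \sqsubseteq A_1$ for every nominal $a$ of $\vp$, let the whole interpretation play the role of the nominal disjoint union, and let $A_1$ play the role of $\Imc_1$; then~(b) fails iff $\chi_{(b)} \wedge \vp \wedge \neg\vp_{A_1}$ is satisfiable. The main obstacle I anticipate is verifying soundness and completeness of the (a)-encoding: every model of $\chi_{(a)}$ must give rise to a genuine quadruple $(\Imc_1,\Imc_2,\Jmc_1,\Jmc_2)$ with ${\sf Nom}(\Jmc_1)$ and ${\sf Nom}(\Jmc_2)$ partitioning $\NI$, and conversely every such quadruple must be realizable as a model of $\chi_{(a)}$ with appropriate choices of the $a_i$ and of the $J_i$-membership of each original nominal.
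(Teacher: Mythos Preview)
Your high-level strategy — a relativization-based reduction to Boolean $\Lmc$-TBox (un)satisfiability — is the paper's. Your treatment of condition~(b) is essentially the paper's as well; the paper also reduces~(b) to families of cardinality two. Your fresh-nominal device for condition~(a), renaming each nominal $a$ to $a_i$ inside the $i$-th relativized copy, is a clean variant that the paper does not use: the paper instead fixes a partition $\Xi=\{X_1,\ldots,X_n\}$ of the set $X$ of nominals in $\vp$ in advance and forces the \emph{original} nominals into their blocks via $\{a\}\sqsubseteq A_i$ for $a\in X_i$, relativizing $\vp$ to each $B_i\supseteq A_i$ without any renaming.

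There is, however, a genuine gap in your treatment of condition~(a). Your claim that the binary version suffices ``by a standard compactness-style argument'' is not justified, and nominal disjoint unions do not decompose as ordinary disjoint unions do. Given a family $(\Imc_i,\Jmc_i)_{i\in I}$ with $|I|\geq 3$ and a nontrivial split of $\NI$ among the~$\Jmc_i$, one cannot in general merge two of the $\Jmc_i$ into a component of a single model of~$\vp$: enlarging a component of some $\Imc_j$ to absorb an additional block of nominals may drag in further nominals that lie in the same connected component of~$\Imc_j$, destroying the required disjointness of the ${\sf Nom}(\Jmc_i)$. The paper explicitly does \emph{not} reduce~(a) to binary; it argues by induction that it suffices to consider families in which ${\sf Nom}(\Jmc_i)\cap X=\emptyset$ for at most one~$i$ (hence $|I|\leq |X|+1$), and then enumerates all partitions $\Xi$ of $X$ (one block allowed empty), producing one formula $\vp_\Xi^1$ per partition. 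Your fresh-nominal encoding can be repaired in the same spirit: use $|X|+1$ parts $A_1,\ldots,A_{|X|+1}$ and fresh copies $a_1,\ldots,a_{|X|+1}$ of each nominal~$a$, allowing some $J_i$ to be empty; this yields a single polynomial-size formula and avoids the partition enumeration, but the two-component version you wrote down is not enough.
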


\section{Characterizing TBoxes, Lightweight DLs}
\label{LightweightDLs}

We characterize TBoxes formulated in $\mathcal{EL}$ and members of the DL-Lite families.
We start with an analogue of Theorem~\ref{local1}: since the
considered DLs are `Horn' in nature, we add products to the closure
properties identified in Section~\ref{sect:tboxdef} and refine our
proofs accordingly.
\begin{theorem}\label{DLLite-TBox}
Let $\Lmc\in \{\EL,\text{DL-Lite}_\mn{horn}\}$ and let $\varphi$ be an FO-sentence.
The following conditions are equivalent:
\begin{enumerate}
\setlength{\itemsep}{0pt}

\item $\varphi$ is equivalent to an $\Lmc$-TBox;
\item $\varphi$ is invariant under $\sim_{\Lmc}^{g}$ and disjoint unions, and preserved under
 products.
\end{enumerate}
\end{theorem}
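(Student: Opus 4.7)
The plan is to follow the sketch of Theorem~\ref{thm:withdisjunion}, substituting preservation under products for the role played there by negation (unavailable in the Horn DLs $\mathcal{EL}$ and DL-Lite$_\mn{horn}$). Direction $1 \Rightarrow 2$ is routine: invariance under $\sim_\Lmc^g$ follows from Theorem~\ref{equivalence}(1), since $\sim_\Lmc^g$-related points agree on all $\Lmc$-concepts; invariance under disjoint unions holds because $\Lmc$-concepts are evaluated locally inside each component; and preservation under products is the Horn identity $C^{\prod \Imc_i} = \{\bar d : \bar d_j \in C^{\Imc_j} \text{ for all } j\}$, which lifts from $\Lmc$-concepts to $\Lmc$-CIs.

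For $2 \Rightarrow 1$, let $\mn{cons}(\vp)$ denote the set of $\Lmc$-CIs implied by $\vp$; by compactness it suffices to show $\mn{cons}(\vp) \models \vp$, so assume the contrary. The goal is to construct two $\omega$-saturated interpretations $\Imc^+$ and $\Jmc^-$ realizing the same set of complete $\Lmc$-types, with $\Imc^+ \models \vp$ and $\Jmc^- \models \mn{cons}(\vp) \cup \{\neg \vp\}$: by Theorem~\ref{equivalence}(2), equality of realized types on $\omega$-saturated models yields $\Imc^+ \sim_\Lmc^g \Jmc^-$, contradicting invariance of $\vp$ under $\sim_\Lmc^g$.

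The heart of the argument is a \emph{product lemma}: every complete $\Lmc$-type $\tau$ realized in some model of $\mn{cons}(\vp)$ is realized \emph{exactly} in some model of $\vp$; write $T^*$ for the set of such types. Given such a $\tau$ with witness $d \in \Mmc \models \mn{cons}(\vp)$ and any $\Lmc$-concept $C \notin \tau$, a compactness argument shows that $\vp \cup \tau(x) \cup \{\neg C(x)\}$ is consistent---otherwise some finite $\tau_0 = \{C_1,\ldots,C_n\} \subseteq \tau$ yields a CI $C_1 \sqcap \cdots \sqcap C_n \sqsubseteq C \in \mn{cons}(\vp)$, forcing $d \in C$ and contradicting $C \notin \tau$. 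Choose $\Imc_{\tau,C} \models \vp$ with a point $e_{\tau,C}$ satisfying $\tau$ but not $C$, and form the infinite product $\Imc_\tau = \prod_{C \notin \tau} \Imc_{\tau,C}$; preservation under products gives $\Imc_\tau \models \vp$, and by the Horn semantics of products the tuple $(e_{\tau,C})_{C \notin \tau}$ has complete $\Lmc$-type exactly $\tau$, since every $C' \in \tau$ holds in each factor while each $C \notin \tau$ fails in its own factor.

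Set $\Phi = \{\exists x \, \sigma(x) : \sigma \text{ is a finite sub-type of some } \tau \in T^*\}$. The theory $\vp \cup \Phi$ is consistent: every finite fragment is jointly realized in a finite disjoint union of $\Imc_\tau$'s, which remains a model of $\vp$ by invariance under disjoint unions. Hence any $\omega$-saturated model $\Imc^+$ of $\vp \cup \Phi$ realizes exactly $T^*$---all of $T^*$ by $\omega$-saturation and the axioms of $\Phi$, and no more because $\Imc^+ \models \vp$. Symmetrically, $\mn{cons}(\vp) \cup \{\neg \vp\} \cup \Phi$ is consistent---the disjoint union of $\Imc^+$ with any model of $\mn{cons}(\vp) \cup \{\neg \vp\}$ witnesses it, invariance under disjoint unions propagating $\neg \vp$---so any $\omega$-saturated model $\Jmc^-$ realizes all of $T^*$ by saturation and \emph{only} types in $T^*$ by the product lemma, giving the required equality of realized types. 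The main obstacle is the product lemma, which genuinely requires preservation under products: without this hypothesis, sentences such as $\top \sqsubseteq A \sqcup B$---invariant under $\sim_\Lmc^g$ and disjoint unions but not $\Lmc$-TBox rewritable---show that exact realization of $\mn{cons}(\vp)$-types in models of $\vp$ can fail, and the whole scheme collapses.
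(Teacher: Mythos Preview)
Your proof is correct and close in spirit to the paper's, but you organize the argument around complete $\Lmc$-types rather than around \emph{disjunctive $\Lmc$-CIs} (inclusions $C \sqsubseteq D_1 \sqcup \cdots \sqcup D_n$ with $C,D_i$ $\Lmc$-concepts). The paper shows that if $\mn{cons}(\vp) \not\models \vp$, one can build $\Imc^-$ and $\Imc^+$ satisfying exactly the same disjunctive $\Lmc$-CIs: $\Imc^-$ is a disjoint union of countermodels (within $\mn{cons}(\vp)$) to each disjunctive CI not entailed by $\mn{cons}(\vp)$, together with a model of $\mn{cons}(\vp)\cup\{\neg\vp\}$; $\Imc^+$ is a disjoint union of \emph{finite} products $\prod_{1\le j\le n}\Imc_{C\not\sqsubseteq D_j}$, one per violated disjunctive CI. It then derives $\Imc^-\equiv_\Lmc^g\Imc^+$ from agreement on disjunctive CIs via $\omega$-saturation. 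Your product lemma short-circuits the disjunctive-CI layer: you take an \emph{infinite} product indexed by all $C\notin\tau$ to realize each $\mn{cons}(\vp)$-type $\tau$ exactly inside a model of $\vp$, and then match realized types directly. Both routes rest on the same idea---products compensate for the missing disjunction---but yours is a bit more direct, while the paper's keeps the products finite and isolates the syntactic invariant (disjunctive CIs) that actually drives the $\equiv_\Lmc^g$ step. One minor wording slip: that $\Jmc^-$ realizes only types in $T^*$ follows immediately from $\Jmc^-\models\mn{cons}(\vp)$ and the definition of $T^*$, not from the product lemma.
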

\begin{proof} (sketch) In principle, we follow the strategy of the
  proof of Theorem~\ref{thm:withdisjunion}. A problem is posed by the
  fact that, unlike in the case of expressive DLs, two
  $\omega$-saturated interpretations $\Imc^-$ and $\Imc^+$ that
  satisfy the same $\Lmc$-CIs need not satisfy $\Imc^- \equiv^g_\Lmc
  \Imc^+$ (e.g.\ when $\Imc^-$ consists of three elements that satisfy
  $A\sqcap \neg B$, and $B\sqcap \neg A$, and $\neg A \sqcap \neg B$, respectively, 
  and $\Imc^+$ consists of two elements
  that satisfy $A\sqcap \neg B$ and $B \sqcap \neg A$, respectively). To deal with
  this, we ensure that $\Imc^-$ and $\Imc^+$ satisfy the same
  \emph{disjunctive \Lmc-CIs}, i.e., CIs of the form $C \sqsubseteq
  D_1 \sqcup \cdots \sqcup D_n$ with $C, D_1,\dots,D_n$ \Lmc-concepts;
  this suffices to prove $\Imc^- \equiv_g \Imc^+$ as required. The
  construction of $\Imc^-$ is essentially as in the proof of
  Theorem~\ref{thm:withdisjunion} while the construction of $\Imc^+$
  uses products to bridge the gap between \Lmc-CIs and disjunctive
  \Lmc-CIs.
\end{proof}
We apply Theorem~\ref{DLLite-TBox} to TBox rewritability, starting
with the $\ALC$-to-$\EL$ case. By Theorems~\ref{thm:withdisjunion}
and~\ref{DLLite-TBox}, an $\ALC$-TBox is equivalent to some $\EL$-TBox
iff it is invariant under $\sim_{\EL}^{g}$ and preserved under binary
products. The following theorem, the proof of which is rather involved,
establishes the complexity of both problems.
\begin{theorem}~\label{ALCdown}
Invariance of $\ALC$-TBoxes under $\sim_{\EL}^{g}$ is \ExpTime-complete.
Preservation of $\ALC$-TBoxes under products is co\NExpTime-complete.
\end{theorem}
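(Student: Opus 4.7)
The plan is to establish four claims: \ExpTime upper and lower bounds for $\sim_{\EL}^{g}$-invariance of $\ALC$-TBoxes and co\NExpTime upper and lower bounds for preservation of $\ALC$-TBoxes under products. Each upper bound rests on a witness characterization followed by a small-model or fixpoint argument; each lower bound reduces from a standard complete problem for expressive DLs. Specifically, $\Tmc$ fails to be $\sim_{\EL}^{g}$-invariant iff there are interpretations $\Imc_{1} \sim_{\EL}^{g} \Imc_{2}$ with $\Imc_{1} \models \Tmc$ and $\Imc_{2} \not\models \Tmc$; and $\Tmc$ fails to be preserved under products iff there are $\Imc_{1}, \Imc_{2} \models \Tmc$, a CI $C \sqsubseteq D \in \Tmc$, and a pair $(\bar d_{1}, \bar d_{2}) \in (C \sqcap \neg D)^{\Imc_{1}\times\Imc_{2}}$.

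For the \ExpTime upper bound on $\sim_{\EL}^{g}$-invariance, I would use a mosaic / type-elimination procedure in the spirit of the proof of Theorem~\ref{ALCItoALC}. The states are pairs $(\tau_{1}, \tau_{2})$ of $\ALC$-types over $\mn{sub}(\Tmc)$ that agree on every concept name, reflecting [AtomR] of $\EL$-simulation in both directions; there are at most exponentially many such pairs. A pair is flagged as \emph{bad} if $\tau_{1}$ is consistent with $\Tmc$ whereas $\tau_{2}$ is inconsistent with it. One then iterates a removal step: discard $(\tau_{1}, \tau_{2})$ whenever some existential successor requirement at $\tau_{1}$ lacks a matching surviving successor pair on the $\tau_{2}$-side, or vice versa. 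This is precisely the greatest-fixpoint encoding of [Forth] of $\sim_{\EL}$ in both directions, and stabilization occurs after exponentially many rounds; non-invariance of $\Tmc$ corresponds to the survival of some bad pair. The \ExpTime lower bound would be obtained by reducing from $\ALC$-TBox unsatisfiability: introduce a flag concept $F$, relativize the input TBox $\Tmc'$ to $F$, and pad the $\neg F$-fragment with a rigid $\EL$-definable skeleton, so that any $\sim_{\EL}^{g}$-equivalent but $\Tmc$-distinguishing pair of models must contain an $F$-model of $\Tmc'$.

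For the co\NExpTime upper bound on product preservation, the type-elimination approach does not directly apply, because truth of a concept in the product depends on the coordinated behaviour of both factors rather than on local types. Instead, I would prove an exponential small-witness property: fix a violated CI $C \sqsubseteq D$; the truth of any subconcept of $C \sqcap \neg D$ at a product pair depends only on the role-successor trees of depth at most $|\mn{sub}(\Tmc)|$ in each factor. Unfolding $\Imc_{1}, \Imc_{2}$ up to this depth and merging $\ALC$-equivalent elements produces witnesses $\Imc_{1}', \Imc_{2}'$ of size exponential in $|\Tmc|$ whose product still refutes the CI. A nondeterministic algorithm guesses these witnesses and the offending pair in \NExpTime and verifies $\Imc_{i}' \models \Tmc$ and $(\bar d_{1}, \bar d_{2}) \in (C \sqcap \neg D)^{\Imc_{1}'\times\Imc_{2}'}$ in time polynomial in the witness size; complementation gives co\NExpTime. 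For the co\NExpTime lower bound, I would reduce from a \NExpTime-complete DL satisfiability problem such as $\ALCQIO$-concept-satisfiability with respect to a TBox, using product pairing to simulate the coordination usually provided by nominals, inverse roles, and counting.

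The main obstacle is the co\NExpTime lower bound: since products lack the naming, counting, and backward-reference facilities of $\ALCQIO$, one must distribute exponentially detailed information across the two factors and arrange the product interaction to enforce the desired global invariants. A secondary challenge is maintaining the asymmetry (model versus non-model of $\Tmc$) throughout the back-and-forth elimination step of the $\sim_{\EL}^{g}$-invariance procedure, since [Forth] must be checked simultaneously in both directions while preserving $\Tmc$-consistency on one side and $\Tmc$-inconsistency on the other. These two points are what makes the proof \emph{rather involved}, as already signalled in the statement of the theorem.
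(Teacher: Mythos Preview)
Your overall decomposition into four claims is right, and the upper bound for products is in the same spirit as the paper's (guess exponentially bounded tree witnesses, verify). However, there are two genuine gaps.

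\textbf{The $\sim_{\EL}^{g}$ upper bound: pairs do not suffice.} Your elimination over pairs $(\tau_1,\tau_2)$ of $\ALC$-types ``agreeing on concept names'' is exactly the approach the paper explicitly flags as \emph{not working}. The reason is structural: $\sim_{\EL}$ is equisimilarity, not bisimilarity. If $d_1 \sim_{\EL} d_2$ and $(d_1,d_1')\in r^{\Imc_1}$, then [Forth] of the simulation $d_1\leq_{\EL}d_2$ yields a $d_2'$ with $d_1'\leq_{\EL}d_2'$, but there is no reason to have $d_2'\leq_{\EL}d_1'$. So the successor pair is \emph{not} equisimilar, and your greatest-fixpoint invariant collapses after one step. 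The paper repairs this via a key lemma: in $\omega$-saturated interpretations, one can pick a $\leq_{\EL}$-maximal $r$-successor $d_1''$ of $d_1$ above $d_1'$ and then find $d_2''$ with $d_1''\sim_{\EL}d_2''$. The price is that the original $d_1'$ is now only $\leq_{\EL}$-below $d_1''$, so one must carry along a \emph{growing sequence} of witnesses. Concretely, the paper runs elimination over tuples $(t,s,s_0,t_1,s_1,\dots,t_l,s_l)$ of length up to $2\cdot\mn{rd}(\Tmc)+3$, with five rules rather than two; the length stabilises because the role depth of the residual obligations in the non-model side decreases. Without this, your procedure is unsound (it will eliminate pairs that are in fact realisable) or, if you weaken the successor condition to one-sided simulation, incomplete.

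\textbf{The co\NExpTime lower bound: wrong source problem.} Reducing from $\ALCQIO$ satisfiability is not what the paper does and is unlikely to work cleanly: products give you no counting, no nominals, and no inverse, so simulating $\ALCQIO$ via ``product pairing'' is a tall order. The paper instead reduces directly from the $2^{n+1}\times 2^{n+1}$ torus tiling problem. Each factor encodes a full binary tree of depth $2(n{+}1)$ whose leaves carry grid positions and tile labels; the product is exploited only for one purpose: if the two factors disagree on the tile at some position, the product leaf at that position satisfies \emph{no} $T_t$, which is detected by a $\mn{defect}$ concept. Two marker names $M,M'$ force the product root to fail the first two disjuncts, so absence of a defect (i.e., a consistent tiling) is the only way for the product to satisfy the concept. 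This is a self-contained \NExpTime-hardness argument using only $\ALC$.

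\textbf{Minor point on the product upper bound.} ``Unfold to depth $n$ and merge $\ALC$-equivalent elements'' is not quite enough: merging in each factor separately need not preserve the product counterexample, and depth alone does not bound size. The paper keeps the tree shape and instead bounds the \emph{out-degree} by a careful selective-successor argument (Conditions~(a)--(c) in the construction of $\Gamma_i$), obtaining out-degree $2^{n^2+n+1}$; then it reformulates the witness as a pair of $n$-initial interpretation trees annotated with semantic $\Tmc$-types, which is what is actually guessed and checked.
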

From Theorems~\ref{ALCdown} and ~\ref{DLLite-TBox} we obtain:
\begin{theorem}
$\ALC$-to-$\EL$ TBox rewritability is in co-\NExpTime.
\end{theorem}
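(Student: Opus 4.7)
The plan is to reduce the rewritability check to two closure property checks that have already been analyzed, and then combine their complexity bounds. Specifically, I would apply Theorem~\ref{DLLite-TBox} in the direction $2 \Rightarrow 1$: an $\ALC$-TBox $\Tmc$ is equivalent to some $\EL$-TBox iff $\Tmc^\sharp$ is invariant under $\sim_{\EL}^{g}$ and disjoint unions and preserved under products. The algorithm is then the obvious one: given the input $\ALC$-TBox $\Tmc$, accept iff all three closure conditions hold.

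The first observation is that the disjoint-union check is free. Since $\Tmc$ is already an $\ALC$-TBox, Theorem~\ref{thm:withdisjunion} (direction $1 \Rightarrow 2$) tells us that $\Tmc^\sharp$ is automatically invariant under $\sim_{\ALC}^{g}$ and under disjoint unions. So the algorithm only needs to check the two remaining conditions, namely invariance under $\sim_{\EL}^{g}$ and preservation under products.

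Both of these checks are available as subroutines via Theorem~\ref{ALCdown}. Invariance of $\ALC$-TBoxes under $\sim_{\EL}^{g}$ is decidable in \ExpTime, and preservation of $\ALC$-TBoxes under products is decidable in co\NExpTime. Since $\ExpTime \subseteq \coNExpTime$, each check individually lies in co\NExpTime, and the overall decision is a conjunction of two co\NExpTime tests. Because co\NExpTime is closed under intersection (run both tests, one after the other, and accept iff both accept), the combined procedure runs in co\NExpTime, giving the stated bound.

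There is no real obstacle in this proof beyond the prior work: the difficulty has been absorbed into Theorem~\ref{DLLite-TBox} (which supplies the semantic characterization of $\EL$-rewritability) and Theorem~\ref{ALCdown} (which supplies the complexity of each closure test). The only thing to verify carefully is that Theorem~\ref{DLLite-TBox} applies to an $\ALC$-TBox regarded as an FO-sentence via the translation $\Tmc \mapsto \Tmc^\sharp$, so that ``equivalent to an $\EL$-TBox'' in the sense of the theorem coincides with $\EL$-rewritability as defined in Definition~1; this is immediate from our convention of identifying TBoxes with their FO translations.
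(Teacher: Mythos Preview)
Your proposal is correct and follows essentially the same approach as the paper: reduce the question via Theorems~\ref{thm:withdisjunion} and~\ref{DLLite-TBox} to checking invariance under $\sim_{\EL}^{g}$ and preservation under products (disjoint-union invariance being automatic for $\ALC$-TBoxes), and then invoke the complexity bounds from Theorem~\ref{ALCdown}. The paper states this derivation in one line just before the theorem; your write-up simply spells out the combination of the two co\NExpTime checks explicitly.
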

One can easily show \ExpTime-hardness of $\ALC$-to-$\EL$ TBox
rewritability by reduction of satisfiability of $\ALC$-TBoxes.
Namely, $\Tmc$ is satisfiable iff $\Tmc\cup \{ A\sqsubseteq \forall r.B\}$
cannot be rewritten into an $\EL$-TBox, where $A,B,r$ do not occur in~$\Tmc$.
Finding a tight bound remains open.

We now consider $\ALCI$-to-DL-Lite$_{\mn{horn}}$ TBox rewritability
and establish \ExpTime-completeness.  In contrast to $\ALC$-to-$\EL$
rewritability, where it is not clear whether or not the
computationally expensive check for preservation under products can be
avoided, here a rather direct approach is possible that relies only on
deciding invariance under $\sim_{\text{DL-Lite}_{\sf horn}}$.
\begin{theorem}
$\ALCI$-to-DL-Lite$_{\sf horn}$-TBox rewritability is \ExpTime-complete.
\end{theorem}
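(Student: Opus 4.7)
(plan)
The plan is to reduce rewritability to an equivalence check against a syntactically-computed candidate. Given an $\ALCI$-TBox $\Tmc$, let $B_\Tmc$ be the polynomial-sized set of basic DL-Lite$_{\sf horn}$-concepts over $\mn{sig}(\Tmc)$, and define
$$\Tmc^{*} = \{\mybigsqcap S \sqsubseteq B \mid S \subseteq B_\Tmc,\; B \in B_\Tmc \cup \{\bot\},\; \Tmc \models \mybigsqcap S \sqsubseteq B\},$$
the set of all DL-Lite$_{\sf horn}$-CIs over $\mn{sig}(\Tmc)$ entailed by $\Tmc$. Appealing to Theorem~\ref{DLLite-TBox} together with a compactness argument analogous to the one in Theorem~\ref{thm:withdisjunion}, I would show that $\Tmc$ is DL-Lite$_{\sf horn}$-rewritable iff $\Tmc \equiv \Tmc^{*}$, equivalently $\Tmc^{*} \models \Tmc$ (the converse being trivial). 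The table $\Tmc^{*}$ can be computed implicitly via at most exponentially many $\ALCI$-entailment tests, each solvable in \ExpTime, totalling \ExpTime.

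The delicate step is verifying $\Tmc^{*} \models \Tmc$ within \ExpTime, even though $\Tmc^{*}$ can contain exponentially many CIs. I would exploit its Horn structure: the models of $\Tmc^{*}$ are precisely those $\ALCI$-interpretations whose every element's basic-concept type is closed under the entailment table computed above. Hence $\Tmc^{*} \not\models C \sqsubseteq D$ for some $C \sqsubseteq D \in \Tmc$ iff there exists an $\ALCI$-interpretation in which all DL-Lite$_{\sf horn}$-types are closed and some element lies in $C \sqcap \neg D$. I would decide this by $\ALCI$ type-elimination over the exponentially-many types of the closure of $\Tmc$, enforcing closedness as a local side condition that is merely a lookup in the precomputed table. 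As the paper hints, this route relies only on the invariance-style condition under $\sim^{g}_{\text{DL-Lite}_{\sf horn}}$ and avoids the costly preservation-under-products check needed in the $\ALC$-to-$\EL$ case.

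For the \ExpTime lower bound, I would reduce $\ALCI$-TBox unsatisfiability, which is known to be \ExpTime-hard. Given $\Tmc$, set $\Tmc' = \Tmc \cup \{\top \sqsubseteq A \sqcup B\}$ with $A, B$ fresh. If $\Tmc$ is unsatisfiable, so is $\Tmc'$, which is then equivalent to the DL-Lite$_{\sf horn}$-TBox $\{\top \sqsubseteq \bot\}$. Otherwise, fixing a model $\Imc_0$ of $\Tmc$, let $\Imc_1$ and $\Imc_2$ be copies of $\Imc_0$ with all elements placed, respectively, in $A$ only and in $B$ only; both satisfy $\Tmc'$, yet their direct product contains elements that are in neither $A$ nor $B$, so $\Tmc'$ fails preservation under products and by Theorem~\ref{DLLite-TBox} is not DL-Lite$_{\sf horn}$-rewritable.

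The main obstacle is the equivalence check $\Tmc \equiv \Tmc^{*}$: materialising $\Tmc^{*}$ as an $\ALCI$-TBox and running standard reasoning would cost $2\ExpTime$. The technical heart of the upper-bound proof is the observation that DL-Lite$_{\sf horn}$-closedness is a purely local condition on basic-concept types, so the precomputed entailment table suffices to drive type-elimination over the polynomially-many $\ALCI$-subconcepts of $\Tmc$ (and thus exponentially-many $\ALCI$-types) in \ExpTime, without ever expanding $\Tmc^{*}$ explicitly.
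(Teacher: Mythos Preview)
Your approach is correct but takes a different route from the paper. The paper's algorithm works in two semantic stages: first it tests invariance of $\Tmc$ under $\sim^{g}_{\text{DL-Lite}_{\sf horn}}$, and if that succeeds, it checks a Horn condition---that every \emph{disjunctive} basic CI $B_1 \sqcap \cdots \sqcap B_n \sqsubseteq B'_1 \sqcup \cdots \sqcup B'_m$ entailed by $\Tmc$ already has some single disjunct $B'_j$ entailed---which here stands in for the product-preservation test. You instead compute the canonical candidate $\Tmc^*$ directly and reduce rewritability to the single question $\Tmc^* \models \Tmc$, handled by one type-elimination pass with the precomputed closedness side-condition on basic-concept types. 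Your route is more self-contained; in fact the equivalence ``rewritable iff $\Tmc \equiv \Tmc^*$'' does not need Theorem~\ref{DLLite-TBox} at all, since any equivalent DL-Lite$_{\sf horn}$ TBox, after replacing fresh symbols by $\top$ or $\bot$ on the appropriate side, is already contained in $\Tmc^*$. The paper's decomposition, by contrast, ties the algorithm explicitly to its model-theoretic characterizations, which is the paper's organizing theme. Your lower bound via $\Tmc \cup \{\top \sqsubseteq A \sqcup B\}$ and failure of product-preservation is correct and more explicit than the paper's sketch.
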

\begin{proof} (sketch) First decide in \ExpTime whether $\Tmc$ is
  invariant under $\sim_{\text{DL-Lite}_{\sf horn}}$. If not, then
  $\Tmc$ is not equivalent to any DL-Lite$_{\sf horn}$-TBox.  If yes,
  check, in exponential time, whether for every $B_{1}\sqcap \cdots
  \sqcap B_{n} \sqsubseteq B_{1}'\sqcup \cdots \sqcup B_{m}'$ that
  follows from $\Tmc$ with all $B_{i},B_{i}'$ basic concepts, there
  exists $j$ such that $B_{1}\sqcap \cdots \sqcap B_{n} \sqsubseteq
  B_{j}'$ follows from $\Tmc$.  $\Tmc$ is equivalent to some
  DL-Lite$_{\sf horn}$-TBox iff this is the case.
\end{proof}

The original DL-Lite dialects do not admit conjunction as a concept
constructor, or only to express disjointness constraints. More
precisely, a \emph{DL-Lite$_\mn{core}$-TBox} is a finite set of inclusions
$B_{1} \sqsubseteq B_{2}$, where $B_{1},B_{2}$ are basic DL-Lite
concepts as defined in Section~\ref{sect:prelims}.  A
\emph{DL-Lite$^{d}_\mn{core}$-TBox} admits, in addition, inclusions
$B_{1}\sqcap B_{2} \sqsubseteq \bot$ expressing disjointness of
$B_{1}$ and $B_{2}$. To characterize TBoxes formulated in
DL-Lite$_\mn{core}$ and DL-Lite$^d_\mn{core}$, we additionally require
preservation under (non-disjoint) unions and compatible unions,
respectively. The latter are unions of interpretations
$(\Imc_{i})_{i\in I}$ that can be formed only if the family
$(\Imc_{i})_{i\in I}$ is \emph{compatible}, i.e., for any $d\in
\Delta^{\Imc_{i}}\cap \Delta^{\Imc_{j}}$ and basic DL-Lite concepts
$B_{1},B_{2}$ such that $d\in B_{1}^{\Imc_{i}} \cap B_{2}^{\Imc_{j}}$
there exists $\Imc_{\ell}$ with $(B_{1}\sqcap
B_{2})^{\Imc_{\ell}}\not=\emptyset$. 
Preservation of FO-sentences under (compatible) unions is defined in
the obvious way.
The proof of the following
theorem is similar to that of Theorem~\ref{DLLite-TBox}, except that
the construction of $\Imc^+$ is yet a bit more intricate.
\begin{theorem}
\label{thm:dllitechar}
Let $\varphi$ be an FO-sentence.
Then the following conditions are equivalent:
%
\\[1mm]
\phantom{a}1.~$\varphi$ is equivalent to a DL-Lite$_\mn{core}$-TBox 
  (DL-Lite$^{d}_\mn{core}$-TBox);
\\[1mm]
\phantom{a}2.~$\varphi$ is invariant under $\sim_{\text{DL-Lite}_\mn{horn}}^{g}$ and disjoint unions, and \\
$~~~~~~~~$preserved under products and unions  
(compatible unions). 
%
\end{theorem}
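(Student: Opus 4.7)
I would split the proof into the two directions. For the easy direction $1 \Rightarrow 2$, note that DL-Lite$_\mn{core}$- and DL-Lite$^d_\mn{core}$-TBoxes are DL-Lite$_\mn{horn}$-TBoxes, so Theorem~\ref{DLLite-TBox} already yields invariance under $\sim^g_{\text{DL-Lite}_\mn{horn}}$ and disjoint unions and preservation under products. It remains to verify the new closure conditions. For a core CI $B_1 \sqsubseteq B_2$ and a non-disjoint union $\Imc = \sum_{i \in I} \Imc_i$ of models of the TBox, any $d \in B_1^{\Imc}$ already lies in $B_1^{\Imc_i}$ for some single $i$ --- for $B_1$ a concept name this is by the union's definition, for $B_1 = \exists r.\top$ because the witness edge lives in a single $\Imc_i$ --- so $d \in B_2^{\Imc_i} \subseteq B_2^{\Imc}$. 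For DL-Lite$^d_\mn{core}$, preservation under compatible unions of disjointness CIs $B_1 \sqcap B_2 \sqsubseteq \bot$ is immediate from the definition of compatibility, since a common violator would force some model in the family to contain a $B_1 \sqcap B_2$-element.

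For the hard direction $2 \Rightarrow 1$, I would rely on Theorem~\ref{DLLite-TBox} instead of redoing its construction. Since the hypotheses of the present theorem include those of Theorem~\ref{DLLite-TBox}, the sentence $\varphi$ is equivalent to some DL-Lite$_\mn{horn}$-TBox $\Tmc_\mn{horn}$. The task reduces to replacing each Horn CI $B_1 \sqcap \cdots \sqcap B_n \sqsubseteq B'$ of $\Tmc_\mn{horn}$ by one or more DL-Lite$_\mn{core}$-CIs (resp.\ DL-Lite$^d_\mn{core}$-CIs) entailed by $\varphi$ whose conjunction still implies that CI. For DL-Lite$_\mn{core}$, I would show that some $B_i \sqsubseteq B'$ is entailed by $\varphi$: otherwise pick, for each $i$, a model $\Imc_i \models \varphi$ with a witness $d_i \in B_i^{\Imc_i} \setminus (B')^{\Imc_i}$, relabel so that the $d_i$'s coincide at a single element $d$ and the $\Imc_i$'s share only $d$, and form $\Imc := \sum_i \Imc_i$. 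Preservation under unions gives $\Imc \models \varphi$; a case split on the shape of $B'$ (concept name, $\exists r.\top$, or $\bot$) shows $d \notin (B')^{\Imc}$ while $d \in B_i^{\Imc}$ for every $i$, refuting the Horn CI.

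For DL-Lite$^d_\mn{core}$ the substitute is either a single $B_i \sqsubseteq B'$ or a disjointness $B_i \sqcap B_j \sqsubseteq \bot$; both imply the Horn CI. The parallel argument takes the same family $\{\Imc_i\}_i$ and enlarges it with witness models $\Imc_{ij}$ satisfying $(B_i \sqcap B_j)^{\Imc_{ij}} \neq \emptyset$ in order to enforce compatibility. The main obstacle is that the shared element $d$ typically satisfies many further basic concepts in the various $\Imc_i$'s, and compatibility then demands witnesses for every such pair --- but $\varphi$ may force some of these pairs to be disjoint. I would handle this by recording every implied disjointness $B \sqcap B' \sqsubseteq \bot$ in the target DL-Lite$^d_\mn{core}$-TBox (these are DL-Lite$^d_\mn{core}$-CIs entailed by $\varphi$, hence admissible) and adding fresh witnesses only for the remaining consistent pairs; the resulting compatible union exists, is a model of $\varphi$ by preservation, and refutes the Horn CI at $d$, yielding the required DL-Lite$^d_\mn{core}$-TBox together with the recorded disjointnesses.
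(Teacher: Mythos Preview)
Your argument for DL-Lite$_\mn{core}$ is correct and takes a genuinely different route from the paper. The paper redoes the $\Imc^-/\Imc^+$ construction of Theorem~\ref{DLLite-TBox} from scratch: for every extended CI $B_1 \sqcap \cdots \sqcap B_m \sqsubseteq D_1 \sqcup \cdots \sqcup D_n$ not in $\mn{cons}^{\sqcap,\sqcup}(\varphi)$ it manufactures a single model of $\varphi$ violating it, by first taking products (one factor per $D_j$) and then a non-disjoint union (one summand per $B_i$, all sharing one point). You instead use Theorem~\ref{DLLite-TBox} as a black box to obtain an equivalent Horn TBox, and then argue that each Horn CI is already implied by a single core CI entailed by $\varphi$, invoking only preservation under unions. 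Your reduction is more modular and avoids re-establishing $\Imc^- \equiv^g_\Lmc \Imc^+$; the paper's direct construction has the advantage that it adapts uniformly to the DL-Lite$^d_\mn{core}$ case once one arranges the union to be compatible.

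Your DL-Lite$^d_\mn{core}$ argument, however, has a genuine gap. You correctly isolate the obstacle---the shared element $d$ may satisfy basic concepts $C$ in $\Imc_i$ beyond the intended $B_i$---but the proposed fix does not remove it. Recording an implied disjointness $C \sqcap C' \sqsubseteq \bot$ in the target TBox has no effect on whether the family $(\Imc_i)_i$ is \emph{compatible}: if $d \in C^{\Imc_i}$ and $d \in (C')^{\Imc_j}$ while $\varphi \models C \sqcap C' \sqsubseteq \bot$, then no model of $\varphi$ can serve as the witness $\Imc_\ell$ that compatibility demands, so preservation under compatible unions simply does not apply. Nor does the recorded disjointness $C \sqcap C' \sqsubseteq \bot$ imply the Horn CI $B_1 \sqcap \cdots \sqcap B_n \sqsubseteq B'$ you are trying to cover, since $C,C'$ are arbitrary basic concepts that happen to hold at $d$. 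A clean repair is to first use preservation under \emph{products} to minimise each $\Imc_i$: for every basic $C$ with $\varphi \not\models B_i \sqsubseteq C$ include a factor refuting $B_i \sqsubseteq C$, so that in the resulting $\Imc_i$ the element $d$ satisfies exactly those basic concepts $C$ with $\varphi \models B_i \sqsubseteq C$. Then any pair $(C,C')$ without a witness would yield $\varphi \models B_i \sqcap B_j \sqsubseteq C \sqcap C' \sqsubseteq \bot$, contradicting your case assumption; hence all required witnesses exist and the compatible union goes through.
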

%
%
Note that it is not possible to strengthen Condition~2 of
Theorem~\ref{thm:dllitechar} by requiring $\vp$ to be 
\emph{invariant} under unions as this results in failure of
the implication $1 \Rightarrow 2$. 

Because of the fact that there are only polynomially many concept
inclusions over any finite signature, TBox rewritability into
DL-Lite$_{\sf core}$ and DL-Lite$_{\sf core}^{d}$ is a comparably
simple problem and semantic characterizations are less fundamental
here than for more expressive DLs. In fact, for $\Lmc\in {\sf ExpDL}$
that contains inverse roles, one can reduce \Lmc-to-DL-Lite$_{\sf
  core}$ rewritability to Boolean \Lmc-TBox
unsatisfiability. Conversely (and trivially), \Lmc-TBox
unsatisfiability can be reduced to \Lmc-to-DL-Lite$_{\sf core}$ TBox
rewritability. As for all expressive DLs in this paper the complexity
of TBox satisfiability and Boolean TBox satisfiability coincide, this
yields tight complexity bounds.  The same holds for DL-Lite$_{\sf
  core}^{d}$. For a related study of approximation in DL-Lite, see
\cite{DBLP:conf/aimsa/BotoevaCR10}.

\section{Discussion}

We believe that the results established in this paper have many
potential applications in areas where the expressive power of TBoxes
plays a central role, such as TBox approximation and modularity. We
also believe that the problem of TBox rewritability, studied here as
an example application of our characterization results, is interesting
in its own right. A more comprehensive study, including the actual
computation of rewritten TBoxes, remains as future work.

The DLs standardized as OWL 2 and its profiles have additional
expressive power compared to the `core DLs' studied in this
paper. While full OWL 2 is probably too complex to admit really
succinct characterizations of the kind established here, some
extensions are possible as follows: each of
Theorems~\ref{thm:withdisjunion},~\ref{thm:charnom},
and~\ref{DLLite-TBox} still holds when the admissible interpretations
are restricted to some class that is definable by an FO-sentence
preserved under the notion of (disjoint) union and product used in
that theorem. This captures many features of OWL such as transitive
roles, role hierarchy axioms, and even role
inclusion axioms.



\begin{small}

\end{small}


\cleardoublepage
 \newpage

\appendix

\section{Proofs for Section~\ref{sect:concdef}}

%
%

To begin this section, we give a precise definition of $\omega$-saturated
interpretations. In what follows we assume that $\NC \cup \NR \cup \NI$
and the domain $\Delta^{\Imc}$ of an interpretation $\Imc$ are disjoint sets.
We can regard elements of $\Delta^{\Imc}$ as additional individual symbols
that have a fixed interpretation in $\Imc$, defined by setting $a^{\Imc}= a$ 
for all $a\in \Delta^{\Imc}$.

Let $\Imc$ be an interpretation.
A set $\Gamma$ of FO-formulas with free variables among $x_{1},\ldots,x_{n}$,
predicate symbols from $\NC \cup \NR$, and individual symbols from $\NI \cup
\Delta^{\Imc}$ is called 
\begin{itemize}
\item \emph{realizable} in $\Imc$ if there exists a variable
assignment $a(x_{i})\in \Delta^{\Imc}$, $1 \leq i \leq n$, such that 
$\Imc\models_{a}\varphi$ for all $\varphi\in \Gamma$. 
\item \emph{finitely realizable} in $\Imc$ if for every finite subset
$\Gamma'$ of $\Gamma$ there exists a variable
assignment $a(x_{i})\in \Delta^{\Imc}$, $1 \leq i \leq n$, such that 
$\Imc\models_{a}\varphi$ for all $\varphi\in \Gamma'$. 
\end{itemize}
We call an interpretation $\Imc$ \emph{$\omega$-saturated} if 
the following holds for every such set $\Gamma$ that uses only finitely
many individual symbols from $\Delta^{\Imc}$:
if $\Gamma$ is finitely realizable in $\Imc$, then $\Gamma$
is realizable in $\Imc$.

We apply the following existence theorem for $\omega$-saturated 
interpretations (cf. \cite{ChangKeisler}).
\begin{theorem}\label{exists}
For every interpretation $\Imc$ there exists an interpretation
$\Imc^{\ast}$ that is $\omega$-saturated and satisfies the same
FO-sentences as $\Imc$ (is elementary equivalent to $\Imc$).
\end{theorem}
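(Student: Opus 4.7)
The plan is to construct $\Imc^{*}$ as the limit of an ascending chain of elementary extensions $\Imc = \Imc_{0}, \Imc_{1}, \Imc_{2}, \ldots$ arranged so that every set of FO-formulas over finitely many parameters from $\Delta^{\Imc_{n}}$ that is finitely realizable in $\Imc_{n}$ becomes realized in $\Imc_{n+1}$. This is the classical iterated-realization construction from first-order model theory, and the result itself is standard (hence the reference to Chang--Keisler).

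For the inductive step from $\Imc_{n}$ to $\Imc_{n+1}$, expand the signature by a fresh constant for every element of $\Delta^{\Imc_{n}}$ and, for each set $\Gamma(\bar{x})$ of FO-formulas using only finitely many of these constants that is finitely realizable in $\Imc_{n}$, introduce a fresh tuple of constants $\bar{c}_{\Gamma}$ together with the sentences $\{\varphi(\bar{c}_{\Gamma})\mid \varphi\in \Gamma\}$. Add this to the full elementary diagram of $\Imc_{n}$. The resulting theory is finitely satisfiable: any finite subset mentions only finitely many $\Gamma$'s and only finitely many formulas from each, and the finite-realizability hypothesis on each $\Gamma$ lets us interpret the corresponding $\bar{c}_{\Gamma}$ inside $\Imc_{n}$ itself. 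By compactness, the theory has a model, whose reduct yields an elementary extension $\Imc_{n+1}$ of $\Imc_{n}$ realizing every such~$\Gamma$.

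Set $\Imc^{*} = \bigcup_{n} \Imc_{n}$. The Tarski--Vaught elementary chain theorem ensures that $\Imc_{n}$ is an elementary substructure of $\Imc^{*}$ for each $n$; in particular $\Imc_{0}=\Imc$ and $\Imc^{*}$ satisfy the same FO-sentences, giving elementary equivalence. For $\omega$-saturation, let $\Gamma$ be a set of FO-formulas using only finitely many parameters from $\Delta^{\Imc^{*}}$ and finitely realizable in $\Imc^{*}$. Because the parameter set is finite, it lies inside $\Delta^{\Imc_{n}}$ for some $n$, and elementarity of $\Imc_{n}$ in $\Imc^{*}$ transfers finite realizability down to $\Imc_{n}$. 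By construction, $\Gamma$ is then realized in $\Imc_{n+1}$, hence in $\Imc^{*}$ by elementarity.

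The main delicate point is ensuring, at each successor stage, that one can realize \emph{all} relevant finitely realizable types over $\Imc_{n}$ simultaneously rather than one at a time. This is exactly what is gained by passing to the expanded theory and invoking compactness: finite fragments involve only finitely many types $\Gamma$, and the finite-realizability assumption for each is enough to witness them jointly in $\Imc_{n}$. Aside from this, the only bookkeeping concern is that the collection of such $\Gamma$ is set-sized (since we restrict to finite parameter sets and formulas over the fixed finite signature together with $\Delta^{\Imc_{n}}$), so the expanded language is legitimate and compactness applies.
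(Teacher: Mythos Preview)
The paper does not prove this theorem at all; it simply states it and cites Chang--Keisler. Your sketch is the standard elementary-chain construction and is correct: the only point worth scrutiny is the simultaneous realization of all finitely realizable types at each successor stage, and your observation that a finite fragment of the expanded theory mentions only finitely many types, each with its own disjoint tuple of fresh constants, so they can be witnessed independently inside $\Imc_{n}$, is exactly what makes compactness go through. One could alternatively obtain $\Imc^{*}$ as a suitable ultrapower of $\Imc$ (also in Chang--Keisler), which avoids the chain bookkeeping at the cost of invoking a nontrivial ultrafilter; your route is more constructive in spirit and closer to what a reader unfamiliar with ultraproducts would expect.
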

In our proofs, we will often use the notion of a type. Formally, for
a DL \Lmc, an interpretation \Imc, and a $d \in \Delta^\Imc$, the
\emph{\Lmc-type} of $d$ in \Imc, denoted $t^\Imc_\Lmc(d)$, is the set
of \Lmc-concepts $C$ such that $d \in C^\Imc$.


%

We are in the position now to prove the results of 
Section~\ref{sect:concdef}.

\medskip

\noindent 
{\bf Theorem~\ref{equivalence}}
Let $\mathcal{L}\in \{\mathcal{EL},\text{DL-Lite}_{\sf horn}\}\cup {\sf ExpDL}$
and let $(\mathcal{I}_{1},d_{1})$ and $(\mathcal{I}_{2},d_{2})$ objects.
\begin{itemize}
\item If $d_{1} \sim_{\mathcal{L}} d_{2}$, then $d_{1} \equiv_{\mathcal{L}}d_{2}$;
\item If $d_{1}\equiv_{\mathcal{L}} d_{2}$ and both objects are $\omega$-saturated, 
then $d_{1} \sim_{\mathcal{L}} d_{2}$.
\end{itemize}

\medskip

For $\mathcal{ALC}$, various proofs of this result are known, mostly from the modal logic literature
\cite{GorankoOtto}. Many of them are easily extended so as to cover 
$\mathcal{ALCO}$, $\mathcal{ALCI}$, and $\mathcal{ALCIO}$.
Here we present proofs for $\mathcal{ALCQ}$, 
$\mathcal{EL}$, and DL-Lite$_{\sf horn}$. The extensions to the remaining
members of ${\sf ExpDL}$ ($\mathcal{ALCQI}$, $\mathcal{ALCQIO}$) are 
straightforward and left to the reader. 

\medskip

\noindent
{\bf Proof for $\mathcal{ALCQ}$}. 
Assume first that $(\Imc_{1},d_{1}) \sim_{\mathcal{ALCQ}} (\Imc_{2},d_{2})$
and let $S\subseteq \Delta^{\Imc_{1}}\times \Delta^{\Imc_{2}}$ satisfy [Atom] for all $A\in \NC$, 
[QForth], and [QBack] such that $(d_{1},d_{2})\in S$. We show 
$e_{1} \equiv_{\mathcal{ALCQ}} e_{2}$ for all $(e_{1},e_{2})\in S$; it follows that
$d_{1} \equiv_{\mathcal{ALCQ}} d_{2}$, as required. The proof is by induction over
the construction of $\mathcal{ALCQ}$-concepts. Thus, we show by induction 
for all $\mathcal{ALCQ}$-concepts $C$:

\medskip

\noindent
Claim 1. $e_{1} \in C^{\Imc_{1}}$ iff $e_{2}\in C^{\Imc_{2}}$, for all 
$(e_{1},e_{2})\in S$.

\medskip

If $C$ is a concept name, then Claim 1 follows from [Atom].
The steps for the Boolean connectives are straightforward. Now assume
$C = \qnrgeq n r D$ and let $e_{1}\in \qnrgeq n r D^{\Imc_{1}}$.
Let $X\subseteq {\sf succ}_{r}^{\Imc_{1}}(e_{1})$ be of cardinality $n$ such that
$e\in D^{\Imc_{1}}$ for all $e\in X$. By [QForth], there exists $Y\subseteq
{\sf succ}_{r}^{\Imc_{2}}(e_{2})$ such that $S$ contains a bijection between 
$X$ and $Y$.
By induction hypothesis $e'\in D^{\Imc_{2}}$ for all $e'\in Y$. Thus
$e_{2}\in \qnrgeq n r D^{\Imc_{2}}$, as required. The reverse condition
can be proved in the same way using [QBack]. The case $\qnrleq n r D$ can
be proved similarly.

\medskip
 
Conversely, assume that $(\Imc_{1},d_{1}) \equiv_{\mathcal{ALCQ}} (\Imc_{2},d_{2})$
and $\Imc_{1},\Imc_{2}$ are $\omega$-saturated. Set 
$$
S:=\{(e_1,e_2)\in \Delta^{\Imc_1}\times \Delta^{\Imc_2} \mid 
e_1\equiv_{\ALCQ}e_2\}
$$
We show that $S$ satisfies [Atom], [QForth], and [QBack].
(Then $d_{1} \sim_{\ALCQ} d_{2}$, as required.)
As [Atom] follows directly from the definition of $S$ and
[QBack] can be proved in the same way as [QForth], we focus on
[QForth]. Assume $(e_{1},e_{2})\in S$ and 
$D_{1}\subseteq {\sf succ}_{r}^{\Imc_{1}}(e_{1})$ is finite. Take an individual
variable $x_{d}$ for every $d\in D_{1}$ and consider the set of FO-formulas
$\Gamma = \Gamma^{\not=} \cup \Gamma^{r} \cup \bigcup_{d\in D_{1}}{\sf type}(d)$, where
\begin{itemize}
\item $\Gamma^{\not=} =  \{\neg (x_{d}=x_{d'}) \mid d\not=d',d,d'\in D_{1}\}$;
\item ${\sf type}(d) = \{C^{\sharp}(x_{d}) \mid C\in t^{\Imc_{1}}_{\ALCQ}(d)\}$;
\item $\Gamma^{r} = \{ r(e_{2},x_{d}) \mid d\in D_{1}\}$.
\end{itemize}
Note that $\Gamma'$, the set $\Gamma$ with $e_{2}$ replaced by $e_{1}$,
is realizable in $\Imc_{1}$ by the assignment $a(x_{d})=d$, for $d\in D_{1}$.
Using $\omega$-saturatedness of $\Imc_{2}$
and $e_{1} \equiv_{\ALCQ} e_{2}$, it is readily check that $\Gamma$ is realizable
in $\Imc_{2}$. Assume $\Gamma$ is realizable in $\Imc_{2}$ by the variable 
assignment $a(x_{d})$, $d\in D_{1}$. Let 
$$
D_{2}= \{ a(x_{d}) \mid d\in D_{1}\}.
$$
Then $d \equiv_{\ALCQ} a(x_{d})$ for all $d\in D_{1}$ (by ${\sf type}(d)$),
$D_{2}\subseteq {\sf succ}_{r}^{\Imc_{2}}(e_{2})$ (by $\Gamma^{r}$), and
$d \mapsto a(x_{d})$ is a bijection from $D_{1}$ to $D_{2}$ (by $\Gamma^{\not=}$).
Thus [QForth] holds.

This finishes the proof for $\ALCQ$.

\medskip

\noindent
{\bf Proof for $\mathcal{EL}$}.
Assume first that $(\Imc_{1},d_{1}) \sim_{\EL} (\Imc_{2},d_{2})$.
Then $(\Imc_{1},d_{1}) \leq_{\EL} (\Imc_{2},d_{2})$ and
$(\Imc_{2},d_{2}) \leq_{\EL} (\Imc_{1},d_{1})$ and so there exists an
$\EL$-simulation $S_{1}$ between $\Imc_{1}$ and $\Imc_{2}$ with $(d_{1},d_{2})\in
S_{1}$ and an $\EL$-simulation $S_{2}$ between $\Imc_{2}$ and $\Imc_{1}$ with 
$(d_{2},d_{1})\in S_{2}$. We show the following

\begin{itemize}
\item if $(e_{1},e_{2}) \in S_{1}$ and $e_{1}\in C^{\Imc_{1}}$, then 
$e_{2}\in C^{\Imc_{2}}$, for all $\mathcal{EL}$-concepts $C$;
\item if $(e_{2},e_{1}) \in S_{2}$ and $e_{2}\in C^{\Imc_{2}}$, then 
$e_{1}\in C^{\Imc_{1}}$, for all $\mathcal{EL}$-concepts $C$.
\end{itemize}
Points~1 and 2 together and $(d_{1},d_{2})\in S_{1}$, $(d_{2},d_{1})\in S_{2}$
imply $d_{1}\equiv_{\EL} d_{2}$, as required. We provide a proof of Point~1.
The proof is by induction on the construction of $C$. For concept names,
the claim follows from [AtomR]. For $\top$ and $\bot$ the claim is trivial.
For conjunction the proof is trivial. Now assume $C = \exists r.D$,
$(e_{1},e_{2})\in S_{1}$ and $e_{1}\in C^{\Imc_{1}}$. There exists $e_{1}'$
with $(e_{1},e_{1}')\in r^{\Imc_{1}}$ such that $e_{1}'\in D^{\Imc_{2}}$.
By [Forth], there exists $e_{2}'$ with $(e_{2},e_{2}')\in r^{\Imc_{2}}$ such
that $(e_{1}',e_{2}')\in S_{1}$. By induction hypothesis, $e_{2}'\in D^{\Imc_{2}}$.
Thus, $e_{2}\in C^{\Imc_{2}}$, as required.

\medskip

Conversely, let $(\Imc_{1},d_{1}) \equiv_{\EL} (\Imc_{2},d_{2})$
and assume that $\Imc_{1},\Imc_{2}$ are $\omega$-saturated.
Let
$$
S_{1}=\{(e_1,e_2)\in \Delta^{\Imc_1}\times\Delta^{\Imc_2} \mid 
t^{\Imc_{1}}_{\EL}(e_1)\subseteq t^{\Imc_{2}}_{\EL}(e_2)\}
$$
and
$$
S_{2}=\{(e_2,e_1)\in \Delta^{\Imc_2}\times\Delta^{\Imc_1} \mid 
t^{\Imc_{2}}_{\EL}(e_2)\subseteq t^{\Imc_{1}}_{\EL}(e_1)\}.
$$
We show that $S_{1}$ is a $\EL$-simulation between $\Imc_{1}$ and $\Imc_{2}$.
The same argument shows that $S_{2}$ is a $\EL$-simulation between 
$\Imc_{2}$ and $\Imc_{1}$.
Thus, from $(d_{1},d_{2})\in S_{1}$ and $(d_{2},d_{1})\in S_{2}$, we obtain
$d_{1} \sim_{\EL} d_{2}$, as required. 

Property [AtomR] follows directly from the definition of $S_{1}$.
We consider [Forth]. Let $(e_1,e_2)\in S_{1}$ and 
$(e_{1},e_{1}')\in r^{\Imc_{1}}$. 
Take an individual variable $x$ and consider the set of FO-formulas
$\Gamma = {\sf type}(e_{1}') \cup \Gamma^{r}$, where
\begin{itemize}
\item ${\sf type}(e_{1}') = \{C^{\sharp}(x) \mid C \in t^{\Imc_{1}}_{\EL}(e_{1}')\}$;
\item $\Gamma^{r} = \{ r(e_{2},x)\}$.
\end{itemize}
Note that $\Gamma'$, the set $\Gamma$ with $e_{2}$ replaced by $e_{1}$,
is realizable in $\Imc_{1}$ by the assignment $a(x)= e_{1}'$.
Using $\omega$-saturatedness of $\Imc_{2}$
and $t^{\Imc_{1}}_{\EL}(e_1)\subseteq t^{\Imc_{2}}_{\EL}(e_2)$, it is readily check that 
$\Gamma$ is realizable in $\Imc_{2}$. Assume $\Gamma$ is realizable in 
$\Imc_{2}$ by the variable assignment $a(x)$.
Then $(e_{1}',a(x)) \in S_{1}$ (by ${\sf type}(e_{1}')$)
and $(e_{2},a(x))\in r^{\Imc_{2}}$ (by $\Gamma^{r}$).
Thus [Forth] holds.

This finishes the proof for $\EL$.

\medskip

\noindent 
{\bf Proof for DL-Lite$_{\sf horn}$}.
The proof for DL-Lite$_{\sf horn}$ is rather straightforward: no induction over concepts 
is required as there are no nestings of existential restrictions. Moreover, 
$\omega$-saturatedness is not required for the implication from 
$\equiv_{\text{DL-Lite}_{\sf horn}}$ to $\sim_{\text{DL-Lite}_{\sf horn}}$.

Assume first that $(\Imc_{1},d_{1}) \sim_{\text{DL-Lite}_{\sf horn}} (\Imc_{2},d_{2})$.
Then $(\Imc_{1},d_{1}) \leq_{\text{DL-Lite}_{\sf horn}} (\Imc_{2},d_{2})$ and
$(\Imc_{2},d_{2}) \leq_{\text{DL-Lite}_{\sf horn}} (\Imc_{1},d_{1})$ and so there exists a 
DL-Lite$_{\sf horn}$-simulation $S_{1}$ between $\Imc_{1}$ and $\Imc_{2}$ with $(d_{1},d_{2})\in
S_{1}$ and a DL-Lite$_{\sf horn}$-simulation $S_{2}$ between $\Imc_{2}$ and $\Imc_{1}$ with 
$(d_{2},d_{1})\in S_{2}$. It is straightforward to show using the conditions on
DL-Lite$_{\sf horn}$-simulations that

\begin{itemize}
\item if $(e_{1},e_{2}) \in S_{1}$ and $e_{1}\in C^{\Imc_{1}}$, then 
$e_{2}\in C^{\Imc_{2}}$, for all DL-Lite$_{\sf horn}$-concepts $C$;
\item if $(e_{2},e_{1}) \in S_{2}$ and $e_{2}\in C^{\Imc_{2}}$, then 
$e_{1}\in C^{\Imc_{1}}$, for all DL-Lite$_{\sf horn}$-concepts $C$.
\end{itemize}
Points~1 and 2 together and $(d_{1},d_{2})\in S_{1}$, $(d_{2},d_{1})\in S_{2}$
imply $d_{1}\equiv_{\text{DL-Lite}_{\sf horn}} d_{2}$, as required.

\medskip

Conversely, assume $(\Imc_{1},d_{1}) \equiv_{\text{DL-Lite}_{\sf horn}} (\Imc_{2},d_{2})$.
Let
$$
S =\{(e_1,e_2)\in \Delta^{\Imc_1}\times\Delta^{\Imc_2} \mid 
t^{\Imc_{1}}_{\text{DL-Lite}}(e_1) = t^{\Imc_{2}}_{\text{DL-Lite}}(e_2)\}.
$$
It is easily checked that $S$ is a DL-Lite$_{\sf horn}$-simulation between
$\Imc_{1}$ and $\Imc_{2}$ and that $S^{-}$ is a DL-Lite$_{\sf horn}$-simulation between 
$\Imc_{2}$ and $\Imc_{1}$.
We obtain $d_{1} \sim_{\text{DL-Lite}_{\sf horn}} d_{2}$, as required. 
This finishes the proof for DL-Lite$_{\sf horn}$.

\medskip

In the proof of Theorem~\ref{local1} we will employ
the following non-symmetric version of Theorem~\ref{equivalence}
for $\EL$ and DL-Lite$_{\sf horn}$ that follows directly from
the proof of Theorem~\ref{equivalence} above:

\begin{lemma}\label{equivalence5}
Let $\mathcal{L}\in \{\mathcal{EL},\text{DL-Lite}_{\sf horn}\}$
and let $(\mathcal{I}_{1},d_{1})$ and $(\mathcal{I}_{2},d_{2})$ be objects.
\begin{itemize}
\item If $d_{1} \leq_{\mathcal{L}} d_{2}$, then $t^{\Imc_{1}}_{\Lmc}(d_{1})
\subseteq t^{\Imc_{2}}_{\Lmc}(d_{2})$;
\item If $t^{\Imc_{1}}_{\Lmc}(d_{1})
\subseteq t^{\Imc_{2}}_{\Lmc}(d_{2})$ and both objects are $\omega$-saturated , 
then $d_{1} \leq_{\mathcal{L}} d_{2}$.
\end{itemize}
\end{lemma}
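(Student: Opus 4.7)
(sketch)
Both points are obtained by inspecting the proof of Theorem~\ref{equivalence} for $\EL$ and DL-Lite$_{\sf horn}$ and observing that the argument given there is already one-directional on both sides. The plan is to extract each direction from the corresponding half of that proof, without any new construction.

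For Point~1, recall that $d_1 \leq_\Lmc d_2$ asserts the existence of a single $\Lmc$-simulation $S$ from $\Imc_1$ to $\Imc_2$ with $(d_1,d_2)\in S$. In the proof of Theorem~\ref{equivalence} for $\EL$, the statement ``if $(e_1,e_2)\in S_1$ and $e_1\in C^{\Imc_1}$, then $e_2\in C^{\Imc_2}$, for all $\Lmc$-concepts $C$'' was established from the $\EL$-simulation $S_1$ alone, by induction over $C$; the second simulation $S_2$ was only invoked in order to recover the other direction of $\equiv_\EL$. Applying this half of the argument to our single simulation $S$ and to the pair $(d_1,d_2)$ yields $t^{\Imc_1}_\EL(d_1)\subseteq t^{\Imc_2}_\EL(d_2)$. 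The DL-Lite$_{\sf horn}$ case is analogous (and even simpler, since there is no induction over concepts), using the corresponding one-sided argument given in the proof of Theorem~\ref{equivalence}.

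For Point~2, assume $t^{\Imc_1}_\Lmc(d_1)\subseteq t^{\Imc_2}_\Lmc(d_2)$ with $\Imc_1,\Imc_2$ $\omega$-saturated. The proof of Theorem~\ref{equivalence} explicitly defined
\[
S := \{(e_1,e_2)\in \Delta^{\Imc_1}\times\Delta^{\Imc_2} \mid t^{\Imc_1}_\Lmc(e_1)\subseteq t^{\Imc_2}_\Lmc(e_2)\}
\]
and verified that $S$ satisfies [AtomR] (trivially from the definition) and [Forth] (for $\EL$) or [FSucc] (for DL-Lite$_{\sf horn}$), where the Forth/FSucc verification used $\omega$-saturation of $\Imc_2$ together with the type containment on the source element. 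Hence $S$ is an $\Lmc$-simulation from $\Imc_1$ to $\Imc_2$. By hypothesis $(d_1,d_2)\in S$, so $d_1 \leq_\Lmc d_2$, as required.

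There is no real obstacle here: the hard work has already been done in the proof of Theorem~\ref{equivalence}, which was organised into two independent one-sided arguments; the lemma simply reads off the non-symmetric content of each half. The only point worth checking carefully is that the Forth/FSucc verification in Part~2 does not tacitly use $\omega$-saturation of $\Imc_1$, so that assuming saturation of both interpretations (as in the statement) is sufficient---and indeed the set of FO-formulas $\Gamma$ constructed there is realized in $\Imc_2$ using only saturation of $\Imc_2$.
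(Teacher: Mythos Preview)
Your proposal is correct and takes essentially the same approach as the paper, which simply states that the lemma ``follows directly from the proof of Theorem~\ref{equivalence} above'' without further elaboration; you have accurately unpacked what that means by isolating the one-sided halves of that proof. One tiny remark: for DL-Lite$_{\sf horn}$ the paper notes that $\omega$-saturation is in fact not needed for the converse direction (the [FSucc] check follows immediately from $\exists r.\top$ being a basic concept and type inclusion), so your parenthetical about saturation being used for [FSucc] is slightly off in that case, but this does not affect correctness.
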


\medskip

For a set $\Gamma$ of FO-formulas and a FO-formula $\varphi$ (all 
possibly containing free variables), we write $\Gamma \models \varphi$
if for every interpretation $\Imc$ with variable assigment $a$, we have 
$\Imc\models_{a} \varphi$ whenever $\Imc\models_{a}\psi$ for all 
$\psi\in \Gamma$. $\varphi\models \psi$ stands for $\{\varphi\}\models \psi$.
 
\medskip

\noindent
{\bf Theorem~\ref{local}}
Let $\mathcal{L}\in {\sf ExpDL}$ and $\varphi(x)$ a first-order formula with free variable $x$. 
Then the following conditions are equivalent:
\begin{enumerate}
\item there exists an $\mathcal{L}$-concept $C$ such that $C^{\sharp}(x)\equiv \varphi(x)$;
\item $\varphi(x)$ is invariant under $\sim_{\mathcal{L}}$.
\end{enumerate}

\medskip

\noindent
\begin{proof}
Let $\Lmc \in {\sf ExpDL}$.

The direction $1\Rightarrow 2$ follows from the fact $\Lmc$-concepts
are invariant under $\sim_{\Lmc}$ (which has been shown in 
Theorem~\ref{equivalence}).

For the direction $2 \Rightarrow 1$ let $\varphi(x)$ be invariant under
$\sim_{\Lmc}$ but assume there is no $\mathcal L$-concept $C$ such that 
$C^\sharp(x)$ is equivalent to $\varphi(x)$. 
Let 
$$
\mathrm{cons}(\varphi(x)):=\{ C^\sharp(x) \mid \text{$C$ and $\Lmc$-concept},
\varphi(x)\models C^\sharp(x))\}.
$$
By compactness, $\mathrm{cons}(\varphi(x)) \cup \{\neg \varphi(x)\}$
is satisfiable. Let $\Imc^{-}$ be an interpretation satisfying
$\mathrm{cons}(\varphi(x)) \cup \{\neg \varphi(x)\}$ under the assignment
$a_{2}(x) = d_{2}$. We may assume that $\Imc^{-}$ is $\omega$-saturated.

\medskip

\noindent
Claim 1. $\{\varphi(x)\} \cup \{C^{\sharp}(x) \mid C\in t^{\Imc}_{\Lmc}(d_{2})\}$
is satisfiable. 

\medskip

\noindent Assume that Claim 1 does not hold. Then, by compactness, there
is a finite set $\Gamma \subseteq t^{\Imc}_{\Lmc}(d_{2})$ such that 
$\{\varphi(x)\} \cup \{C^{\sharp}(x) \mid C\in \Gamma\}$
is unsatisfiable. Thus,
$$
\models \varphi(x) \rightarrow (\neg \bigsqcap_{C\in \Gamma}C)^{\sharp}(x)
$$
which implies that $(\neg \bigsqcap_{C\in \Gamma}C)^{\sharp}(x) \in 
\mathrm{cons}(\varphi(x))$ (here we use the fact that $\Lmc$-concepts
are closed under forming negations and conjunctions) and so leads to 
a contradiction as  
$\mathrm{cons}(\varphi(x)) \subseteq \{C^{\sharp}(x) \mid C \in t^{\Imc}_{\Lmc}(d_{2})\}$.

\medskip

Take an $\omega$-saturated interpretation $\Imc^{+}$ satisfying
$\{\varphi(x)\} \cup \{C^{\sharp}(x) \mid C\in t^{\Imc}_{\Lmc}(d_{2})\}$ under the assignment
$a_{1}(x) = d_{1}$. By definition, $(\Imc_{1},d_{1})\equiv_{\Lmc}(\Imc_{2},d_{2})$.
By Theorem~\ref{equivalence}, $(\Imc_{1},d_{1})\sim_{\Lmc}(\Imc_{2},d_{2})$.
We have derived a contradiction as $\Imc_{1}\models \varphi[d_{1}]$ but
$\Imc_{2}\not\models \varphi[d_{2}]$.
\end{proof}

Before proving Theorem~\ref{local1}, we determine the behaviour of $\EL$ and
DL-Lite-concepts in direct products.
\begin{lemma}\label{prodeldl}
Let $\Lmc\in \{\EL,\text{DL-Lite}_{\sf horn}\}$, $C$ a $\Lmc$-concept,
and $(\Imc_{i},d_{i})$, $i\in I$, a family of objects. Then
$$
(d_{i})_{i\in I} \in C^{\prod_{i\in I}\Imc_{i}}\quad \Leftrightarrow \quad \forall i\in I: d_{i}\in 
C^{\Imc_{i}}
$$
\end{lemma}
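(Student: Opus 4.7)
The plan is a straightforward structural induction on the concept $C$, exploiting the fact that the constructors available in $\EL$ and DL-Lite$_{\mn{horn}}$ are all preserved under products in a pointwise manner, while the forbidden connective $\sqcup$ is exactly the one that would break the equivalence (as witnessed by Example~\ref{fig:9}).

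I would first dispatch the base cases. For a concept name $A$, the claim is immediate from the definition of $A^{\prod_i \Imc_i}$ in the product construction. The cases $C = \top$ and $C = \bot$ are trivial since $\top^{\prod_i \Imc_i} = \Delta^{\prod_i \Imc_i}$ and $\bot^{\prod_i \Imc_i} = \emptyset$, while $\top^{\Imc_i} = \Delta^{\Imc_i}$ and $\bot^{\Imc_i} = \emptyset$. The conjunction case $C = C_1 \sqcap C_2$ follows directly from the induction hypothesis together with the set-theoretic fact that $\bar d \in (C_1 \sqcap C_2)^{\prod_i \Imc_i}$ iff $\bar d \in C_1^{\prod_i \Imc_i}$ and $\bar d \in C_2^{\prod_i \Imc_i}$.

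The main case is the existential restriction $C = \exists r.D$ (for $\EL$; for DL-Lite$_{\mn{horn}}$ we only need the special form $D = \top$, but the argument is the same). For the forward direction, a witness $(e_i)_{i \in I} \in \Delta^{\prod_i \Imc_i}$ with $((d_i)_i,(e_i)_i) \in r^{\prod_i \Imc_i}$ and $(e_i)_i \in D^{\prod_i \Imc_i}$ projects, by the definition of $r^{\prod_i \Imc_i}$ and the induction hypothesis, to a family of witnesses $e_i \in r^{\Imc_i}(d_i) \cap D^{\Imc_i}$, showing $d_i \in (\exists r.D)^{\Imc_i}$ for every $i$. For the backward direction, given such $e_i$ for each $i$, I would form the tuple $\bar e = (e_i)_{i \in I}$, which lies in $\Delta^{\prod_i \Imc_i}$; then $((d_i)_i,\bar e) \in r^{\prod_i \Imc_i}$ holds by definition of the product, and the induction hypothesis gives $\bar e \in D^{\prod_i \Imc_i}$. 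Hence $(d_i)_i \in (\exists r.D)^{\prod_i \Imc_i}$. For DL-Lite$_{\mn{horn}}$ one also needs to treat $r = s^-$, but since $(s^-)^{\prod_i \Imc_i}$ is the converse of $s^{\prod_i \Imc_i}$ and the product commutes with converse (directly from the definition of $r^{\prod_i \Imc_i}$ for $r \in \NR$), the same argument goes through.

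I do not expect a real obstacle. The only point worth mentioning is that the backward direction of the $\exists r.D$ step uses the axiom of choice when $I$ is infinite in order to simultaneously select witnesses $e_i$ across all components; this is harmless. The crucial conceptual point is that the inductive argument never needs to branch, which is precisely why disjunction is excluded from the target fragments: if $C_1 \sqcup C_2$ were allowed, one could not combine witnesses from different components into a single tuple in the product, matching the failure illustrated in Example~\ref{fig:9}.
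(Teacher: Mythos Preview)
Your proposal is correct and is precisely the straightforward structural induction the paper has in mind; the paper's own proof consists of the single word ``Straightforward,'' and you have filled in exactly the expected details.
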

\begin{proof}
Straightforward.
\end{proof}

\medskip
\noindent
{\bf Theorem~\ref{local1}}
Let $\mathcal{L}\in \{\mathcal{EL},\text{DL-Lite}_{\sf horn}\}$
and $\varphi(x)$ an FO-formula with free variable $x$. 
Then the following conditions are equivalent:
\begin{enumerate}
\item there exists an $\mathcal{L}$-concept $C$ such that 
$C^{\sharp}(x)\equiv \varphi(x)$;
\item $\varphi(x)$ is preserved under ${\mathcal{L}}$-simulation and direct
products.
\end{enumerate}

\medskip

\noindent
\begin{proof} It follows from Theorem~\ref{equivalence} and 
Lemma~\ref{prodeldl} that $\EL$ and DL-Lite$_{\sf horn}$-concepts
are preserved under the corresponding simulations and under forming direct
products. The direction $1 \Rightarrow 2$ follows.

For the direction $1\Rightarrow 2$, assume that $\varphi(x)$ is preserved
under $\Lmc$-simulations and direct products but is not equivalent to 
any $\Lmc$-concept. Let 
$$
\mathrm{cons}( \varphi(x)) = 
\{ C^\sharp(x) \mid \text{$C$ an $\Lmc$-concept}, \varphi(x)\models C^\sharp(x)\}.
$$
By compactness, $\mathrm{cons}(\varphi(x)) \cup \{\neg \varphi(x)\}$ is satisfiable.
Let $\Imc^{-}$ be an $\omega$-saturated interpretation satisfying
$\mathrm{cons}(\varphi(x)) \cup \{\neg \varphi(x)\}$ under an assignment $a_{2}(x)=d_{2}$.

Let $I$ be the set of $\Lmc$-concepts $C$ with $d_{2}\notin C^{\Imc^{-}}$.
For any $C\in I$, the set $\{\varphi(x),\lnot(C^\sharp(x))\}$ is satisfiable, 
because otherwise $\varphi(x) \models C^\sharp(x)$ and hence 
$C^\sharp(x)\in \mathrm{cons}(\varphi(x))$, a contradiction to 
$\Imc^{-}\models_{a_{2}} \mathrm{cons}(\varphi(x))$. 
Let $\Imc_C$ denote an interpretation such that for some $d_C\in \Delta^{\Imc_{C}}$ 
we have $\Imc_C\models \varphi[d_C]\wedge \neg C^\sharp[d_C]$.

Define
$$
\Imc = \prod_{C\in I} \Imc_{C}, \quad \overline{d}= (d_{C})_{C\in I}
$$
As $\varphi(x)$ is preserved under products, $\Imc\models \varphi[\overline{d}]$.
As $\mathcal L$ concepts are invariant under products (Lemma~\ref{prodeldl}), 
we have $\overline{d}\not\in C^{\Imc}$, for all $C\in I$. Thus $\overline{d}\in D^{\Imc}$
implies $d_{2}\in D^{\Imc^{-}}$, for all $\Lmc$-concepts $D$. Thus, we can take
an $\omega$-saturated interpretation $\Imc^{+}$ satisfying the same FO-sentences
as $\Imc$ and a $d_{1}\in \Delta^{\Imc^{+}}$ such that $\Imc^{+}\models \varphi[d_{1}]$ and 
$d_{1}\in D^{\Imc}$ implies $d_{2}\in D^{\Imc^{-}}$, for all $\Lmc$-concepts $D$.
It follows from Lemma~\ref{equivalence5} that
$(\Imc^{+},d_{1}) \leq_{\Lmc} (\Imc^{-},d_{2})$ and we have derived a contradiction
to the condition that $\varphi(x)$ is preserved under $\Lmc$-simulations.
\end{proof}

\section{Proofs for Section~\ref{sect:tboxdef}}

\noindent
{\bf Theorem~\ref{global}.}
  Let $\mathcal{L}\in {\sf ExpDL}$ and $\varphi$ an FO-sentence.  Then
  the following conditions are equivalent:
\begin{enumerate}

\item there exists a Boolean $\mathcal{L}$-TBox $\mathcal{T}$ such
  that $\mathcal{T}\equiv \varphi$;

\item $\varphi$ is invariant under $\sim_{\mathcal{L}}^{g}$.

\end{enumerate}

\noindent
\begin{proof} 
  For the direction $1 \Rightarrow 2$, let \Tmc be Boolean \Lmc-TBox
  \Tmc such that $\mathcal{T}\equiv \varphi$ and assume w.l.o.g.\ that
  $\Tmc = \{ \top \sqsubseteq C_\Tmc \}$.  Let $\Imc_1$ and $\Imc_2$
  be interpretations such that $\Imc_1 \models \vp$ and $\Imc_1
  \sim_{\mathcal{L}}^{g} \Imc_2$. Then $\Imc_1 \models \Tmc$, thus
  $C_\Tmc^{\Imc_1} = \Delta^{\Imc_1}$. Since $\Imc_1
  \sim_{\mathcal{L}}^{g} \Imc_2$ and by Point~1 of
  Theorem~\ref{equivalence}, this yields $C_\Tmc^{\Imc_2} =
  \Delta^{\Imc_2}$, thus $\Imc_2 \models \vp$.

  For $2 \Rightarrow 1$, let $\vp$ be invariant under
  $\sim_{\mathcal{L}}^{g}$ and consider the set $\mn{cons}(\vp)$ of
  Boolean \Lmc-TBoxes that are implied by~$\vp$.  We are done if we
  can show that $\mn{cons}(\vp) \models \vp$, because by compactness
  there then is a finite $\Gamma \subseteq \mn{cons}(\vp)$ with
  $\Gamma \models \vp$, thus $\bigwedge \Gamma$ is the desired Boolean
  \Lmc-TBox. Assume to the contrary that $\mn{cons}(\vp) \not\models
  \vp$. Our aim is to construct $\omega$-saturated interpretations
  $\Imc^-$ and $\Imc^+$ such that $\Imc^- \not\models \vp$, $\Imc^+
  \models \vp$, and $\Imc^- \equiv^g_\Lmc \Imc^+$, i.e., for all
  $d_{1}\in \Delta^{\mathcal{I}_{1}}$ there exists $ d_{2}\in
  \Delta^{\mathcal{I}_{2}}$ with
  $(\mathcal{I}_{1},d_{1})\equiv_{\mathcal{L}}
  (\mathcal{I}_{2},d_{2})$ and vice versa. By
  Theorem~\ref{equivalence}, this implies $\Imc^- \sim^g_\Lmc \Imc^+$,
  in contradiction to $\vp$ being invariant under $\sim^g_\Lmc$. We
  start with $\Imc^-$, which is any model of $\mn{cons}(\vp) \cup
  \{\neg \vp\}$. Let $\Gamma$ be the set of all \Lmc-concept literals
  true in $\Imc^-$, where a \emph{concept literal} is a concept
  inclusion or the negation thereof. We have that $\Gamma \cup \{ \vp
  \}$ is satisfiable: if this is not the case, then by compactness
  there is a finite $\Gamma_f \subseteq \Gamma$ with $\Gamma_f \cup \{
  \vp \}$ unsatisfiable, thus the Boolean TBox $\neg \bigwedge
  \Gamma_f$ is in $\mn{cons}(\vp)$, in contradiction to the existence
  of $\Imc^-$. Let $\Imc^+$ be a model of $\Gamma \cup \{ \vp \}$.  By
  Theorem~\ref{exists}, we can assume w.l.o.g.\ that $\Imc^-$ and
  $\Imc^+$ are $\omega$-saturated.

  It remains to show that $\Imc^- \equiv^g_\Lmc \Imc^+$, based on the
  fact that $\Imc^-$ and $\Imc^+$ satisfy the same \Lmc-concept
  inclusions (namely those that occur positively in $\Gamma$). Take a
  $d \in \Delta^{\Imc^-}$. We have to show that there is an $e \in
  \Delta^{\Imc^+}$ with $t^{\Imc^-}_\Lmc(d) = t^{\Imc^+}_\Lmc(e)$. For
  any finite $\Gamma_f \subseteq t^{\Imc^-}_\Lmc(d)$, there is an
  $e_{\Gamma_f} \in \Delta^{\Imc^+}$ such that $e_{\Gamma_f} \in
  (\midsqcap \Gamma_f)^{\Imc^+}$: since $\Imc^-$ does not satisfy
  $\top \models \neg \midsqcap \Gamma_f$, neither does $\Imc^+$, which
  yields the desired $e_{\Gamma_f}$. As $\Imc^+$ is
  $\omega$-saturated, the existence of the $e_{\Gamma_f}$ for all
  finite $\Gamma_f \subseteq t^{\Imc^-}_\Lmc(d)$ implies the existence
  of an $e \in \Delta^{\Imc^+}$ such that $e \in C^{\Imc^+}$ for all
  $C \in \Gamma$. It follows that $t^{\Imc^-}_\Lmc(d) =
  t^{\Imc^+}_\Lmc(e)$. The direction from $\Imc^+$ to $\Imc^-$ is
  analogous.
\end{proof}
Before we come to the proof of Theorem~\ref{ALCItoALC}, we introduce
some notation that will be used in other proofs as well.

We assume that $\ALCI$-concepts are defined using conjunction, negation, and existential
restrictions. Other connectives such as disjunction and value restrictions will be used as abbreviations.
Thus, in definitions and in inductive proofs, we only consider concepts constructed using
those three constructors.

Define the \emph{role depth} $\text{rd}(C)$ of an $\ALCI$-concept $C$ in the
usual way as the number of nestings of existential restrictions in $C$. The role
depth $\text{rd}(\Tmc)$ of a TBox $\Tmc$ is the maximum of all ${\sf rd}(C)$ such
that $C$ occurs in $\Tmc$. By $\text{sub}(\Tmc)$ we denote the closure under single
negation of the set of subconcept of concepts that occur in $\Tmc$.
A \emph{$\Tmc$-type} $t$ is a subset of $\text{sub}(\Tmc)$ such that 
\begin{itemize}
\item $C\in t$ or $\neg C\in t$ for all $\neg C\in \text{sub}(\Tmc)$;
\item $C \sqcap D \in t$ iff $C\in t$ and $D\in t$, for all 
$C \sqcap D\in \text{sub}(\Tmc)$.
\end{itemize}
By $\text{tp}$ we denote the set of all $\Tmc$-types and by
$\text{tp}(\Tmc)$ the set of all $\Tmc$-types that are satisfiable in a model of $\Tmc$.
A $t\in \text{tp}$ is \emph{realized} by an object
$(\Imc,d)$ if $C\in d^{\Imc}$ for all $C\in t$. We also set
$$
t^{\Imc}(d)= \{ C\in {\sf sub}(\Tmc)\mid d\in C^{\Imc}\}
$$
For an inverse role $r$, we denote by $r^{-}$ the role name $s$ with $r=s^{-}$.
We say that two $\Tmc$-types $t_{1},t_{2}$ are \emph{coherent for a role $r$}, in symbols
$t_{1} \leadsto_{r} t_{2}$, if $\neg \exists r.C \in t'$ implies $C\not\in t$
and $\neg \exists r^{-}.C \in t'$ implies $C\not\in t$.
Note that $t\leadsto_{r} t'$ iff $t' \leadsto_{r^{-}} t$.

\medskip

\noindent
{\bf Proof of Theorem~\ref{ALCItoALC}}
$\ALCI$-to-$\ALC$ TBox rewritability is decidable in 2-\ExpTime.

\medskip

The proof extends the type elimination method known from complexity proofs in modal logic.
Let $\Tmc$ be an $\ALCI$-TBox.
The idea is to decide non-$\ALC$-rewritability of $\Tmc$ by checking whether there is an interpretation
$\Imc_{1}$ refuting $\Tmc$ and an interpretation $\Imc_{2}$ satisfying $\Tmc$ 
such that $\Imc_{1}\sim_{\ALC}^{g}\Imc_{2}$. 
In the proof, we determine the set $Z$ of all pairs $(s,S)$ with $s\in {\sf tp}$ and $S\subseteq {\sf tp}$ 
such that there exist an object $(\Imc_{1},d)$, an interpretation $\Imc_{2}$, 
and a bisimulation $B$ between $\Imc_{1}$ and $\Imc_{2}$ such that ${\sf dom}(B)=\Delta^{\Imc_{1}}$ and 
\begin{itemize}
\item $\Imc_{2}$ is a model of $\Tmc$;
\item $s=t^{\Imc_{1}}(d)$;
\item $S = \{t^{\Imc_{2}}(d') \mid (d,d')\in B\}$.
\end{itemize} 
Clearly, $\Tmc$ is not $\ALC$-rewritable iff there exists $(s,S)\in Z$ 
such that $s\in {\sf tp}\setminus{\sf tp}(\Tmc)$. 
Denote by ${\sf Init}$ the set of all pairs $(s,S)$ such that
\begin{itemize}
\item $s\in {\sf tp}$;
\item $S\subseteq {\sf tp}(\Tmc)$;
\item for all $A\in \NC$ and $t,t'\in S\cup \{s\}$: $A\in t$ iff $A\in t'$.
\end{itemize}
We have ${\sf Init} \subseteq Z$ and ${\sf Init}$ can be determined in double exponential time.
Thus, a double exponential time algorithm computing $Z$ from ${\sf Init}$ is sufficient to prove 
the desired result. To formulate the algorithm, we have to lift the coherence relation $\leadsto_{r}$ between types 
to a coherence relation between members of ${\sf Init}$. For $r\in \NR$, set
\begin{itemize}
\item $S \leadsto_{r} S'$ if for every $t\in S$ there 
exists $t'\in S'$ with $t \leadsto_{r} t'$.
\item $(s,S) \leadsto_{r} (s',S')$ if 
$s \leadsto_{r} s'$ and $S \leadsto_{r} S'$;
\end{itemize}

Denote by ${\sf Final}$ the subset of ${\sf Init}$ that is the result of applying the
rules (r1) to (r3) from Figure~\ref{rules0} exhaustively to $Y:={\sf Init}$.
Clearly, ${\sf Final}$ is obtained from ${\sf Init}$ in at most double exponentially
many steps. Thus, we are done if we can prove the following result.

\begin{figure}[t]

\begin{center}
\begin{itemize}
\item[{(r1)}] If $(s,S)\in Y$ and $\exists r.C\in s$ 
with $r$ a role name and there does not exist $(s',S')\in Y$ with 
$C\in s'$ and $(s,S) \leadsto_{r} (s',S')$, then set $Y:=Y \setminus\{(s,S)\}$.


\item[{(r2)}] If $(s,S)\in Y$ and $\exists r.C\in s$ with
$r$ an inverse role and there does not exist $(s',S')\in Y$ with $C\in s'$
and $(s',S') \leadsto_{r^{-}} (s,S)$, then set $Y:=Y \setminus\{(s,S)\}$.

\item[{(r3)}] If $(s,S)\in Y$ and $\exists r.C\in t$ for some $t \in S$ with $r$ a role name,
and there do not exist $(s',S')\in Y$ and $t'\in S'$ with $C\in t'$,
$t \leadsto_{r} t'$, and $(s,S) \leadsto_{r} (s',S')$, then set $Y:=Y \setminus\{(s,S)\}$.
\end{itemize}
    \caption{Elimination Rules}
    \label{rules0}
  \end{center}
\end{figure}

\begin{lemma}
${\sf Final} = Z$.
\end{lemma}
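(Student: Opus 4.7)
I would prove the two inclusions $Z\subseteq\mathsf{Final}$ and $\mathsf{Final}\subseteq Z$ separately, following the shape of a standard type-elimination correctness argument adapted to pairs $(s,S)$ that simultaneously encode information about $\Imc_1$ and $\Imc_2$.

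For $Z\subseteq\mathsf{Final}$, I first observe $Z\subseteq\mathsf{Init}$: given $(s,S)\in Z$ witnessed by $(\Imc_1,d)$, $\Imc_2$ and a bisimulation $B$, the type $s$ is a $\Tmc$-type, each $t\in S$ is realised in $\Imc_2\models\Tmc$ and so lies in $\mathsf{tp}(\Tmc)$, and [Atom] of $B$ forces $s$ and every $t\in S$ to agree on all concept names. I then prove by induction on the rule-application sequence that every intermediate set $Y$ satisfies $Y\supseteq Z$. Suppose (r1) tries to remove $(s,S)\in Z$ because of $\exists r.C\in s$ with $r\in\NR$: pick $d'\in r^{\Imc_1}(d)\cap C^{\Imc_1}$, set $s'=t^{\Imc_1}(d')$ and $S'=\{t^{\Imc_2}(e')\mid(d',e')\in B\}$, and check that $(s',S')\in Z$ (same witness with $d'$ in place of $d$), that $C\in s'$, and that $(s,S)\leadsto_r(s',S')$ (the component $s\leadsto_r s'$ is immediate, and $S\leadsto_r S'$ follows from [Forth] of $B$ applied to each $(d,e)\in B$ with $t^{\Imc_2}(e)\in S$). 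By the inductive hypothesis $(s',S')\in Y$, contradicting the assumption that (r1) fires. The case of (r2) is symmetric, using an inverse-role predecessor of $d$ in $\Imc_1$. For (r3), if $\exists r.C\in t\in S$ is realised by some $e$ with $(d,e)\in B$ through an $r$-successor $e'\in C^{\Imc_2}$, then [Back] of $B$ supplies $d'\in r^{\Imc_1}(d)$ with $(d',e')\in B$, and the pair $(s',S')$ derived from $d'$, together with $t'=t^{\Imc_2}(e')\in S'$, is the required witness.

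For $\mathsf{Final}\subseteq Z$, fix $(s^\ast,S^\ast)\in\mathsf{Final}$ and construct a witness as follows. Let $\Delta^{\Imc_1}=\mathsf{Final}$, set $A^{\Imc_1}=\{(s,S)\mid A\in s\}$, and for each $r\in\NR$ add $((s,S),(s',S'))\in r^{\Imc_1}$ whenever $(s,S)\leadsto_r(s',S')$. Induction over subconcepts, with (r1) supplying witnesses for role-name existentials in $s$ and (r2) for inverse-role existentials, then yields $(s,S)\in C^{\Imc_1}$ iff $C\in s$. For $\Imc_2$ introduce one element $e_{(s,S),t}$ per $(s,S)\in\mathsf{Final}$ and $t\in S$, with atoms read off $t$, and add role-name edges among them exactly as prescribed by (r3). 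To satisfy $\Imc_2\models\Tmc$, attach to each $e_{(s,S),t}$ a private tree of role-name predecessors imported from a pointed model $\Jmc_t\models\Tmc$ realising $t$, identifying the root with $e_{(s,S),t}$; these attachments provide witnesses for all inverse-role existentials in $t$. Set $B=\{((s,S),e_{(s,S),t})\mid(s,S)\in\mathsf{Final},\,t\in S\}$ and $d=e_{(s^\ast,S^\ast)}$. Then [Atom] is immediate from $\mathsf{Init}$'s concept-name agreement, [Forth] comes from the $\leadsto_r$-edges of $\Imc_1$ being matched by (r3)-edges of $\Imc_2$, and [Back] follows from (r3) together with the fact that the attached private trees add only $r$-predecessors of $e_{(s,S),t}$, which lie outside $\mathrm{range}(B)$ and thus impose no [Back] obligation.

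The main obstacle is the construction of $\Imc_2$: $\Tmc$ is an $\ALCI$-TBox and the types $t\in S$ may demand inverse-role as well as role-name witnesses, whereas (r3) only controls role-name successors. The resolution uses that $\ALC$-bisimulation ignores predecessors, so each $\Jmc_t$ may be glued to $e_{(s,S),t}$ through incoming role-name edges while the outgoing role-name edges remain strictly those from (r3). Verifying that this stratified gluing introduces no unwanted outgoing role-name successors at the glue points and still realises every $\ALCI$-existential in every $t\in S$ is the delicate bookkeeping step; once done, the checks $\mathrm{dom}(B)=\Delta^{\Imc_1}$, $s^\ast=t^{\Imc_1}(d)$, $S^\ast=\{t^{\Imc_2}(d')\mid(d,d')\in B\}$, and $\Imc_2\models\Tmc$ are routine, giving $(s^\ast,S^\ast)\in Z$.
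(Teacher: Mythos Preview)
Your overall strategy coincides with the paper's: prove $Z\subseteq\mathsf{Final}$ by checking that none of the rules can remove an element of $Z$, and prove $\mathsf{Final}\subseteq Z$ by building $\Imc_1$, $\Imc_2$, and $B$ directly from $\mathsf{Final}$. Your construction of $\Imc_1$ and your definition of $B$ are exactly the paper's, and your treatment of (r1)--(r3) in the first direction is correct.

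The description of $\Imc_2$, however, has two genuine gaps. First, adding only the witness edges that (r3) demands is not enough for [Forth]. If $(s,S)\leadsto_r(s',S')$ in $\Imc_1$ and $t\in S$, you need an $r$-edge from $e_{(s,S),t}$ into the fibre over $(s',S')$; an (r3)-witness edge for some $\exists r.C\in t$ may well land over a different pair, and if $t$ happens to contain no $\exists r.C$ at all you have no outgoing $r$-edge whatsoever. The paper therefore defines $r^{\Jmc_2}$ by putting an edge from $(s,S,t)$ to $(s',S',t')$ whenever $t\leadsto_r t'$ and $(s,S)\leadsto_r(s',S')$, and uses (r3) only afterwards to argue that every role-name existential in $t$ is witnessed among these edges.

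Second, ``identifying the root of $\Jmc_t$ with $e_{(s,S),t}$'' cannot be made to work as stated: keeping the root's outgoing role-name edges breaks [Back], while deleting them means the glued piece is no longer a model of $\Tmc$ (the former successors of the root have lost a predecessor and may now violate inverse-role existentials). The paper avoids any merging. For each inverse-role existential $\exists r.C\in t$ separately it takes a fresh, disjoint model of $\Tmc$ with a distinguished point in $C$, and adds a single new role-name edge making that point a predecessor of $e=(s,S,t)$. The attached models remain intact models of $\Tmc$, $e$ acquires only incoming role-name edges (so [Back] is untouched), and with the distinguished points chosen coherent with $t$ no type is disturbed. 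Your instinct that the attachments should contribute only incoming edges is right; the fix is to attach per existential via one new edge rather than to identify roots.
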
 
\begin{proof}
We start by proving ${\sf Final} \subseteq Z$.
To this end, we construct $\Imc_{1}$, $\Imc_{2}$ and $B$ that witness $(s,S)\in Z$ for
all $(s,S)\in {\sf Final}$. We first construct $\Imc_{1}$.
Set
\begin{itemize}
\item $\Delta^{\Imc_{1}}= {\sf Final}$;
\item For $A\in \NC$: $A^{\Imc_{1}}= \{(s,S) \in \Delta^{\Imc_{1}} \mid A\in s\}$;
\item For $r\in\NR$:
$((s,S),(s',S'))\in r^{\Imc_{1}}$ iff $(s,S) \leadsto_{r} (s',S')$.
\end{itemize}
The proof of the following claim uses non-applicability of (r1) and (r2) to members of
${\sf Final}$:

\medskip

\noindent
Claim 1. For all $C\in {\sf sub}(\Tmc)$ and $(s,S)\in {\sf Final}$: $C\in s$ iff $(s,S) \in C^{\Imc_{1}}$.

\medskip

It follows that $t^{\Imc_{1}}(s,S)=s$ for all $(s,S)\in \Delta^{\Imc_{1}}$.

We now construct $\Imc_{2}$. First define $\Jmc_{2}$ by 
\begin{itemize}
\item $\Delta^{\Jmc_{2}}= \{ (s,S,t)\mid (s,S)\in {\sf Final}, t\in S\}$,
\item For $A\in \NC$: $A^{\Jmc_{2}}= \{(s,S,t) \in \Delta^{\Jmc_{2}} \mid A\in t\}$;
\item For $r\in \NR$: $((s,S,t),(s',S',t'))\in r^{\Jmc_{2}}$ iff $t \leadsto_{r} t'$
and $(s,S) \leadsto_{r} (s',S')$.
\end{itemize}
For $e=(s,S,t)\in \Delta^{\Jmc_{2}}$, take for every
$\exists r.C\in t$ with $r$ an inverse role, an object
$(\Jmc_{e,\exists r.C},e_{\exists r.C})$ such that
$\Jmc_{e,\exists r.C}$ satisfies $\Tmc$ and $e_{\exists r.C}
\in C^{\Jmc_{e,\exists r.C}}$. Assume those interpretations are disjoint
and let $\Jmc_{e}$ be defined by taking the union
of the $\Jmc_{e,\exists r.C}$ and adding $e$ to its domain
as well as $(e_{\exists r.C},e)\in r^{\Jmc_{e}}$. We may assume that
$\Delta^{\Jmc_{2}}\cap \Delta^{\Jmc_{e}}=\{e\}$ for all $e\in \Delta^{\Jmc_{2}}$.

Define $\Imc_{2}$ as the union of $\Jmc_{2}$ and 
all $\Imc_{e}$, $e\in \Delta^{\Jmc_{2}}$.
The following claim is proved using non-applicability of (r3) to ${\sf Final}$:

\medskip

\noindent
Claim 2. For all $C\in {\sf sub}(\Tmc)$ and $(s,S,t)\in \Delta^{\Imc_{2}}$:
$C\in t$ iff $(s,S,t) \in C^{\Imc_{2}}$.

\medskip

\noindent
It follows that $t^{\Imc_{2}}(s,S,t) = t$ for all $(s,S,t)\in \Delta^{\Imc_{2}}$ and, since $t\in {\sf tp}(\Tmc)$
for all such $t$, that $\Imc_{2}$ is a model of $\Tmc$.

Define $B$ as the set of all pairs $((s,S),(s,S,t))$ with $(s,S,t)\in \Delta^{\Jmc_{2}}$.

\medskip

\noindent
Claim 3. $B$ is a bisimulation.

\medskip

\noindent
To prove the claim, first
assume $(s,S)\in \Delta^{\Imc_{1}}$, $((s,S),(s',S'))\in r^{\Imc_{1}}$,
and $((s,S),(s,S,t))\in B$.
We have $(s,S) \leadsto_{r} (s',S')$. Hence $S \leadsto_{r} S'$ and so there
exists $t'\in S'$ with $t \leadsto_{r} t'$. We have $((s,S,t),(s,S',t'))\in r^{\Imc_{2}}$
and $((s',S'),(s',S',t'))\in B$, as required.

Now assume $(s,S,t)\in \Delta^{\Imc_{2}}$, $((s,S,t),(s',S',t'))\in r^{\Imc_{2}}$,
and $((s,S),(s,S,t))\in B$.
Then $((s,S),(s',S'))\in r^{\Imc_{2}}$
and $((s',S'),(s',S',t'))\in B$, as required.

\medskip

Using Claims~1 to 3 one can now use $\Imc_{1},\Imc_{2}$, and $B$ to show that 
${\sf Final}\subseteq Z$.

\medskip

We come to $Z \subseteq {\sf Final}$. Clearly, ${\sf Init}\supseteq Z$. Thus,
to prove that ${\sf Final} \supseteq Z$ it is sufficient to show that if $Y\supseteq Z$
and $Y'$ is the result of applying one of the rules (r1) to (r3) to $Y$, then $Y'\supseteq Z$.
We show this for (r2), the other rules are considered similarly.

Consider an application of (r2) that eliminates $(s,S)\in Y$ triggered by $\exists r.C\in s$.
Assume to the contrary of what has to be shown that $(s,S)\in Z$. Take interpretations $\Imc_{1}$, $\Imc_{2}$,
$d\in \Imc_{1}$ and a bisimulation $B$ between $\Imc_{1}$ and $\Imc_{2}$ 
with ${\sf dom}(B)=\Delta^{\Imc_{1}}$ that are a witness for this. 
As $s=t^{\Imc_{1}}(d)$, there exists $d'$ with $(d,d')\in r^{\Imc_{1}}$ and $C\in t^{\Imc_{1}}(d')$. 
Let $s'=t^{\Imc_{1}}(d')$ and $S' = \{t^{\Imc_{2}}(e') \mid (d',e')\in B\}$.
We have $(s',S')\in Z$, and so $(s',S')\in Y$.  
We show $(s',S') \leadsto_{r^{-}} (s,S)$ which is a contradiction to the applicability of (r2).
$s' \leadsto_{r^{-}} s$ is clear from $(d,d')\in r^{\Imc_{1}}$.  
$B$ is a bisimulation and $s:=r^{-}$ a role name. Thus, for every $(d',e')\in B$ there exists $e$ with 
$(e',e)\in s^{\Imc_{2}}$ such that $(d,e)\in B$. Thus, for every $t'\in S'$ there exists 
$t\in S$ with $t\leadsto_{s} t'$. We obtain $S' \leadsto_{r^{-}} S$, as required.
\end{proof}

\noindent
{\bf Theorem~\ref{thm:charnom}.}
Let $\Lmc\in \{\mathcal{ALCIO},\mathcal{ALCQIO}\}$ and $\varphi$ be an FO-sentence.
Then the following conditions are equivalent:
\begin{enumerate}

\item there exists an $\Lmc$-TBox $\Tmc$ such that $\Tmc \equiv \varphi$;

\item $\varphi$ is invariant under $\sim_{\Lmc}^{g}$ and nominal disjoint unions.

\end{enumerate}

\noindent
\begin{proof}
The proof of $1 \Rightarrow 2$ is straightforward and left to the reader.
Conversely, assume $\varphi$ is invariant under $\sim_{\Lmc}^{g}$ and under 
nominal disjoint unions but not equivalent to any $\Lmc$-TBox. Our proof strategy is
similar to the previous proofs. Let
$$
{\sf cons}(\varphi) = \{ C \sqsubseteq D \mid \varphi \models C \sqsubseteq D \text{ and $C,D$ are $\Lmc$-concepts}\}
$$
As in previous proofs, by compactness, ${\sf cons}(\varphi) \not\models \varphi$.
We now construct, using invariance under nominal disjoint unions,
interpretations $\Imc^{-}$ not satisfying $\varphi$ and $\Imc^{+}$
satisfying $\varphi$ such that $\Imc_{1}^{-} \equiv_{\Lmc}^{g} \Imc_{2}^{+}$. Assuming
$\omega$-saturatedness, we obtain $\Imc_{1}^{-} \sim_{\Lmc}^{g} \Imc_{2}^{+}$, and have derived
a contradiction. 
We start with the construction of $\Imc^{-}$.

For an interpretation $\Imc$ and $e,f\in \Delta^{\Imc}$, we set 
$e \sim^{R}_{\Imc} f$ iff there exists a (possibly empty) sequence 
$r_{1},\ldots,r_{n}$ of roles and $d_{0},\ldots,d_{n}$ such that 
$d_{0}=e$, $d_{n}=f$, and $(d_{i},d_{i+1})\in r^{\Imc}$ for all $i <n$.
 
Let $\Imc$ be an interpretation satisfying ${\sf cons}(\varphi)$ and refuting $\varphi$.
Assume, for simplicity, that $a^{\Imc}=b^{\Imc}$ for all $a,b$ that do not occur in $\varphi$.
Let $N$ denote the set of concepts all of the form
$$
\forall r_{1}.\cdots.\forall r_{n}.\neg\{a\},
$$
where $r_{1},\ldots,r_{n}$ are roles, $n\geq 0$ (thus the sequence can be empty),
and $a\in \NI$. Let $\Gamma$ denote the set of $\Lmc$-concepts $C$ such that
${\sf cons}(\varphi) \cup \{C^{\sharp}(x)\} \cup \{F^{\sharp}(x) \mid F\in N\}$ is 
satisfiable. Note that $\Gamma$ consists of exactly those $\Lmc$-concepts $C$ 
for which there exists an interpretation $\Jmc$ satisfying ${\sf cons}(\varphi)$ 
and a $d\in \Delta^{\Jmc}$ such that $d\in C^{\Jmc}$ and no nominal is interpreted
in the connected component generated by $d$.
 
Take for any $C\in \Gamma$ an interpretation $\Imc_{C}$ satisfying 
${\sf cons}(\varphi) \cup \{C^{\sharp}(x)\} \cup \{F^{\sharp}(x) \mid F\in N\}$.
Let $\Jmc_{C}$ denote the maximal component of $\Imc_{C}$ with 
${\sf Nom}(\Jmc_{C})=\emptyset$. Observe that $C$ is satisfied in $\Jmc_{C}$. 
Let $I= \Gamma \cup \{0\}$ and $\Jmc_{0}=\Imc_{0}=\Imc$. 
We can form the nominal disjoint union $\Imc^{-}= \sum_{i\in I}^{\sf nom}\Jmc_{i}$. Then 
\begin{itemize}
\item $\Imc^{-}$ refutes $\varphi$ (by condition (b));
\item $\Imc^{-}$ satisfies ${\sf cons}(\varphi)$;
\item for all $C\in \Gamma$, $C^{\Imc^{-}}\not=\emptyset$.
\end{itemize}
We can assume that $\Imc^{-}$ is $\omega$-saturated.

\medskip

\noindent
Claim 1. $\Gamma$ coincides with the set of concepts $C$ such that 
$\{\varphi,C^{\sharp}(x)\} \cup 
\{F^{\sharp}(x) \mid F\in N\}$ is satisfiable.

\medskip

To prove the claim assume there exists $C$ such that $\{\varphi,C^{\sharp}(x)\} \cup 
\{F^{\sharp}(x) \mid F\in N\}$ is not satisfiable, but ${\sf cons}(\varphi) \cup \{C^{\sharp}(c)\}\cup
\{F^{\sharp}(x) \mid F\in N\}$ is satisfiable. By compactness, 
$$
\varphi \models \bigsqcap_{F\in N'}F \sqsubseteq \neg C,
$$
for some finite subset $N'$ of $N$. 
But then $(\bigsqcap_{F\in N'}F \sqsubseteq \neg C)^{\sharp} \in {\sf cons}(\varphi)$ 
and we obtain a contradiction.

Let $X\subseteq \NI$ be a maximal set of individual names such that
\begin{itemize}
\item $a^{\Imc}\not\sim^{R}_{\Imc} b^{\Imc}$, for any two distinct $a,b\in X$;
\item for every $a\in \NI$ there is a $b\in X$ 
such that $a\sim^{R}_{\Imc} b$.
\end{itemize}
Note that $X$ is finite since $a^{\Imc}=b^{\Imc}$ for all $a,b$ that do not occur in 
$\varphi$.

\medskip

\noindent
Claim 2. For all $a\in X$, $\{\varphi\} \cup \{ C^{\sharp}(x) \mid C \in t^{\Imc}_{\Lmc}
(a^{\Imc})\}$
is satisfiable.

\medskip

Claim 2 follows from the fact that 
${\sf cons}(\varphi)\not\models C \sqsubseteq \bot$
for any $C\in t^{\Imc}_{\Lmc}(a^{\Imc})$, $a\in X$.

By Claim 1, we can take for every $C\in \Gamma$ an interpretation 
$\Imc_{C}'$ satisfying $\{\varphi,C^{\sharp}(x)\} \cup 
\{F^{\sharp}(x) \mid F\in N\}$. 
By Claim 2, we can take for every $a\in X$ an interpretation $\Imc_{a}$
satisfying $\{\varphi\} \cup \{ C^{\sharp}(x) \mid C \in t^{\Imc}_{\Lmc}(a^{\Imc})\}$. 

For $C\in \Gamma$, let $\Jmc_{C}'$ denote the maximal component of 
$\Imc_{C}'$ with ${\sf Nom}(\Jmc_{C}')=\emptyset$. 
Observe that $C$ is satisfied in $\Jmc_{C}'$. 

For $a\in X$, let $\Jmc_{a}$ denote the minimal component of $\Imc_{a}$ containing 
$a^{\Imc_{a}}$. Let $J= \Gamma \cup X$ and consider $\Imc^{+}=\sum^{\sf nom}_{j\in J}\Jmc_{j}$.
As $\varphi$ is preserved under nominal disjoint unions, $\Imc^{+}\models \varphi$.
We may assume that $\Imc^{+}$ is $\omega$-saturated. By definition,
\begin{itemize}
\item $t^{\Imc^{+}}_{\Lmc}(a^{\Imc^{+}}) = t^{\Imc^{-}}_{\Lmc}(a^{\Imc^{-}})$, for all $a\in \NI$;
\item for all $C\in \Gamma$, $C^{\Imc^{+}} \not=\emptyset$.
\end{itemize}
It follows that $\Imc^{-} \equiv_{\Lmc}^{g} \Imc^{+}$. Thus, $\Imc^{-} \sim_{\Lmc}^{g} \Imc^{+}$, 
and we have obtained a contradiction.
\end{proof}

\medskip

\noindent
{\bf Theorem~\ref{Boolnom}}
For Boolean $\mathcal{ALCIO}$-TBoxes, it is \ExpTime-complete to decide whether they
are equivalent to $\mathcal{ALCIO}$-TBoxes. This problem is co\NExpTime-complete for
Boolean $\mathcal{ALCQIO}$-TBoxes.

\medskip

\begin{proof} 
The lower bounds can be proved by a straightforward reduction from the \ExpTime-complete validity problem
for Boolean $\ALCIO$-TBoxes and the co-\NExpTime-complete validity problem for Boolean $\ALCQIO$-TBoxes, respectively.

Let $\Lmc\in \{\mathcal{ALCIO},\mathcal{ALCQIO}\}$.
The upper bound for $\Lmc$ is proved by a reduction to the validity problem for Boolean $\Lmc$-TBoxes.
Let $\vp$ be a Boolean $\Lmc$-TBox and let $X$ denote the set of nominals in $\vp$.
We may assume that $X\not=\emptyset$.
We reduce the problem of checking invariance under nominal disjoint unions of $\vp$.
Note that one can show by induction that it is sufficient to consider condition (a) for nominal 
disjoint unions of families $(\Imc_{i},\Jmc_{i})_{i\in I}$ in which ${\sf Nom}(\Jmc_{i}) \cap X=\emptyset$ for at most one
$i\in I$. Similarly, it is sufficient to consider condition (b) for nominal 
disjoint unions of families $(\Imc_{i},\Jmc_{i})_{i\in I}$ with $I$ of cardinality $2$.

With any partition $\Xi = \{X_{1},\ldots,X_{n}\}$ of $X$ (in which one $X_{i}$ can be the empty set)
we associate 
\begin{itemize}
\item a Boolean $\Lmc$-TBox $\vp_{\Xi}^{1}$ such that condition~(a) for invariance under nominal disjoint
unions holds for $\vp$ iff $\vp_{\Xi}^{1}$
is valid for all $\Xi$;
\item a Boolean $\Lmc$-TBox $\vp_{\Xi}^{2}$ such that condition~(b) for invariance under nominal disjoint unions 
holds for $\vp$ iff $\vp_{\Xi}^{2}$
is valid for all $\Xi$.
\end{itemize}
Assume $\Xi=\{X_{1},\ldots,X_{n}\}$ is given. 

To construct $\vp_{\Xi}^{1}$, choose concepts names $A_{1},\ldots,A_{n}$ and $B_{1},\ldots,B_{n}$. 
Denote by $\vp_{C}$ the relativization of $\varphi$ to $C$; i.e.,
the Boolean TBox such that any interpretation $\Imc$ is a model of $\vp_{C}$ iff the restriction of $\Imc$ to 
$C^{\Imc}$ is a model of $\vp$. Now let 
$$
\vp_{\Xi}^{1}= ((\chi \wedge (\bigwedge_{1\leq i \leq n}\vp_{B_{i}}) \rightarrow \vp_{C}),
$$
where $C = \bigsqcup_{1\leq i \leq n}A_{i}$ and $\chi$ is the conjunction of
\begin{itemize}
\item $A_{i} \sqsubseteq B_{i}$, for $1\leq i \leq n$;
\item $A_{i} \sqsubseteq \forall r.A_{i}$ for all roles $r$ in $\vp$ and $1\leq i \leq n$;
\item $\{a\} \sqsubseteq A_{i}$, for all $a\in X_{i}$ and $1\leq i \leq n$;
\item $B_{i} \sqcap B_{j}\sqsubseteq \bot$, for $1\leq i <j \leq n$;
\item $\neg (A_{i}\sqsubseteq \bot)$ for $1\leq i \leq N$;
\item $B_{i}\sqsubseteq \forall r.B_{i}$ for all roles $r$ in $\vp$ and $1\leq i \leq n$.
\end{itemize}
To prove our claim, observe that in any interpretation $\Imc$ satisfying $\chi$,
the interpretations $\Jmc_{i}$, $1\leq i\leq n$, induced by $A_{i}^{\Imc}$ and 
$\Imc_{i}$, $1\leq i\leq n$, induced by
$B_{i}^{\Imc_{i}}$ satisfy the conditions for nominal disjoint unions. 

To construct $\vp_{\Xi}^{2}$, choose concept names $A_{1},A_{2}$, and $B_{1},B_{2}$.
Then let
$$
\vp_{\Xi}^{2} = ((\chi \wedge \vp_{A_{1}\sqcup A_{2}}) \rightarrow \vp_{A_{1}}),
$$
where $\chi$ is the conjunction of
\begin{itemize}
\item $A_{1} \equiv B_{1}, A_{2}\sqsubseteq B_{2}$;
\item $A_{i} \sqsubseteq \forall r.A_{i}$ for all roles $r$ in $\vp$ and $i=1,2$;
\item $\{a\} \sqsubseteq A_{1}$, for all $a\in X$;
\item $B_{1} \sqcap B_{2}\sqsubseteq \bot$;
\item $\neg (A_{i}\sqsubseteq \bot)$ for $i=1,2$;
\item $B_{2}\sqsubseteq \forall r.B_{2}$ for all roles $r$ in $\vp$.
\end{itemize}
\end{proof}

\section{Proofs for Section~\ref{LightweightDLs}}

\medskip

\noindent
{\bf Theorem~\ref{DLLite-TBox}.}
Let $\Lmc\in \{\EL,\text{DL-Lite}_{horn}\}$ and let $\varphi$ be a first-order sentence.
The following conditions are equivalent:
\begin{enumerate}
\item $\varphi$ is equivalent to an $\Lmc$-TBox;
\item $\varphi$ is invariant under $\sim_{\Lmc}^{g}$ and disjoint unions, and preserved under
products.
\end{enumerate}

\noindent
\begin{proof}
  The proof of 1 $\Rightarrow$ 2 is straightforward. For the converse
  direction, in principle we follow the strategy of the proof of
  Theorem~\ref{thm:withdisjunion}. A problem is posed by the fact
  that, unlike in the case of expressive DLs, two $\omega$-saturated
  interpretations $\Imc^-$ and $\Imc^+$ that satisfy the same
  $\Lmc$-CIs need not satisfy $\Imc^- \equiv^g_\Lmc \Imc^+$ (e.g.\
  when $\Imc^-$ consists of two elements that satisfy $A$ and $B$,
  respectively, and $\Imc^+$ consists of two elements that satisfy no
  concept name and $A,B$, respectively).  To deal with this, we ensure
  that $\Imc^-$ and $\Imc^+$ satisfy the same \emph{disjunctive
    \Lmc-CIs}, i.e., CIs of the form $C \sqsubseteq D_1 \sqcup \cdots
  \sqcup D_n$ with $C, D_1,\dots,D_n$ \Lmc-concepts; this suffices to
  prove $\Imc^- \equiv_g \Imc^+$ as required.

  Let ${\sf cons}(\varphi)$ be the set of all \Lmc-CIs that are a
  consequence of $\vp$ and ${\sf cons}^\sqcup(\varphi)$ set of all
  disjunctive \Lmc-CIs that are a consequence of $\mn{cons}(\vp)$. As
  before, we are done when ${\sf cons}(\varphi) \models \varphi$, thus
  assume the opposite and derive a contradiction.

  Our aim is to construct interpretations $\Imc^-$ and $\Imc^+$ such
  that $\Imc^- \not\models \vp$, $\Imc^+ \models \vp$, and both
  $\Imc^-$ and $\Imc^+$ satisfy precisely those disjunctive \Lmc-CIs
  that are in ${\sf cons}^\sqcup(\varphi)$.

  $\Imc^-$ is constructed as follows.  For every disjunctive \Lmc-CI
  $C \sqsubseteq D_1 \sqcup \cdots \sqcup D_n \notin {\sf
    cons}^\sqcup(\varphi)$, take a model $\Imc_{C\not\sqsubseteq D_1
    \sqcup \cdots \sqcup D_n}$ of ${\sf cons}(\varphi)$ that violates
  \mbox{$C \sqsubseteq D_1 \sqcup \cdots \sqcup D_n$}. Then $\Imc^-$
  is the disjoint union of all $\Imc_{C\not\sqsubseteq D_1 \sqcup
    \cdots \sqcup D_n}$ and a model of \mbox{$\mn{cons}(\vp) \cup \{
    \neg \vp \}$}.  Clearly, $\Imc^-$ satisfies the desired properties.

  To construct~$\Imc^+$,
  first take for every \Lmc-CI $C \sqsubseteq D \notin {\sf
    cons}(\varphi)$ a model $\Imc_{C\not\sqsubseteq D}$ of $\varphi$
  that violates $C \sqsubseteq D$. Second, take for every disjunctive
  \Lmc-CI $C \sqsubseteq D_1 \sqcup \cdots \sqcup D_n \notin {\sf
    cons}^\sqcup(\varphi)$ the product
$$
\Jmc_{C \not\sqsubseteq D_{1}\sqcup \cdots \sqcup D_{n}} = \prod_{1\leq i \leq n}\Imc_{C \not\sqsubseteq D_{i}}
$$
Since $\vp$ is preserved under products and by Lemma~\ref{prodeldl},
each $\Jmc_{C \not\sqsubseteq (D_{1}\sqcup \cdots \sqcup D_{n})}$ is a
model of $\vp$ that violates $C \sqsubseteq D_{1}\sqcup \cdots \sqcup
D_{n}$. By defining $\Imc^+$ as the disjoint union of all $\Jmc_{C
  \not\sqsubseteq D_{1}\sqcup \cdots \sqcup D_{n}}$, we clearly attain
the properties desired for~$\Imc^+$.

It remains to show that $\Imc^- \equiv^g_\Lmc \Imc^+$, as then
Theorem~\ref{equivalence} implies $\Imc^- \sim^g_\Lmc \Imc^+$, in
contradiction to $\vp$ being invariant under $\sim^g_\Lmc$. We can
assume w.l.o.g.\ that $\Imc^-$ and $\Imc^+$ are $\omega$-saturated.
Take a $d \in \Delta^{\Imc^-}$. We have to show that there is an $e
\in \Delta^{\Imc^+}$ with $t^{\Imc^-}_\Lmc(d) =
t^{\Imc^+}_\Lmc(e)$. Let $\Gamma^+=t^{\Imc^-}_\Lmc(d)$ be the set of
\Lmc-concepts satisfied by $d$ in $\Imc^-$ and $\Gamma^-$ the set of
\Lmc-concepts not satisfied by $d$ in $\Imc^-$.  For any finite
$\Gamma^-_f \subseteq \Gamma^-$ and $\Gamma^+_f \subseteq \Gamma^+$,
there is an $e_{\Gamma^-_f,\Gamma^+_f} \in \Delta^{\Imc^+}$ such that
$e_{\Gamma^-_f,\Gamma^+_f} \in (\midsqcap \Gamma^+_f \sqcap \midsqcap
\Gamma^-_f)^{\Imc^+}$: since $\Imc^-$ does not satisfy $\midsqcap
\Gamma^+_f \sqsubseteq \midsqcup \Gamma^-_f$ neither does $\Imc^+$,
which yields the desired $e_{\Gamma^-_f,\Gamma^+_f}$. As $\Imc^+$ is
$\omega$-saturated, the existence of the $e_{\Gamma^-_f,\Gamma^+_f}$
implies the existence of an $e \in \Delta^{\Imc^+}$ such that $e \in
C^{\Imc^+}$ for all $C \in \Gamma^+$ and $e \notin C^{\Imc^+}$ for all
$C \in \Gamma^-$, i.e., $t^{\Imc^-}_\Lmc(d) =
t^{\Imc^+}_\Lmc(e)$. The direction from $\Imc^+$ to $\Imc^-$ is
analogous.

\end{proof}

We devide the proof of Theorem~\ref{ALCdown} into two parts and
reserve a subsection for each part.

\subsection{Proof of Theorem~\ref{ALCdown}: Invariance under $\sim^{g}_{\EL}$}

In this subsection, we prove the following result:

\begin{theorem}
The problem of deciding whether an $\mathcal{ALC}$-TBox $\Tmc$ is 
invariant under $\sim_{\EL}^{g}$ is \ExpTime-complete.
\end{theorem}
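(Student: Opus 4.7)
The plan is to adapt the type-elimination method. Observe that $\Tmc$ is not invariant under $\sim^g_{\EL}$ iff there exist $\omega$-saturated interpretations $\Imc^-, \Imc^+$ with $\Imc^- \not\models \Tmc$, $\Imc^+ \models \Tmc$, and $\Imc^- \sim^g_{\EL} \Imc^+$; by Theorem~\ref{equivalence} and Lemma~\ref{equivalence5}, the equisimilarity of two $\omega$-saturated elements reduces to agreement of $\EL$-types, itself witnessed locally by two $\EL$-simulations in opposite directions. The algorithm therefore searches, at the level of $\Tmc$-types, for a witness pair $(s_0,t_0)$ such that $s_0$ violates some CI of $\Tmc$, $t_0 \in \mn{tp}(\Tmc)$, and $(s_0,t_0)$ admits an $\EL$-equisimulation coherence structure.

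Concretely, first compute $\mn{tp}(\Tmc)$ by standard $\ALC$ type elimination in \ExpTime. Then compute two greatest fixed-point relations $R_\leq, R_\geq \subseteq \mn{tp} \times \mn{tp}(\Tmc)$ accounting for one-sided $\EL$-simulations at the type level: $(s,t) \in R_\leq$ iff $A \in s \Rightarrow A \in t$ for every $A \in \NC$ and for each $\exists r.C \in s$ there exists $(s',t') \in R_\leq$ with $C \in s'$, $s \leadsto_r s'$, $t \leadsto_r t'$; and $(s,t) \in R_\geq$ dually, with $A \in t \Rightarrow A \in s$ and the Forth triggered by $\exists r.C \in t$. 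Finally compute the greatest $R_\sim \subseteq \mn{tp} \times \mn{tp}(\Tmc)$ of pairs $(s,t)$ that agree on $\NC$ and whose Forth witnesses from $R_\leq$ (for $\exists r.C \in s$) and from $R_\geq$ (for $\exists r.C \in t$) can each be completed into members of $R_\sim$, i.e.\ the resulting successor type on the left (resp.\ right) can be paired with \emph{some} $\mn{tp}(\Tmc)$-type yielding a member of $R_\sim$. Output ``not invariant'' iff some $(s,t) \in R_\sim$ satisfies $C \in s$ and $D \notin s$ for some $C \sqsubseteq D \in \Tmc$. Each of the three fixed points runs in time polynomial in $|\mn{tp} \times \mn{tp}(\Tmc)| = 2^{O(|\Tmc|)}$, giving an \ExpTime upper bound.

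Soundness follows by a direct coinductive extraction: given witnesses $\Imc^-, \Imc^+$ of non-invariance with a refuting $d \in \Imc^-$ and $\sim_\EL$-partner $e \in \Imc^+$ (all $\omega$-saturated), the relation $\{(t^{\Imc^-}(d'), t^{\Imc^+}(e')) \mid d' \leq_\EL e'\}$ and its converse satisfy the defining conditions of $R_\leq, R_\geq$, and one checks that $(t^{\Imc^-}(d), t^{\Imc^+}(e))$ lies in $R_\sim$. For completeness, from $(s_0,t_0) \in R_\sim$ with refuting $s_0$, build $\Imc^+$ as a canonical model of $\Tmc$ realising each $\mn{tp}(\Tmc)$-type, and build $\Imc^-$ as a tree-like unfolding rooted at an element of type $s_0$: at every node of type $s$ paired with partner type $t$, spawn $r$-successors both for each $\exists r.C \in s$ (using the $R_\leq$-Forth witness) and for each $\exists r.C \in t$ (using the $R_\geq$-Forth witness), and augment $\Imc^-$ with ``mirror'' components indexed by $\mn{tp}(\Tmc)$ so that every element of $\Imc^+$ has an $\EL$-equisim partner in $\Imc^-$.

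The main obstacle lies in controlling the construction of $\Imc^-$: at each node the two sets of demanded successors must coexist coherently as $r$-successors of that node (respecting $\leadsto_r$), and each new successor must itself lie in the projection of $R_\sim$ so that the unfolding can continue; the nested definition of $R_\sim$ via $R_\leq, R_\geq$ is engineered precisely to guarantee this. For the matching \ExpTime lower bound, reduce from $\ALC$-TBox satisfiability: given $\Tmc$, let $A, B, r$ be fresh and set $\Tmc' = \Tmc \cup \{A \sqsubseteq \forall r.B\}$; then $\Tmc$ is satisfiable iff $\Tmc'$ is not invariant under $\sim^g_{\EL}$. If $\Tmc$ is unsatisfiable so is $\Tmc'$, giving invariance vacuously; otherwise, for any $\Jmc \models \Tmc$ one attaches small $A,B,r$-gadgets to $\Jmc$---for instance $\Imc_1 = \Jmc \sqcup \{d \in A, d_B \in B, d^*, (d,d_B) \in r\}$ and $\Imc_2 = \Jmc \sqcup \{e \in A, e_1 \in B, e_2, (e,e_1), (e,e_2) \in r\}$ with $d^*, e_2$ unlabelled---and verifies $\Imc_1 \models \Tmc'$, $\Imc_2 \not\models \Tmc'$, and $\Imc_1 \sim^g_{\EL} \Imc_2$ directly, witnessing non-invariance.
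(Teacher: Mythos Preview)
Your lower bound reduction is essentially the paper's, though your gadget has a small bug: you add fresh elements $d,d_B,d^*$ (resp.\ $e,e_1,e_2$) by disjoint union with a model $\Jmc$ of $\Tmc$, but these fresh elements carry no $\mn{sig}(\Tmc)$-structure and may violate $\Tmc$ (e.g.\ if $\Tmc$ contains $\top\sqsubseteq\exists s.\top$). The paper avoids this by building the gadget on existing elements of a model of $\Tmc$; you should do the same.

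The upper bound, however, has a genuine gap. The paper explicitly remarks that the ``straightforward idea'' of eliminating pairs $(s,t)\in\mn{tp}\times\mn{tp}(\Tmc)$ does \emph{not} work, and your $R_\leq,R_\geq,R_\sim$ construction is essentially that idea. The difficulty is this: suppose at a node $d$ in $\Imc^-$ you record the pair $(s,t)$, meaning $d$ should be $\sim_\EL$ to some $e\in\Imc^+$ of type $t$. For $\exists r.C\in s$ you spawn an $r$-successor $d'$ of type $s'$ via an $R_\leq$-witness $(s',t')$, and then you \emph{re-pair} $d'$ with some fresh $t''$ such that $(s',t'')\in R_\sim$. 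But the original $\leq_\EL$-simulation from $d$ to $e$ requires $d'\leq_\EL e'$ for an $r$-successor $e'$ of $e$ of type $t'$, and this obligation persists: every $r$-successor $d''$ of $d'$ must have a $\leq_\EL$-partner among the $r$-successors of $e'$. Your construction forgets $t'$ at $d'$ and only tracks $t''$, so you have no guarantee that the successors you create at $d'$ (governed by $t''$) can be simulated below $e'$ (governed by $t'$). As you descend, each level adds one such forgotten obligation, so single pairs cannot suffice.

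The paper resolves this by tracking \emph{sequences} of types $(t,s,s_0,t_1,s_1,\ldots,t_l,s_l)$ whose length $l$ grows by one at each step along the $\Imc^-$-side; the key observation (their Lemma preceding the elimination procedure) is that one can always replace a $\leq_\EL$-constraint by a $\sim_\EL$-constraint on a sibling, and since the role depth remaining for the $\Imc^-$-type decreases by one at each step, the sequence length is bounded by $\mn{rd}(\Tmc)$. This bounds the number of tuples by $2^{O(|\Tmc|^2)}$ and yields the \ExpTime\ upper bound. Your proposal lacks both this sequence machinery and the crucial ``maximal $\leq_\EL$-successor'' lemma that justifies it.
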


The lower bound proof is straightforward by a reduction of the
\ExpTime-hard satisfiability problem for $\ALC$-TBoxes:

\begin{lemma}
Let $\Tmc$ be an $\ALC$-TBox. The following conditions are equivalent
\begin{enumerate}
\item $\Tmc$ is satisfiable;
\item $\Tmc'=\Tmc\cup \{A \sqsubseteq \forall r.B\}$ is not
invariant under $\sim_{\EL}^{g}$ (where $A$, $B$, and $r$ are fresh).
\end{enumerate}
\end{lemma}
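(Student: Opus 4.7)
The direction $2 \Rightarrow 1$ is immediate by contraposition: if $\Tmc$ is unsatisfiable, then $\Tmc' \supseteq \Tmc$ has no models, so invariance of $\Tmc'$ under $\sim_{\EL}^{g}$ (or indeed any relation on interpretations) holds vacuously.

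For $1 \Rightarrow 2$, given a model $\Jmc$ of $\Tmc$, my plan is to construct interpretations $\Imc_1, \Imc_2$ such that $\Imc_1 \models \Tmc'$, $\Imc_2 \not\models \Tmc'$, and $\Imc_1 \sim_{\EL}^{g} \Imc_2$. I would fix $d_0 \in \Delta^{\Jmc}$ and take three pairwise disjoint isomorphic copies $\Jmc_1, \Jmc_2, \Jmc_3$ of $\Jmc$, writing $e_i \in \Delta^{\Jmc_i}$ for the copy of $e \in \Delta^{\Jmc}$. Both $\Imc_1$ and $\Imc_2$ share the domain $\Delta^{\Jmc_1} \cup \Delta^{\Jmc_2} \cup \Delta^{\Jmc_3}$ and interpret every symbol of $\Tmc$ as in the disjoint union of the three copies. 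On the fresh symbols, both set $A = \{d_1\}$ and $B = \{d_2\}$, while $\Imc_1$ takes $r = \{(d_1, d_2)\}$ and $\Imc_2$ takes $r = \{(d_1, d_2), (d_1, d_3)\}$. Since $A, B, r$ do not occur in $\Tmc$ and $\ALC$-TBoxes are invariant under disjoint unions (Theorem~\ref{thm:withdisjunion}), both $\Imc_1$ and $\Imc_2$ are models of $\Tmc$. The added axiom $A \sqsubseteq \forall r.B$ holds in $\Imc_1$ (the sole $A$-element $d_1$ has unique $r$-successor $d_2 \in B$) but fails in $\Imc_2$ (via $d_3 \notin B^{\Imc_2}$), so $\Imc_1 \models \Tmc'$ and $\Imc_2 \not\models \Tmc'$.

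The final step is to establish $\Imc_1 \sim_{\EL}^{g} \Imc_2$ by exhibiting $\EL$-simulations in both directions, each containing the pair $(d, d)$ for every element $d$ of the common domain. In the direction $\Imc_1 \to \Imc_2$ the identity itself is an $\EL$-simulation, because concept names agree and $r^{\Imc_1} \subseteq r^{\Imc_2}$. In the direction $\Imc_2 \to \Imc_1$ I would use $S := \{(d, d) : d \in \Delta^{\Imc_1}\} \cup \{(e_3, e_2) : e \in \Delta^{\Jmc}\}$. The only nontrivial [Forth] check concerns $(d_1, d_1) \in S$: the extra $r$-successor $d_3$ of $d_1$ in $\Imc_2$ is matched against the unique $r$-successor $d_2$ of $d_1$ in $\Imc_1$ via $(d_3, d_2) \in S$, and [AtomR] for this pair holds because $d_3$ lies in neither $A^{\Imc_2}$ nor $B^{\Imc_2}$ while $d_2$ and $d_3$ share the $\Tmc$-type of $d_0$. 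The main obstacle is precisely this matching step, since every $r$-successor of $d_1$ in $\Imc_1$ lies in $B$ whereas $d_3$ does not; it is resolved by the one-sidedness of [AtomR], which permits simulating a non-$B$ source by a $B$ target. Using three copies of $\Jmc$ rather than two is what keeps the construction symmetric: the collapse $f_3 \mapsto f_2$ along $\Jmc_3$ needed inside $S$ does not compromise the opposite direction, where the identity already pairs every element of the common domain with itself.
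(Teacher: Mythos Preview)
Your proof is correct and follows the same overall strategy as the paper: exhibit two globally $\EL$-equisimilar interpretations that disagree on the fresh axiom $A \sqsubseteq \forall r.B$. The specific witnesses differ slightly. The paper keeps $r$ identical in both interpretations (two $r$-successors $d_2,d_3$ of $d_1$) and varies the extension of $B$, so that one interpretation has $B=\{d_2,d_3\}$ and the other $B=\{d_2,d_4\}$; you instead keep $B=\{d_2\}$ fixed and vary~$r$, giving $d_1$ one $r$-successor in $\Imc_1$ and two in $\Imc_2$. Both constructions exploit the one-sidedness of [AtomR] in exactly the same way. Your use of three isomorphic copies of a single model is somewhat more explicit than the paper's ``model with at least four elements,'' and it makes the [Forth] checks for the $\Tmc$-roles in the collapsing simulation $(e_3,e_2)$ transparent; the paper's version tacitly relies on the four elements being interchangeable (e.g.\ copies obtained via disjoint union), which your presentation spells out.
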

\begin{proof}
The direction $2\Rightarrow 1$ is trivial.
For the direction $1\Rightarrow 2$, assume that $\Tmc$ is satisfiable.
Let $\Imc$ be a model of $\Tmc$ such that $\Delta^{\Imc}$ has at least four 
elements, $d_{1},\ldots,d_{4}$ (such a model exists by invariance of $\ALC$-TBoxes 
under disjoint unions). Expand $\Imc$ to $\Imc_{1}$ and $\Imc_{2}$ by setting 
\begin{itemize}
\item $A^{\Imc_{1}}= A^{\Imc_{2}}=\{d_{1}\}$, 
\item $r^{\Imc_{1}}= r^{\Imc_{2}}=\{(d_{1},d_{2}),(d_{1},d_{3})\}$;
\item $B^{\Imc_{1}}= \{d_{2},d_{3}\}$, $B^{\Imc_{2}}=\{d_{2},d_{4}\}$, 
\end{itemize}
Clearly $\Imc_{1}$ is a model of $\Tmc'$, but $\Imc_{2}$ is not.
On the other hand, $\Imc_{1}\sim_{\EL}^{g} \Imc_{2}$. We show that 
$(\Imc_{1},d_{1}) \sim_{\EL} (\Imc_{2},d_{1})$, equi-simulations for the
remaining domain elements are straightforward. Now, 
$$
S_{1}= \{(d_{1},d_{1}),(d_{2},d_{2}),(d_{3},d_{2})\}
$$
is a $\EL$-simulation between $(\Imc_{1},d_{1})$ and $(\Imc_{2},d_{2})$.
Conversely,
$$
S_{2}= \{(d_{1},d_{1}),(d_{2},d_{2}),(d_{3},d_{3})\}
$$
is a $\EL$-simulation between $(\Imc_{2},d_{2})$ and $(\Imc_{1},d_{1})$.
\end{proof}
 
The upper bound proof is more involved. Firstly, we require
the following result about $\EL$-simulations:

\begin{lemma}\label{quick}
Let $(\Imc_1,d_1)\sim_{\EL}(\Imc_2,d_2)$ and let $\Imc_1,\Imc_{2}$ be 
$\omega$-saturated. Let $(d_{1},d_{1}')\in r^{\Imc_{1}}$. Then there exist
$d_{1}''$ and $d_{2}''$ with $(d_{1},d_{1}'')\in r^{\Imc_{1}}$ and
$(d_{2},d_{2}'')\in r^{\Imc_{2}}$ such that 
$$
d_1' \leq_{\EL} d_{1}'' \sim_{\EL} d_{2}''.
$$
\end{lemma}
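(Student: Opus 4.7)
The plan is to produce $d_1''$ and $d_2''$ as realizers of a carefully chosen maximal $\EL$-type. Let $\mathcal{F}$ consist of all sets $T$ of $\EL$-concepts with $t^{\Imc_1}_{\EL}(d_1')\subseteq T$ such that there is an $r$-successor $e_1$ of $d_1$ in $\Imc_1$ with $T\subseteq t^{\Imc_1}_{\EL}(e_1)$ and an $r$-successor $e_2$ of $d_2$ in $\Imc_2$ with $T\subseteq t^{\Imc_2}_{\EL}(e_2)$, ordered by inclusion. I aim to find a maximal element $T^*$ of $\mathcal F$ and show that its realizers have $\EL$-type \emph{equal to} $T^*$; the two realizers will be $d_1''$ and $d_2''$.

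To apply Zorn, I would check that $\mathcal F$ is nonempty and that every chain has an upper bound. For nonemptiness, since $d_1\sim_{\EL} d_2$ and both interpretations are $\omega$-saturated, Theorem~\ref{equivalence} gives $t^{\Imc_1}_{\EL}(d_1)=t^{\Imc_2}_{\EL}(d_2)$, so for every $C\in t^{\Imc_1}_{\EL}(d_1')$ we have $\exists r.C\in t^{\Imc_2}_{\EL}(d_2)$; finite realizability of $\{r(d_2,x)\}\cup\{C^\sharp(x):C\in t^{\Imc_1}_{\EL}(d_1')\}$ together with $\omega$-saturation of $\Imc_2$ yields an $r$-successor of $d_2$ that realizes $t^{\Imc_1}_{\EL}(d_1')$, witnessing $t^{\Imc_1}_{\EL}(d_1')\in\mathcal F$. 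For chains, the union $T$ is again in $\mathcal F$: any finite $T_0\subseteq T$ sits inside a single chain member (by linearity) and is therefore realized together with $r(d_1,x)$ (and with $r(d_2,x)$) by that member's witness, so $\omega$-saturation of $\Imc_1$ and $\Imc_2$ provides two global realizers for $T$.

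The main obstacle is to show $t^{\Imc_1}_{\EL}(d_1'')=T^*$ (and symmetrically for $d_2''$), where $d_1'',d_2''$ are the realizers of a maximal $T^*\in\mathcal F$; this is what forces the two types to be equal rather than only jointly containing $T^*$. Setting $T':=t^{\Imc_1}_{\EL}(d_1'')\supseteq T^*$, I would verify $T'\in\mathcal F$: it is realized in $\Imc_1$ by $d_1''$ itself, and for $\Imc_2$ observe that for any finite $C_1,\dots,C_n\in T'$ we have $\exists r.(C_1\sqcap\cdots\sqcap C_n)\in t^{\Imc_1}_{\EL}(d_1)=t^{\Imc_2}_{\EL}(d_2)$, so $\omega$-saturation of $\Imc_2$ supplies an $r$-successor of $d_2$ realizing all of $T'$. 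Maximality of $T^*$ then forces $T'=T^*$, and by the symmetric argument $t^{\Imc_2}_{\EL}(d_2'')=T^*$ as well. Finally, $d_1'\leq_{\EL} d_1''$ follows from $t^{\Imc_1}_{\EL}(d_1')\subseteq T^*=t^{\Imc_1}_{\EL}(d_1'')$ via Lemma~\ref{equivalence5}, and $d_1''\sim_{\EL} d_2''$ follows from $t^{\Imc_1}_{\EL}(d_1'')=t^{\Imc_2}_{\EL}(d_2'')$ via Theorem~\ref{equivalence}, completing the proof.
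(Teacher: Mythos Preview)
Your proof is correct and uses the same key ingredients as the paper (Zorn's lemma plus $\omega$-saturation plus the equivalence between $\EL$-type inclusion and $\leq_{\EL}$ from Lemma~\ref{equivalence5}), but organizes them differently. The paper works asymmetrically: it first finds a $\leq_{\EL}$-maximal element $d_1''$ in the set $X=\mn{succ}^{\Imc_1}_r(d_1)\cap\{d\mid d_1'\leq_{\EL} d\}$, using $\omega$-saturation of $\Imc_1$ only; then it uses the simulation $d_1\leq_{\EL} d_2$ to pick $d_2''\in\mn{succ}^{\Imc_2}_r(d_2)$ with $d_1''\leq_{\EL} d_2''$, and finally bounces back via $d_2\leq_{\EL} d_1$ to obtain some $e\in X$ with $d_2''\leq_{\EL} e$, so that maximality of $d_1''$ in $X$ forces $e\leq_{\EL} d_1''$ and hence $d_2''\leq_{\EL} d_1''$. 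Your approach instead works symmetrically at the level of types: you search for a maximal $\EL$-type simultaneously realized among $r$-successors in both interpretations, and then show that any realizer already has exactly that type. This avoids the back-and-forth ``bounce'' and makes the symmetry between $\Imc_1$ and $\Imc_2$ explicit; the price is that you invoke $\omega$-saturation in both models at each step (nonemptiness, chains, and the maximality argument), whereas the paper confines the Zorn/saturation step to $\Imc_1$ and handles $\Imc_2$ purely via the existing simulations. Both routes are equally valid and of comparable length.
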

\begin{proof} Let
$$
X = {\sf succ}^{\Imc_{1}}_{r}(d_{1}) \cap \{ d \mid (\Imc_{1},d_{1}')
\leq_{\EL} (\Imc_{1},d)\}
$$
We have $d_1'\in X$. $X$ is ordered by the simulation relation $\leq_{\EL}$.
Recall that, by Lemma~\ref{equivalence5}, for all $d,d'\in X$,
$d\leq_{\EL} d'$ iff $t^{\Imc_{1}}_{\EL}(d) \subseteq 
t^{\Imc_{1}}_{\EL}(d')$ since $\Imc_{1}$ is $\omega$-saturated.
\medskip

\noindent
Claim 1. $X$ contains a $\leq_{\EL}$-maximal element.

\medskip
To prove Claim 1 it is sufficient to show that for every $\leq_{\EL}$-ascending
chain $(e_i)_{i\in I}$ in $X$ there exists $e\in X$ such that $e_{i}\leq_{\EL}e$
for all $i\in I$. Consider the set of FO-formulas 
$$
\Gamma = \{ r(d_{1},x)\} \cup \{C^{\sharp}(x) \mid C \in \bigcup_{i\in I}t^{\Imc_{1}}_{\EL}(e_i)
\}.
$$
Clearly $\Gamma$ is finitely realizable in $\Imc_{1}$. By $\omega$-saturatedness,
$\Gamma$ is realizable in $\Imc_{1}$ for an assignment $a(x)\in X$. Let $e=a(x)$. Then
$e\in X$ and $e_{i}\leq_{\EL} e$ for all $i\in I$, as required.

Let $d_1''$ be a $\leq_{\EL}$-maximal element of $X$. 
Since $d_{1}\leq_{\EL} d_{2}$, there exists $d_{2}''\in {\sf succ}^{\Imc_{2}}_{r}(d_{2})$
such that $d_{1}''\leq_{\EL} d_{2}''$. Now $d_2'' \leq_{\EL} d_{1}''$ holds as well
because there exists $e\in X$ such that $d_{2}''\leq_{\EL} e$ and so 
$d_{1}''\leq_{\EL}d_{2}''$ implies $d_{2}''=e$ by $\leq_{\EL}$-maximality of $d_{1}''$ in 
$X$. We obtain $d_{1}''\sim_{\EL} d_{2}''$, as required.
\end{proof}

We are now in the position to prove the \ExpTime upper bound.
It is proved by means of a generalization of the type elimination
method to sequences of types rather than single types.
Given an $\ALC$-TBox $\Tmc$, by exponential time type elimination, we want to 
determine the set $P$ of all pairs $(t,s)$ of $\Tmc$-types such that there exist
$(\Imc,d)$ and $(\Jmc,d')$ with $t$ realized in $d$, $s$
realized in $d'$, and such that $\Jmc$ is a model of $\Tmc$,
$d \sim_{\EL} d'$ and $\Imc \sim_{\EL}^{g} \Jmc$. If $P$
contains a pair $(t,s)$ in which $t\in {\sf tp}\setminus{\sf tp}(\Tmc)$,
then $\Tmc$ is not preserved under $\sim_{\EL}^{g}$. If $P$ does not
contain any such pair, then $\Tmc$ is preserved under $\sim_{\EL}^{g}$. 
The straightforward idea of a recursive procedure that computes
$P$ by eliminating pairs from the set of all pairs $(t,s)$ with $t\in {\sf tp}$ and $s\in {\sf tp}(\Tmc)$
for which no appropriate witnesses for existential restrictions exist does not work:
the length of the sequences of types required as witnesses for existential restrictions 
grows. However, as in the interpretation
$\Imc$ we do not have to satisfy a fixed TBox, the role depth of the types
to be realized in $\Imc$ decreases and, therefore, the length of the sequences of types one has to consider
stabilizes after ${\sf rd}(\Tmc)$ man steps.
We now give a detailed proof.

For $m\geq 0$, by $\text{tp}^{m}$ we denote the set of all 
$t'\subseteq \text{sub}(\Tmc)$ 
such that there exists $t\in \text{tp}$ with
$$
t' = \{ C\in t \mid \text{rd}(C) \leq m\}.
$$
A $t\in \text{tp}^{m}$ is \emph{realized} by an object
$(\Imc,d)$ if $C\in d^{\Imc}$ for all $C\in t$.
Let $k$ be the role depth of the $\mathcal{ALC}$-TBox $\Tmc$.
For $m=0$ we set $m-1:=0$.

For $m,l\geq 0$ with $m+l\leq k$, we define $X_{l}^{m}$ as the set of all tuples
$$
(t,s,s_{0},t_{1},s_{1},\ldots,t_{l},s_{l}),
$$
such that 
\begin{itemize}
\item $t,t_{1},\ldots,t_{l}\in \text{tp}^{m}$,
\item $s,s_{0},s_{1},\ldots,s_{l} \in \text{tp}(\Tmc)$,
\end{itemize}
and there exist objects 
$(\Imc,d)$, $(\Imc_{1},d_{1})\ldots,(\Imc_{l},d_{l})$ and
$(\Jmc,d')$, $(\Jmc_{0},d_{0}'),\ldots,(\Jmc_{l},d_{l}')$
such that 
\begin{enumerate}
\item $(\Imc,d)$ realizes $t$ and $(\Imc_{i},d_{i})$ realizes $t_{i}$ for $1\leq i \leq l$;
\item $(\Jmc,d')$ realizes $s$ and $(\Jmc_{i},d_{i}')$ realizes $s_{i}$ for $0\leq i \leq l$;
\item $\Jmc$ and $\Jmc_{i}$ satisfy $\Tmc$, for $0 \leq i \leq l$;
\item $(\Imc,d)\leq_{\EL} (\Imc_{i},d_{i})$ for $1 \leq i \leq l$;
\item $(\Imc_{i},d_{i}) \sim_{\EL} (\Jmc_{i},d_{i}')$ and
$\Imc_{i} \sim_{\EL}^{g} \Jmc_{i}$ for $1\leq i \leq l$;
\item $(\Imc,d) \sim_{\EL} (\Jmc,d')$ and
$\Imc \sim_{\EL}^{g} \Jmc$;
\item $(\Jmc_{0},d_{0}') \leq (\Jmc,d')$. 
\end{enumerate}
\begin{lemma}
$\Tmc$ is not invariant under $\sim^{g}_{\EL}$ 
iff there exist $t\in {\sf tp} \setminus {\sf tp}(\Tmc)$,
and $s=s_{0}\in {\sf tp}(\Tmc)$ such that $(t,s,s_{0})\in X^{k}_{0}$.
\end{lemma}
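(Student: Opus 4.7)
The plan is to prove both directions by direct unpacking of the definition of $X^{k}_{0}$, using only that (i) $\text{tp}^{k} = \text{tp}$, since every subconcept in $\text{sub}(\Tmc)$ has role depth at most $k$, and (ii) if a $\Tmc$-type $t$ is realized by an element $d$ of a model of $\Tmc$, then $t \in \text{tp}(\Tmc)$ (equivalently, its contrapositive). I do not foresee a genuine obstacle here; the algorithmic content sits entirely in the subsequent construction of $X^{k}_{0}$ by type elimination on tuples, and this lemma's role is merely to bridge the semantic statement (non-invariance) with that combinatorial object. The key observation is that an $l = 0$ tuple already carries all information needed to witness non-invariance, so the witnesses $\Jmc_{0},d_{0}'$ can be taken to collapse onto $\Jmc,d'$.

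For the forward direction, assume $\Tmc$ is not invariant under $\sim^{g}_{\EL}$. Since this relation is symmetric, I fix interpretations $\Imc_{1}, \Imc_{2}$ with $\Imc_{1} \sim^{g}_{\EL} \Imc_{2}$, $\Imc_{2} \models \Tmc$, and $\Imc_{1} \not\models \Tmc$. I pick some CI $C \sqsubseteq D \in \Tmc$ violated at some $d_{1} \in \Delta^{\Imc_{1}}$, and set $t := t^{\Imc_{1}}(d_{1})$; since $C,\neg D \in t$, no element of any model of $\Tmc$ can realize $t$, hence $t \in \text{tp} \setminus \text{tp}(\Tmc)$. From global equisimilarity I obtain $d_{2} \in \Delta^{\Imc_{2}}$ with $(\Imc_{1}, d_{1}) \sim_{\EL} (\Imc_{2}, d_{2})$, and I set $s := s_{0} := t^{\Imc_{2}}(d_{2})$, $\Jmc := \Jmc_{0} := \Imc_{2}$, $d' := d_{0}' := d_{2}$. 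With $l = 0$, conditions~(4) and~(5) of the definition of $X^{k}_{0}$ are vacuous, condition~(7) becomes the identity simulation $(\Imc_{2},d_{2}) \leq_{\EL} (\Imc_{2},d_{2})$, $s = s_{0} \in \text{tp}(\Tmc)$ because $\Imc_{2} \models \Tmc$, and the remaining conditions are immediate, so $(t,s,s_{0}) \in X^{k}_{0}$ as required.

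For the backward direction, assume $(t, s, s_{0}) \in X^{k}_{0}$ with $t \in \text{tp} \setminus \text{tp}(\Tmc)$ and $s = s_{0}$. By definition I obtain $(\Imc, d)$ realizing $t$ and $(\Jmc, d')$ realizing $s$, with $\Jmc \models \Tmc$ and $\Imc \sim^{g}_{\EL} \Jmc$. Because $t$ is a complete $\Tmc$-type, realization forces $t = t^{\Imc}(d)$; were $\Imc$ a model of $\Tmc$, this would place $t$ in $\text{tp}(\Tmc)$, contradicting the hypothesis. Hence $\Imc \not\models \Tmc$ while $\Jmc \models \Tmc$, and $\Imc \sim^{g}_{\EL} \Jmc$ exhibits the failure of invariance. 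Combining the two directions proves the biconditional, and at no step does the $\Jmc_{0},d_{0}'$ component play a non-trivial role, confirming that the extra witness track in $X^{m}_{l}$ is designed for the recursive reduction of $m$, not for this top-level characterization.
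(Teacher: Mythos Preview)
Your proof is correct. The paper states this lemma without proof, treating it as an immediate consequence of the definition of $X^{k}_{0}$; your argument is exactly the straightforward unpacking that the authors tacitly assume: instantiate $\Imc,\Jmc,\Jmc_{0}$ with the witnessing interpretations from a failure of invariance (collapsing $\Jmc_{0},d_{0}'$ onto $\Jmc,d'$) for the forward direction, and read off conditions~3 and~6 together with $t\notin\text{tp}(\Tmc)$ for the converse.
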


Thus, the \ExpTime upper bound follows if one can compute $X^{k}_{0}$ in
exponential time. To this end, we will give an exponential time elimination algorithm 
that determines \emph{all} sets $X^{m}_{l}$, $0\leq m,l$ and $m+l\leq k$.

First compute the sets $\text{Init}_{l}^{m}$, $0\leq l,m$ and $l+m\leq k$,
consisting of all
$$
(t,s,s_{0},t_{1},s_{1},\ldots,t_{l},s_{l}),
$$
where $t,t_{1},\ldots,t_{l}\in \text{tp}^{m}$,
$s,s_{0},s_{1},\ldots,s_{l} \in \text{tp}(\Tmc)$
and for all $A\in \NC$: 
\begin{itemize}
\item $A \in t$ implies $A\in t_{i}$ for $1 \leq i \leq l$;
\item $A \in t_{i}$ iff $A \in s_{i}$ for $1\leq i \leq l$;
\item $A \in t$ iff $A \in s$;
\item $A \in s_{0}$ implies $A \in s$. 
\end{itemize}
Note that $\text{Init}_{l}^{m}$ can be computed in
exponential time since ${\sf tp}(\Tmc)$ can be computed in
exponential time.

Now apply exhaustively the rules from Figure~\ref{rules} to
the sets $Y^{m}_{l}:=\text{Init}^{m}_{l}$ and denote the resulting sets 
of tuples by $\text{Final}^{m}_{l}$.

\begin{figure}[t]
Let $(t,s,s_{0},t_{1},s_{1},\ldots,t_{l},s_{l})\in Y_{l}^{m}$.
\begin{center}
\begin{itemize}
\item[{(r1)}] if $m>0$ and there exists $\exists r.C \in t$ and 
such that there does not exist
$(t',s',s_{0}',t_{1}',s_{1}',\ldots,t_{l}',s_{l}',t_{l+1}',s_{l+1}')\in 
Y_{l+1}^{m-1}$ with $C\in t'$ and $t \leadsto_{r} t'$, 
$t\leadsto_{r} t_{l+1}'$, $s\leadsto_{r}s_{l+1}'$, and, for $1 \leq i \leq l$:
$t_{i} \leadsto_{r} t_{i}'$, $s_{i} \leadsto_{r} s_{i}'$, then set
$$
Y_{l}^{m}:= Y_{l}^{m}\setminus\{(t,s,s_{0},t_{1},s_{1},\ldots,t_{l},s_{l})\}
$$
\item[{(r2)}] if there exists $\exists r.C \in s$ and there does not exist
$(t',s',s_{0}',t_{1}',s_{1}',\ldots,t_{l}',s_{l}')\in Y_{l}^{m-1}$ with
$C\in s_{0}'$ and $s \leadsto_{r} s_{0}'$, $s \leadsto_{r} s'$,
$t \leadsto_{r} t'$, and, for $1 \leq i \leq l$:
$t_{i} \leadsto_{r} t_{i}'$, $s_{i} \leadsto_{r} s_{i}'$, then set
$$
Y_{l}^{m}:= Y_{l}^{m}\setminus\{(t,s,s_{0},t_{1},s_{1},\ldots,t_{l},s_{l})\}
$$
\item[{(r3)}] if there exists $\exists r.C \in s_{0}$ and there does not exist
$(t',s',s_{0}',t_{1}',s_{1}',\ldots,t_{l}',s_{l}')\in Y_{l}^{m-1}$ with
$C\in s_{0}'$ and $s_{0} \leadsto_{r} s_{0}'$, $s\leadsto_{r} 
s'$,
$t \leadsto_{r} t'$, and, for $1 \leq i \leq l$:
$t_{i} \leadsto_{r} t_{i}'$, $s_{i} \leadsto_{r} s_{i}'$, then set
$$
Y_{l}^{m}:= Y_{l}^{m}\setminus\{(t,s,s_{0},t_{1},s_{1},\ldots,t_{l},s_{l})\}
$$
\item[{(r4)}] if there exist $1 \leq i \leq l$ and $\exists r.C \in t_{i}$ 
such that there does not exist
$(t',s',s_{0}',t_{1}',s'_{1}) \in Y^{m-1}_{1}$ with
$C\in t'$ and $t_{i} \leadsto_{r} t'$, $t_{i}\leadsto_{r} t_{1}'$, 
$s_{i}\leadsto_{r} s_{1}'$, then set
$$
Y_{l}^{m}:= Y_{l}^{m}\setminus\{(t,s,s_{0},t_{1},s_{1},\ldots,t_{l},s_{l})\}
$$
\item[{(r5)}] if there exist $1 \leq i \leq l$ and $\exists r.C \in s_{i}$ 
such that there does not exist
$(t',s',s_{0}') \in Y^{m-1}_{0}$ with
$C\in s_{0}'$ and $s_{i} \leadsto_{r} s_{0}'$, 
$s_{i}\leadsto_{r} s'$, $t_{i}\leadsto_{r} t'$, then set
$$
Y_{l}^{m}:= Y_{l}^{m}\setminus\{(t,s,s_{0},t_{1},s_{1},\ldots,t_{l},s_{l})\}
$$
\end{itemize}
    \caption{Elimination Rules}
    \label{rules}
  \end{center}
\end{figure}

It should be clear that the elimination algorithm terminates after
at most exponentially many steps. Thus, the lower bound follows from
the following lemma:

\begin{lemma}
For all $m,l\geq 0$ with $m+l\leq k$, we have $X^{m}_{l}=\text{Final}^{m}_{l}$.
\end{lemma}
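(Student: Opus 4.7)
The plan is to prove both inclusions. The direction $X^{m}_{l} \subseteq \text{Final}^{m}_{l}$ is the ``soundness of elimination'' direction. First, $X^{m}_{l} \subseteq \text{Init}^{m}_{l}$ follows directly from the atomic clause [AtomR] of $\EL$-simulation applied to the witnesses guaranteed by the definition of $X^{m}_{l}$: conditions (4)--(7) force exactly the inclusions between the concept-name parts of $t, s, s_0, t_i, s_i$ listed in the definition of $\text{Init}^{m}_{l}$. Then I would argue, rule by rule, that no rule (r1)--(r5) can remove a tuple of $X^{m}_{l}$. Starting from a tuple in $X^{m}_{l}$ with witnesses $(\Imc,d)$, $(\Imc_i,d_i)$, $(\Jmc,d')$, $(\Jmc_i,d'_i)$, $(\Jmc_0,d'_0)$, the triggering existential restriction (say $\exists r.C$) is forced to be realized by some $r$-successor in the appropriate interpretation; I then propagate this successor to matching $r$-successors in all other witness interpretations using the $\EL$-simulations from clause (4) and the bisimulations from clauses (5)--(6). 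Lemma~\ref{quick} is essential here in the $\omega$-saturated case: it lets me choose successors that witness $\sim_{\EL}$ rather than only $\leq_{\EL}$, so that the resulting successor tuple still fits into the $X^{m-1}_{l+1}$ or $X^{m-1}_{l}$ pattern demanded by the non-applicability of the rule.

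For the converse inclusion $\text{Final}^{m}_{l} \subseteq X^{m}_{l}$, I would build the witness interpretations directly from the sets $\text{Final}^{m'}_{l'}$ by a standard modal-logic canonical construction. Domain elements are (copies of) tuples in $\text{Final}^{m'}_{l'}$; atomic concepts are read off the types in the relevant slot; and an $r$-edge between two tuples is installed exactly when they are related by the successor pattern required to block rules (r1)--(r5). A truth lemma by induction on concepts up to role depth $m$ then shows that every type $t, s, s_0, t_i, s_i$ in a tuple is indeed realized at the designated element; the non-applicability of the rules supplies, layer by layer, exactly the witnesses needed for existential restrictions. At role depth $0$, once the $\text{tp}(\Tmc)$-types are fixed, I attach actual models of $\Tmc$ realizing those types so that $\Jmc, \Jmc_i, \Jmc_0$ become full models of $\Tmc$. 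Passing to $\omega$-saturated elementary extensions via Theorem~\ref{exists} and invoking Theorem~\ref{equivalence} together with Lemma~\ref{equivalence5}, the $\EL$-type equivalences of the designated elements produced by the construction lift to $\EL$-simulations, local $\EL$-bisimulations, and global $\sim^{g}_{\EL}$ relations, yielding clauses (4)--(7).

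The main obstacle is coordinating the construction so that all of clauses (1)--(7) of the definition of $X^{m}_{l}$ hold simultaneously across the different ``layers'' $\Imc, \Imc_i, \Jmc, \Jmc_i, \Jmc_0$: clauses (4)--(6) mix local $\leq_{\EL}$/$\sim_{\EL}$ conditions on the designated elements with \emph{global} $\sim^{g}_{\EL}$ conditions between the surrounding interpretations, and these must all be read off from the same coherence relations $\leadsto_r$. Arranging this requires a careful simultaneous construction over all $(m',l')$ with $m'+l' \le k$, so that the diagonal relation ``same tuple, different slot'' provides the needed simulations between layers and non-applicability of the rules supplies both local and global successor witnesses. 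Once this book-keeping is in place, the truth lemma and Theorem~\ref{equivalence} close the argument.
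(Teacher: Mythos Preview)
Your proposal is correct and follows essentially the same two-step strategy as the paper: for $X^{m}_{l}\subseteq \text{Final}^{m}_{l}$, the paper likewise starts from $X^{m}_{l}\subseteq \text{Init}^{m}_{l}$ and shows, rule by rule via $\omega$-saturated witnesses and Lemma~\ref{quick}, that no rule can eliminate a tuple of $X^{m}_{l}$; for $\text{Final}^{m}_{l}\subseteq X^{m}_{l}$, the paper also builds a single canonical interpretation whose domain consists of pairs (slot index, tuple) over all $\text{Final}^{m'}_{l'}$, proves a truth lemma, and reads off the required simulations from the ``same tuple, different slot'' relation exactly as you describe. The only notable simplification in the paper's execution is that it does not need to attach external models of $\Tmc$ or pass to $\omega$-saturated extensions in the second direction: because the $s$-slots carry full types from $\text{tp}(\Tmc)$, the truth lemma already makes the $s$-part of the canonical interpretation a model of $\Tmc$, and the simulations are exhibited directly rather than recovered from type inclusions via Lemma~\ref{equivalence5}.
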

\begin{proof}
We start with the proof of the inclusion $X^{m}_{l}\subseteq 
\text{Final}^{m}_{l}$.
To this end, it is sufficient to observe that $X^{m}_{l}\subseteq 
\text{Init}^{m}_{l}$ and that the following holds for $1\leq i \leq 5$:

\medskip

\noindent
Claim 1. If $X^{m}_{l}\subseteq Y^{m}_{l}$ for all $0\leq m,l$ with $m+l\leq k$,
and $\vec{x}$ is removed from $Y^{m_{0}}_{l_{0}}$ by an application of the rule 
(ri), then $\vec{x}\not\in X^{m_{0}}_{l_{0}}$.

\medskip

To prove the claim, first let $i=1$. Assume that, in contrast to
what has to be shown, 
there are $(t,s,s_{0},t_{1},s_{1},\ldots,t_{l_{0}},s_{l_{0}})\in Y^{m_{0}}_{l_{0}}$
with $m_{0}>0$ and $\exists r.C \in t$ such that 
\begin{itemize}
\item (r1) is applicable: there does not exist
$(t',s',s_{0}',t_{1}',s_{1}',\ldots,t_{l_{0}}',s_{l_{0}}',t_{l_{0}+1}',s_{l_{0}+1}')\in 
Y_{l_{0}+1}^{m_{0}-1}$ with ($\ast$) $C\in t'$ and $t \leadsto_{r} t'$, 
$t\leadsto_{r} t_{l_{0}+1}'$, $s\leadsto_{r}s_{l_{0}+1}'$, and, for 
$1 \leq i \leq l_{0}$:
$t_{i} \leadsto_{r} t_{i}'$, $s_{i} \leadsto_{r} s_{i}'$;
\item $(t,s,s_{0},t_{1},s_{1},\ldots,t_{l_{0}},s_{l_{0}})\in X^{m_{0}}_{l_{0}}$.
\end{itemize}
By Point~2, we can take objects 
$(\Imc,d),(\Imc_{1},d_{1}),\ldots,(\Imc_{l_{0}},d_{l_{0}})$ and
$(\Jmc,d'),(\Jmc_{1},d_{1}'),\ldots,(\Jmc_{l_{0}},d_{l_{0}}')$ with the
properties 1--7. We may assume that those objects are $\omega$-saturated.
We find $e$ with $(d,e)\in r^{\Imc}$ such that $e\in C^{\Imc}$.
Let $1\leq i \leq l_{0}$. We find $f_{i}\in \Imc_{i}$ 
with $(d_{i},f_{i})\in r^{\Imc_{i}}$ and 
$(\Imc,e) \leq_{\EL} (\Imc_{i},f_{i})$. By Lemma~\ref{quick}, we find
$e_{i}$ and $e_{i}'$ with $(d_{i},e_{i})\in r^{\Imc_{i}}$ and
$(d_{i}',e_{i}')\in r^{\Jmc_{i}}$ such that
$$
(\Imc,f_{i}) \leq_{\EL} (\Imc_{i},e_{i}) \sim_{\EL} (\Jmc_{i},e_{i}')
$$
We also have $(\Imc,e)\leq_{\EL} (\Imc_{i},e_{i})$.
Also, by Lemma~\ref{quick}, we find $e_{l_{0}+1}$ and $e_{l_{0}+1}'$
with $(d,e_{l_{0}+1})\in r^{\Imc}$ and $(e,e_{l_{0}+1}')\in r^{\Jmc}$
such that 
$$
(\Imc,e)\leq_{\EL} (\Imc,e_{l_{0}+1}) \sim_{\EL} (\Jmc,e_{l_{0}+1}')
$$
Set $\Imc_{l_{0}+1}= \Imc$ and $\Jmc_{l_{0}+1}=\Jmc$.
Now let
\begin{itemize}
\item $t'$ be the type in $\text{tp}^{m_{0}-1}$ realized by $(\Imc,e)$;
\item $t_{i}'$, $1\leq i \leq l_{0}+1$, be the type in $\text{tp}^{m_{0}-1}$ 
realized
by $(\Imc_{i},e_{i})$;
\item $s_{i}'$, $1\leq i \leq l_{0}+1$, be the $\Tmc$-types realized by 
$(\Jmc_{i},e_{i}')$;
\item $s'=s_{0}'$ be the $\Tmc$-type realized by some $(\Kmc,f)$
such that $\Kmc$ is a model of $\Tmc$,
$(\Imc,e)\sim_{\EL}(\Kmc,f)$, and $\Imc\sim_{\EL}^{g}\Kmc$.
\end{itemize}
Let $\vec{x}=(t',s',s_{0}',t_{1}',s_{1}',\ldots,t_{l_{0}}',s_{l_{0}}',
t_{l_{0}+1}',s_{l_{0}+1}')$. Clearly $\vec{x}\in X^{m_{0}-1}_{l_{0}+1}$ and so 
$\vec{x}\in Y^{m_{0}-1}_{l_{0}+1}$. Moreover $\vec{x}$ satisfies ($\ast$).
Thus, we have derived a contradiction.

The rules (r2)--(r5) are considered similarly.

\medskip

We now come to the inclusion $X^{m}_{l}\supseteq Z^{m}_{l}$.
Denote the nth entry of $\vec{x}\in Z^{m}_{l}$ by $\vec{x}(n)$;
$l(\vec{x})$ denotes the length on $\vec{x}$.

Now define an interpretation $\Imc$ by setting
$$
\Delta^{\Imc} = \{(n,\vec{x}) \mid \vec{x}\in Z^{m}_{l},l(\vec{x})\geq n\}.
$$
and, for $A\in \NC$, 
$$
A^{\Imc} = \{ (n,\vec{x})\in \Delta^{\Imc} \mid A\in \vec{x}(n)\}.
$$
Finally, for $r\in \NR$,
$$
\vec{x} = (t,s,s_{0},
t_{1},s_{1},\ldots,t_{l},s_{l})\in Z^{m}_{l},
$$
and $(n,\vec{y})\in \Delta^{\Imc}$ we set 
$((n,\vec{x}),(m,\vec{y}))\in r^{\Imc}$ if
\begin{itemize}
\item $\vec{y}= 
(t',s',s_{0}',t_{1}',s_{1}',\ldots,t_{l}',s_{l}',t_{l+1}',s_{l+1}')\in 
Z_{l+1}^{m-1}$, 
$(\vec{x}(n),\vec{y}(m))$ is one of the pairs
$(t,t')$, $(t,t_{l+1}')$, $(s,s_{l+1}')$, 
$(t_{i},t_{i}'),(s_{i},s_{i}')$, for $1 \leq i \leq l$,
and 
$t \leadsto_{r} t'$, 
$t\leadsto_{r} t_{l+1}'$, $s\leadsto_{r}
s_{l+1}'$, and, for $1 \leq i \leq l$:
$t_{i} \leadsto_{r} t_{i}'$, $s_{i} \leadsto_{r} s_{i}'$.
\item $\vec{y} = (t',s',s_{0}',t_{1}',s_{1}',\ldots,t_{l}',s_{l}')
\in Z_{l}^{m-1}$,
$(\vec{x}(n),\vec{y}(m))$ is one of the pairs
$(s,s_{0}')$, $(s,s')$, $(t,t')$,
$(t_{i},t_{i}'),(s_{i},s_{i}')$, for $1 \leq i \leq l$,
and $s \leadsto_{r} s_{0}'$, $s\leadsto_{r} 
s'$,
$t \leadsto_{r} t'$, and, for $1 \leq i \leq l$:
$t_{i} \leadsto_{r} t_{i}'$, $s_{i} \leadsto_{r} s_{i}'$.
\item $\vec{y}=(t',s',s_{0}',t_{1}',s_{1}',\ldots,t_{l}',s_{l}')\in Z_{l}^{m-1}$,
$(\vec{x}(n),\vec{y}(m))$ is one of the pairs
$(s_{0},s_{0}')$, $(s,s')$, $(t,t')$
$(t_{i},t_{i}'),(s_{i},s_{i'})$, for $1 \leq i \leq l$,
and $s_{0} \leadsto_{r} s_{0}'$, $s\leadsto_{r} 
s'$,
$t \leadsto_{r} t'$, and, for $1 \leq i \leq l$:
$t_{i} \leadsto_{r} t_{i}'$, $s_{i} \leadsto_{r} s_{i}'$.
\item there exists $1 \leq i \leq l$ such that such that 
$\vec{y}= (t',s',s_{0}',t_{1}',s'_{1}) \in Z^{m-1}_{1}$,
$$
(\vec{x}(n),\vec{y}(m))\in \{(t_{i},t'),(t_{i},t_{1}'),(s_{i},s_{1}')\}
$$
and $t_{i} \leadsto_{r} t'$, $t_{i}\leadsto_{r} t_{1}'$, 
$s_{i}\leadsto_{r} s_{1}'$.
\item there exists $1 \leq i \leq l$ such that 
$\vec{y}= (t',s',s_{0}') \in Z^{m-1}_{0}$,
$$
(\vec{x}(n),\vec{y}(m))\in \{(s_{i},s_{0}'),(s_{i},s'),(t_{i},t')\},
$$
and $s_{i} \leadsto_{r} s_{0}'$, 
$s_{i}\leadsto_{r} s'$, $t_{i}\leadsto_{r} t'$.
\end{itemize}

The following can be proved by induction:

\medskip
\noindent Claim 1. For all $(n,\vec{x})\in \Delta^{\Imc}$,
if $\vec{x}(n)\in \text{tp}^{m}$ for some $m\leq k$
and $C\in \text{sub}(\Tmc)$ has role depth $\leq m$, then 
$$
C\in \vec{x}(n)\quad \Leftrightarrow \quad (n,\vec{x})\in C^{\Imc}.
$$

\medskip
\noindent Claim 2. If $\vec{x} = (t,s,s_{0},
t_{1},s_{1},\ldots,t_{l},s_{l})$ and $(n,\vec{x}),(m,\vec{x}) \in \Delta^{\Imc}$.
Then $(n,\vec{x}) \leq_{\EL} (m,\vec{x})$ whenever 
$$
(\vec{x}(n),\vec{x}(m))\in \{(s_{0},s)\} \cup
\{(t,t_{i})\mid 1 \leq i \leq l\}
$$
and $(n,\vec{x}) \sim_{\EL} (m,\vec{x})$ whenever 
$$
(\vec{x}(n),\vec{x}(m))\in \{(t,s)\} \cup
\{(t_{i},s_{i})\mid 1 \leq i \leq l\}
$$

Now let $\Imc_{s}$ be the interpretation induced by $\Imc$ on
the set of all $(n,\vec{x})\in \Delta^{\Imc}$ such that $\vec{x}(n)
\in \{s,s_{0},\ldots,s_{l}\}$ for 
$\vec{x} = (t,s,s_{0},t_{1},s_{1},\ldots,t_{l},s_{l})$. Let $\Imc_{t}$
be the interpretation induced by $\Imc$ on $\Delta^{\Imc}\setminus \Delta^{\Imc_{s}}$.
Observe that $\Imc$ is the \emph{disjoint union} of $\Imc_{s}$ and $\Imc_{t}$.

Now assume that $\vec{x}=(t,s,s_{0},t_{1},s_{1},\ldots,t_{l},s_{l})\in Z^{m}_{l}$
is given. We set 
\begin{itemize}
\item $\Imc=\Imc_{1}=\cdots = \Imc_{l}:=\Imc_{t}$;
\item $d= (1,\vec{x})$ and, for $1\leq i \leq l$, $d_{i}=(2+2i,\vec{x})$;
\item $\Jmc=\Jmc_{0}=\cdots = \Jmc_{l}:=\Imc_{t}$;
\item $d'=(2,\vec{x})$ and, for $1\leq i \leq l$, $d_{i}'=(3+2i,\vec{x})$.
\end{itemize}
It follows from Claims 1 and 2 that the defined objects satisfy the conditions
1--7. Thus, $\vec{x}\in X^{m}_{l}$, as required.
\end{proof}

\subsection{Proof of Theorem~\ref{ALCdown}: preservation under products}

The aim of this subsection is to prove the following result:

\begin{theorem}\label{prodnew}
  It is co-\NExpTime-complete to decide whether an \ALC-TBox is preserved
  under products.
\end{theorem}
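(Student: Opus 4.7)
The plan is to establish co-\NExpTime membership via a small-model argument plus guess-and-verify, and co-\NExpTime-hardness by reduction from a \NExpTime-complete tiling problem.

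For the upper bound, I would first reformulate: using that preservation under all products is equivalent to preservation under binary products (the remark after Theorem~\ref{local1}), $\Tmc$ fails preservation iff there are models $\Imc_{1},\Imc_{2}\models \Tmc$, elements $d_{i}\in \Delta^{\Imc_{i}}$, and a CI $C\sqsubseteq D\in \Tmc$ with $(d_{1},d_{2})\in (C\sqcap \neg D)^{\Imc_{1}\times \Imc_{2}}$. The crux is a small-model property: any such witness can be replaced by one with $|\Imc_{i}|\le 2^{p(|\Tmc|)}$ for some polynomial $p$. Given this bound, guessing $\Imc_{1},\Imc_{2},d_{1},d_{2}$ and the CI, and then verifying $\Imc_{i}\models \Tmc$ and $(d_{1},d_{2})\in (C\sqcap \neg D)^{\Imc_{1}\times \Imc_{2}}$ (both checks take time polynomial in $|\Imc_{1}|\cdot|\Imc_{2}|\cdot|\Tmc|$, hence exponential overall), yields \NExpTime for non-preservation.

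I would establish the small-model property by a two-step construction. Let $k=\text{rd}(\Tmc)$, which bounds $\text{rd}(C\sqcap \neg D)$. First unravel $\Imc_{i}$ around $d_{i}$ to depth $k$, keeping at each internal node enough successors to witness every existential subconcept in its $\Tmc$-type; this yields a tree $T_{i}$ with branching bounded by the number of existential subconcepts of $\Tmc$ and depth $k$, so exponential in total. Second, at each leaf of $T_{i}$ graft a model of $\Tmc$ realising the leaf's $\Tmc$-type, using the standard exponential small-model property for $\ALC$-TBox satisfiability. The resulting $\Imc_{i}'$ are exponential-size models of $\Tmc$. A routine induction on $C\sqcap \neg D$ (role depth $\le k$), exploiting that the first $k$ levels of $T_{1}\times T_{2}$ around $(d_{1},d_{2})$ sit inside $\Imc_{1}'\times \Imc_{2}'$, gives $(d_{1},d_{2})\in (C\sqcap \neg D)^{\Imc_{1}'\times \Imc_{2}'}$. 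This is the main technical obstacle: unlike in single-interpretation filtrations, the two unravellings have to be coordinated so that each product-level existential is witnessed by a \emph{pair} of successors retained in both projected trees.

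For co-\NExpTime-hardness, reduce from tileability of the $2^{n}\times 2^{n}$ grid, which is \NExpTime-complete. Given a tile instance $I$, construct an $\ALC$-TBox $\Tmc_{I}$ such that $I$ is tileable iff $\Tmc_{I}$ is not preserved under products. Use the standard polynomial $\ALC$-encoding of a linear chain of length $2^{n}$ with an $n$-bit counter to force every model to carry such a chain; the binary product of two chains then realises a $2^{n}\times 2^{n}$ grid indexed by pairs of counter values. Encode tile types as concept names and compatibility constraints as $\ALC$-CIs, exploiting the non-Horn behaviour of products: since $A\sqcup B$ at $(d_{1},d_{2})$ requires a common disjunct satisfied on both sides, a disjunctive CI expressing ``the cell at position $(i,j)$ carries a tile legally adjacent to its neighbours'' can hold in each $\Imc_{i}$ individually yet be refuted in the product at the pair coding the violating grid position. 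The second main obstacle is to orchestrate these disjunctive tile-choice CIs so that each $\Imc_{i}$ alone is a model of $\Tmc_{I}$, while the product refutes some CI exactly when a global tiling of $I$ exists.
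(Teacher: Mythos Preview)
Your upper bound is essentially the paper's: unravel each witness model into a tree of depth $\text{rd}(\Tmc)$, prune to bounded out-degree while keeping enough successors to witness existentials both in each $\Imc_i$ and in $\Imc_1\times\Imc_2$ (the ``coordination'' you flag is exactly the paper's condition~(c) in the pruning step), graft small models of $\Tmc$ at the leaves, and guess-and-check. That part is fine.

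The lower bound has a real gap. The binary product of two $r$-chains of length $2^n$ is \emph{not} a $2^n\times 2^n$ grid: from $(d_i,e_j)$ the only $r$-successor in $\Imc_1\times\Imc_2$ is $(d_{i+1},e_{j+1})$, so you obtain a single diagonal chain, with no independent horizontal and vertical moves. Patching with a second role and self-loops could recover a grid-shaped successor relation, but then the label of a product element $(d_i,e_j)$ under any concept name is the \emph{conjunction} of the labels of $d_i$ and $e_j$; hence the ``tile'' at $(i,j)$ is determined by two one-dimensional labellings and cannot encode an arbitrary two-dimensional assignment. Your sketch also leaves the hard direction---``no tiling implies preservation under products for \emph{all} pairs of models''---unaddressed.

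The paper's reduction is structurally different and does not use the product to build the grid. Each model is itself a full binary tree of depth $2(n+1)$ whose leaves enumerate \emph{all} grid positions via counter bits along the path, together with their right and upper neighbours, each carrying a locally consistent tile. The TBox concept is a disjunction $D_1\sqcup D_2\sqcup D_3$: the disjuncts $D_1,D_2$ each require one of two fresh markers $M,M'$, while $D_3$ asserts a \emph{defect}---a leaf in the product that agrees on all counter bits yet carries no common tile. If a solution exists, take two copies of the solution-tree, one marked $M$ and one marked $M'$; in the product the root loses both markers and, since both copies encode the same consistent tiling, no defect arises, so the disjunction fails. If no solution exists, any model satisfying $D_1$ or $D_2$ must assign conflicting tiles to some position across different leaves with the same counter value; this conflict surfaces in the product as a defect, so $D_3$ holds everywhere. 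The product is thus a \emph{comparison device} that exposes inconsistency between (or within) two independent tiling attempts, not a grid constructor.
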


We start with the upper bound proof.
An interpretation $\Imc$ is a \emph{tree interpretation} if the directed
graph $(\Delta^{\Imc},\bigcup_{r\in \NR}r^{\Imc})$ is a tree and $r^{\Imc}\cap s^{\Imc}=\emptyset$
for any two distinct $r,s\in \NR$.


\begin{lemma}
\label{lem:productunravel}
  If an \ALC-TBox \Tmc is not preserved under products, then there 
  are tree-models $\Imc_1$ and $\Imc_2$ of \Tmc with out-degree
  at most $2^{n^2+n+1}$ such that $\Imc_1 \times \Imc_2$ is not a
  model of \Tmc.
\end{lemma}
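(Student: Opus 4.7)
Suppose $\Tmc$ is not preserved under products. Then there exist models $\Imc_1,\Imc_2$ of $\Tmc$, an element $(d_1,d_2)\in\Delta^{\Imc_1\times\Imc_2}$, and some CI $C\sqsubseteq D\in\Tmc$ with $(d_1,d_2)\in(C\sqcap\neg D)^{\Imc_1\times\Imc_2}$. The plan is to produce bounded-out-degree tree models of $\Tmc$ whose product refutes $\Tmc$, in two stages: first tree-unravel, then prune.

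For the first stage, I would take the standard tree unravelling $\tilde{\Imc}_i$ of $\Imc_i$ rooted at $d_i$, together with the canonical global bisimulations $S_i\subseteq\Delta^{\tilde{\Imc}_i}\times\Delta^{\Imc_i}$. By the $\ALC$-invariance of Theorem~\ref{equivalence}, each $\tilde{\Imc}_i$ is again a model of $\Tmc$. Moreover, the pointwise combination $S=\{((e_1,e_2),(f_1,f_2))\mid (e_1,f_1)\in S_1,\,(e_2,f_2)\in S_2\}$ is a global $\ALC$-bisimulation between $\tilde{\Imc}_1\times\tilde{\Imc}_2$ and $\Imc_1\times\Imc_2$: [Atom] is immediate because concept names are interpreted by intersection in products, and [Forth]/[Back] follow coordinatewise by unfolding the product definitions. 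Hence $(\tilde d_1,\tilde d_2)\in(C\sqcap\neg D)^{\tilde{\Imc}_1\times\tilde{\Imc}_2}$, so the product of the tree unravellings already refutes $\Tmc$.

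For the second stage, I would prune each $\tilde{\Imc}_i$ to a subtree $\Imc_i^\star$ of bounded out-degree while preserving $\Imc_i^\star\sim_{\ALC}^{g}\tilde{\Imc}_i$. The usual Hennessy--Milner-style selection keeps, at every node and every role $r$, a single $r$-successor per $\ALC$-type over $\mathrm{sub}(\Tmc)$; a finer ``pair-aware'' accounting that additionally records how each successor contributes to pair-types realised in $\tilde{\Imc}_1\times\tilde{\Imc}_2$ produces the stated bound $2^{n^2+n+1}$ with $n=|\mathrm{sub}(\Tmc)|$. Because the pointwise composition of two global bisimulations is itself a global bisimulation, this factor-wise pruning automatically lifts to a global bisimulation $\Imc_1^\star\times\Imc_2^\star\sim_{\ALC}^{g}\tilde{\Imc}_1\times\tilde{\Imc}_2$, and Theorem~\ref{equivalence} then transports the refuting element across, completing the proof.

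The main obstacle is calibrating the selection rule used in the pruning: it must be local (depending only on the current node) yet rich enough both to preserve each factor as a model of $\Tmc$ and to retain sufficient witnesses of pair-types in the product. The nontrivial content of the bound $2^{n^2+n+1}$ is exactly the counting of such witnesses; once the selection is fixed, the remaining verification is a routine combination of $\ALC$-bisimulation invariance and the product-of-bisimulations observation from the first stage.
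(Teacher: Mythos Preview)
Your first stage is fine and matches the paper: unravel each factor and transfer the refutation through the product of the unravelling bisimulations.

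The second stage has a genuine gap. You cannot prune a factor to out-degree bounded in terms of $|\Tmc|$ while preserving a \emph{full} global $\ALC$-bisimulation $\Imc_i^\star \sim_{\ALC}^g \tilde\Imc_i$: that would require, at every node, a successor in every $\ALC$-bisimulation class realised among its successors, and the number of such classes is not bounded by any function of $|\Tmc|$. What the ``one successor per $\Tmc$-type'' selection actually delivers is only that every retained node keeps its $\Tmc$-type---and $\Tmc$-type preservation in each factor does \emph{not} lift to the product. Concretely, if $(d_1,d_2)\in(\exists r.E)^{\tilde\Imc_1\times\tilde\Imc_2}$ is witnessed by $(e_1,e_2)$ and you replace each $e_i$ by some $e_i'$ of the same factor-wise $\Tmc$-type, the pair $(e_1',e_2')$ can have a different $\Tmc$-type in the product, so the refutation at the root may disappear. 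The sentence ``this factor-wise pruning automatically lifts to a global bisimulation on the product'' is exactly where the argument breaks: either the factor relations are full bisimulations (and then they need not have bounded degree), or they are only $\Tmc$-type-preserving (and then their product is not a bisimulation).

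You gesture at a ``pair-aware'' refinement, which is indeed the right direction, but then the selection is no longer factor-wise and the bisimulation-lifting framing is moot. The real difficulty---which your sketch does not address---is bounding the out-degree under a pair-aware selection: a node $p_1\in\tilde\Imc_1$ can be the first coordinate of \emph{many} pairs $(p_1,p_2)$ in the product, and naively keeping one successor of $p_1$ per such pair and per existential gives unbounded out-degree. The paper handles this by building the selected sets $\Gamma_1,\Gamma_2$ \emph{jointly}, restricting the product-witness condition to pairs reachable from the root in $\tilde\Imc_1\times\tilde\Imc_2$ and to existentials $\exists r.C$ with $\mathrm{rd}(\exists r.C)\le n-\delta_{12}(p_1,p_2)$. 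This depth cut-off is what makes the out-degree obey the recurrence $f(0)=2n$, $f(i)=n+n\cdot f(i-1)$, which terminates at level $n$ and yields the bound $2^{n^2+n+1}$. That joint, depth-bounded selection and its recurrence analysis is the missing ingredient.
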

\begin{proof}
  Assume that $\Tmc= \{ \top \sqsubseteq C_\Tmc \}$ is not preserved
  under products. Then there are models $\Imc_1$ and $\Imc_2$ of \Tmc
  such that $\Imc_1 \times \Imc_2$ is not a model of \Tmc. Thus, there
  is a $(\hat d_1,\hat d_2) \in \Delta^{\Imc_1 \times \Imc_2}$ with
  $(\hat d_1,\hat d_2) \notin C_\Tmc^{\Imc_1 \times \Imc_2}$.  We
  proceed in two steps: first unravel $\Imc_1$ and $\Imc_2$ into
  tree-interpretations, then restrict their out-degree.  An
  \emph{$i$-path}, $i \in \{1,2\}$, is a sequence $d_0r_0d_1r_1 \cdots
  r_{k-1}d_k$, $k \geq 0$, alternating between elements of
  $\Delta^{\Imc_i}$ and role names that occur in \Tmc such that
  $d_0=\hat d_i$ and for all $i < k$, we have $(d_i,d_{i+1}) \in
  r_i^\Imc$. Define new interpretations $\Imc'_1$ and $\Imc'_2$ as
  follows:
  $$
  \begin{array}{r@{\,}c@{\,}l}
    \Delta^{\Imc'_i} &=& \text{the set of $i$-paths} \\[1mm]
    A^{\Imc'_i} &=& \{ d_0 \cdots d_k \in \Delta^{\Imc'_i} \mid d_k \in A^{\Imc_i}  \} \\[1mm]
    r^{\Imc'_i} &=& \{ (d_0 \cdots d_k,d_0 \cdots d_k r d_{k+1}) \mid d_0 \cdots d_k r d_{k+1} \in \Delta^{\Imc'_i} \}.
  \end{array}
  $$
  It can be proved by a straightforward induction that for all $C \in \mn{sub}(\Tmc)$ and $d_0
  \cdots d_k \in \Delta^{\Imc'_i}$, $i \in \{1,2\}$, we have $d_k \in
  C^{\Imc_i}$ iff $d_0 \cdots d_k \in C^{\Imc'_i}$. It follows that
  $\Imc'_1$ and $\Imc'_2$ are models of \Tmc. To show that $\Imc'_1 \times
  \Imc'_2$ is not a model of \Tmc, it suffices to establish the following
  claim, which yields $(\hat d_1, \hat d_2) \in (\neg C_\Tmc)^{\Imc'_1 \times \Imc'_2}$:
  \\[2mm]
  {\bf Claim}. For all $C \in \mn{sub}(\Tmc)$, $p_1=d^1_0 \cdots d^1_{k_1}
  \in \Delta^{\Imc'_1}$, and $p_2=d^2_0 \cdots d^2_{k_2} \in
  \Delta^{\Imc'_2}$, we have $(d^1_{k_1},d^2_{k_2}) \in C^{\Imc_1 \times
    \Imc_2}$ iff $(p_1,p_2) \in C^{\Imc'_1 \times \Imc'_2}$.
  \\[2mm]
  The proof is by induction on the structure of $C$, where the only interesting
  case is $C=\exists r . D$.

  \smallskip First let $(d^1_{k_1},d^2_{k_2}) \in (\exists r
  .D)^{\Imc_1 \times \Imc_2}$. Then there is a $(d_1,d_2) \in D^{\Imc_1
    \times \Imc_2}$ with $((d^1_{k_1},d^2_{k_2}),(d_1,d_2)) \in r^{\Imc_1
    \times \Imc_2}$.  It follows that $(d^1_{k_1},d_1) \in r^{\Imc_1}$
  and $(d^2_{k_2},d_2) \in r^{\Imc_2}$. Thus, $p_1 r d_1$ is a 1-path and
  $p_2 r d_2$ is a 2-path. Then $(p_1,p_1rd_1) \in r^{\Imc_1}$ and
  $(p_2,p_2rd_2) \in r^{\Imc_2}$ and $((p_1,p_2),(p_1rd_1,p_2rd)) \in
  r^{\Imc'_1 \times \Imc'_2}$. By IH, $(p_1rd_1,p_2rd_2) \in D^{\Imc'_1
    \times \Imc'_2}$ and we are done.
  
  Now let $(p_1,p_2) \in (\exists r . D)^{\Imc'_1 \times
    \Imc'_2}$. Then there are $(q_1,q_2) \in D^{\Imc'_1 \times
    \Imc'_2}$ such that $((p_1,p_2),(q_1,q_2)) \in r^{\Imc'_1 \times
    \Imc'_2}$. By definition of products and $\Imc'_1$ and $\Imc'_2$,
  we have $q_1 = p_1rd_1$ and $p_2 = p_2 r d_2$ for some $d_1 \in
  \Delta^{\Imc_1}$ and $d_2 \in \Delta^{\Imc_2}$. Since $q_1$ is a
  1-path and $q_2$ a 2-path, we have $(d^1_{k_1},d_1) \in r^{\Imc_1}$
  and $(d^2_{k_2},d_2) \in r^{\Imc_2}$, thus
  $((d^1_{k_1},d^2_{k_2}),(d_1,d_2)) \in r^{\Imc_1 \times \Imc_2}$.
  By IH, $(d_1,d_2) \in D^{\Imc_1 \times \Imc_2}$ and we are done.
  
  \medskip We now define interpretations $\Imc''_1$ from $\Imc'_1$ and
  $\Imc''_2$ from $\Imc'_2$ by dropping `unnecessary' subtrees, which
  results in a reduction of the maximum out-degree to $2^{n^2+n+1}$.  To select
  the subtrees in $\Imc'_1$ and $\Imc'_2$ that must not be dropped, we
  first need a notion of distance in the product interpretation
  $\Imc'_1 \times \Imc'_2$: for all $(p_1,p_2) \in \Imc'_1 \times
  \Imc'_2$, let $\delta_{12}(p_1,p_2)$ denote the length of the path
  from $(\hat d_1, \hat d_2)$ to $(p_1,p_2)$ in $\Imc'_1 \times
  \Imc'_2$, if such a path exists (note that the path is unique if it
  exists); otherwise, $\delta_{12}(p_1,p_2)$ is undefined.  Now choose
  for each $i \in \{1,2\}$, a smallest set $\Gamma_i \subseteq
  \Delta^{\Imc'_i}$ such that the following conditions are satisfied:
  \begin{enumerate}

  \item[(a)] $\hat d_i \in \Gamma_i$;

  \item[(b)] whenever $p \in \Gamma_i$ and
    $\exists r . C \in \mn{sub}(\Tmc)$ with $p \in (\exists r
    . C)^{\Imc'_i}$, then there is a $p' \in \Gamma_i$ such that
    $(p,p') \in r^{\Imc'_i}$ and $p' \in C^{\Imc'_i}$;

  \item[(c)] whenever $p_1 \in \Gamma_1$ and $p_2 \in \Gamma_2$ and
    $\exists r . C \in \mn{sub}(\Tmc)$ with $(p_1,p_2) \in (\exists r
    . C)^{\Imc'_1 \times \Imc'_2}$, $\delta_{12}(p_1,p_2)$ is defined,
    and $\mn{rd}(\exists r . C) \leq |\Tmc|-\delta_{12}(p_1,p_2)$,
    then there is a $(p'_1,p'_2) \in C^{\Imc'_1 \times \Imc'_2}$ such
    that $((p_1,p_2),(p'_1,p'_2)) \in r^{\Imc'_1 \times \Imc'_2}$,
    $p_1 \in \Gamma_1$, and $p_2 \in \Gamma_2$.

  \end{enumerate}
  Now let $\Imc''_i$ be obtained from $\Imc'_i$ by dropping all
  subtrees whose root is not in $\Gamma_i$, for $i \in \{1,2\}$. The
  following can be proved by a straightforward structural induction.
  \\[2mm]
  {\bf Claim}. For all $C \in \mn{sub}(\Tmc)$, $p_1 \in \Delta^{\Imc''_1}$,
  $p_2 \in \Delta^{\Imc''_2}$, and $i \in \{1,2\}$, we have
  \begin{enumerate}

  \item $p_i \in C^{\Imc'_i}$ iff $p_i \in C^{\Imc''_i}$;

  \item $(p_1,p_2) \in C^{\Imc'_1 \times \Imc'_2}$ iff $(p_1,p_2) \in C^{\Imc''_1 \times \Imc''_2}$
    whenever $\delta_{12}(p_1,p_2)$ is defined and $\mn{rd}(C) \leq |\Tmc|-\delta_{12}(p_1,p_2)$.

  \end{enumerate}
  It follows that $\Imc''_1$ and $\Imc''_2$ are still models of \Tmc,
  and that $(\hat d_1, \hat d_2) \notin C_\Tmc^{\Imc''_1 \times
    \Imc''_2}$, thus $\Imc''_1 \times \Imc''_2$ is not a model of
  \Tmc. It remains to verify that the out-degree of $\Imc''_1$ and
  $\Imc''_2$ is bounded by $2^{n^2+n+1}$. First define distance functions $\delta_1$
  and $\delta_2$ in $\Imc'_1$ and $\Imc'_2$, analogously to the definition
  of $\delta_{12}$. Let $|\Tmc|=n$, $f(0)=2n$ and $f(i)=n + n \cdot f(i-1)$
  for all $i > 0$. We establish the following
  \\[2mm]
  {\bf  Claim.} For all $i \in \{1,2\}$ and $p \in \Delta^{\Imc''}$, 
  \begin{enumerate}

  \item $p$ has at most $f(\delta_i(p))$ successors;

  \item $p$ has at most $n$ successors if $\delta_i(p) \geq n$.

  \end{enumerate}
  Since Point~2 is obvious by our use of $\delta_{12}$ in
  Condition~(c) of the definition of $\Gamma_1$ and $\Gamma_2$, we
  concentrate on Point~1 of the claim. It is proved by induction on
  $\delta_i(p)$. For the induction start, let $\delta_i(p)=0$, i.e.,
  $p=\hat d_i$. Then $p$ has at most $n$ successors selected due to
  Condition~(b) of the definition of $\Gamma_1$ and $\Gamma_2$ and at
  most $n$ successors selected due to Condition~(c). It remains to
  remind that $f(0)=2n$. For the induction step, we concentrate on the
  case $i=1$; the case $i=2$ is symmetric. Thus, let $\delta_1(p)>0$.
  Again, at most $n$ successors are selected due to Condition~(b) of
  the definition of $\Gamma_1$ and $\Gamma_2$. In $\Imc_1' \times
  \Imc'_2$, the number of elements $(p,q)$ for which
  $\delta_{12}(p,q)$ is defined is bounded by the maximal number of
  successors of elements $q' \in \Delta^{\Imc'_2}$ with $\delta_2(p')
  = \delta_1(p) -1$; the reason is that $\delta_{12}(p,q)$ is defined
  only if the predecessor $(p',q')$ of $(p,q)$ satisfies the
  following properties:
  \begin{enumerate}

  \item $p'$ is the unique predecessor of $p$ in $\Imc_1$;

  \item $q$ is a successor of $q'$ in $\Imc_2$.

  \end{enumerate}
  By IH, there are thus at most $f(\delta_1(p)-1)$ such elements
  $(p,q)$. For each such $(p,q)$, at most $n$ successors of $p$
  are selected in Condition~(c) of the definition of $\Gamma_1$ and
  $\Gamma_2$.  Thus, the maximum number of successors of $p$ is 
  $$n+
  n*f(\delta_1(p)-1) = f(\delta_1(p)).
  $$
  This finishes the proof of the claim.
  Now, an easy analysis of the recurrence in Point~1 of the above claim
  yields a maximum out-degree of $2^{n^2+n+1}$.
\end{proof}
%
For an interpretation \Imc and a $d \in \Delta^\Imc$, we use
$\mn{tp}_\Tmc^\Imc$ to denote the \emph{semantic \Tmc-type} of $d$ in
\Imc, i.e., $\mn{tp}_\Tmc^\Imc(d) = \{ C \in \mn{sub}(\Tmc) \mid d \in
C^\Imc\}$. The set of all semantic \Tmc-types is
$$
  \Tmf = \{ \mn{tp}^\Imc(d) \mid \Imc \text{ a model of } \Tmc,\ d \in \Delta^\Imc \}.
$$
For $t_1,t_2 \in \Tmf$, set $t_{1} \leadsto_{r} t_{2}$ if we have
$\exists r .C \in t_1$ iff $C \in t_2$, for all $\exists r . C \in
\mn{sub}(\Tmc)$. For $k \geq 0$, a \emph{$k$-initial interpretation
  tree} is a triple $(\Imc,\rho^\Imc,t^\Imc)$, where \Imc is a
tree-shaped interpretation of depth at most $k$ and with root
$\rho^\Imc$ and $t^\Imc: \Delta^\Imc \rightarrow \Tmf$. For $d \in
\Delta^\Imc$, we use $\delta_\Imc(d)$ to denote the distance of $d$
from the root of \Imc. We require that the following conditions are
satisfied, for all $d,e \in \Delta^\Imc$:
\begin{enumerate}

\item $d \in A^\Imc$ iff $A \in t^\Imc(d)$ for all $A \in \NC$;

\item if $\exists r . C \in t^\Imc(d)$ and $\delta_\Imc(d) < k$, then
  there is an $e \in \Delta^\Imc$ with $(d,e) \in r^\Imc$ and
  $C \in t^\Imc(e)$;

\item if $(d,e) \in r^\Imc$, then $t^\Imc(d) \leadsto t^\Imc(e)$.

\end{enumerate}
When we speak about the \emph{product} $\Imc_1 \times \Imc_2$ of two
$k$-initial interpretation trees $\Imc_1$ and~$\Imc_2$, we simply mean
the product of the interpretations $(\Delta^{\Imc_1},\cdot^{\Imc_1})$
and $(\Delta^{\Imc_2},\cdot^{\Imc_2})$, i.e., the annotating
components $\rho^{\Imc_i}$ and $t^{\Imc_i}$ are dropped before forming
the product.
\begin{lemma}
\label{lem:charproducts}
An \ALC-TBox $\Tmc = \{ \top \sqsubseteq C_\Tmc \}$ is not preserved
under products iff there are $n$-initial interpretation trees $\Imc_1$
and $\Imc_2$ of maximum out-degree $2^{n^2+n+1}$ such that
$(\rho_1,\rho_2) \notin C_\Tmc^{\Imc_1 \times \Imc_2}$, where
$n=|\Tmc|$.
\end{lemma}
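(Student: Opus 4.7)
The plan is to reduce each direction to Lemma~\ref{lem:productunravel} on one side and to a ``type-realization grafting'' construction on the other.

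For the forward direction ($\Rightarrow$), suppose $\Tmc$ is not preserved under products. Lemma~\ref{lem:productunravel} yields tree-models $\Imc_1,\Imc_2$ of $\Tmc$ with out-degree at most $2^{n^2+n+1}$ whose product falsifies $C_\Tmc$ at some pair $(\hat d_1,\hat d_2)$, and its proof in fact takes $\hat d_i$ as the root of $\Imc_i$. I would take $\Imc_i'$ to be the restriction of $\Imc_i$ to the descendants of $\hat d_i$ of depth $\leq n$, set $\rho^{\Imc_i'}:=\hat d_i$, and annotate each retained $d$ with $t^{\Imc_i'}(d):=\mn{tp}_\Tmc^{\Imc_i}(d)\in\Tmf$. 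Conditions~(1) and~(3) are immediate from the definition of semantic type, and condition~(2) survives truncation because a witness in $\Imc_i$ for $\exists r.C\in t^{\Imc_i'}(d)$ at depth $<n$ sits at depth $\leq n$ and is therefore retained. A routine induction on $C\in\mn{sub}(\Tmc)$ then shows that whenever $\mn{rd}(C)+\max(\delta_{\Imc_i'}(d_i))\leq n$, membership of $(d_1,d_2)$ in $C^{\Imc_1'\times\Imc_2'}$ agrees with its membership in $C^{\Imc_1\times\Imc_2}$; applied at the root pair and $C=C_\Tmc$ this preserves the failure.

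For the backward direction ($\Leftarrow$), fix for every $t\in\Tmf$ a witness model $\Kmc_t\models\Tmc$ with a distinguished element $e_t$ of semantic type $t$. Define $\Jmc_i$ by attaching at every depth-$n$ leaf $d$ of $\Imc_i$ a fresh disjoint copy of $\Kmc_{t^{\Imc_i}(d)}$, with $d$ identified with the copy of $e_{t^{\Imc_i}(d)}$. The key claim, proved by induction on $C\in\mn{sub}(\Tmc)$, is that $d\in C^{\Jmc_i}$ iff $C\in t^{\Imc_i}(d)$ for every $d\in\Delta^{\Imc_i}$. Atomic concepts use condition~(1); Booleans are routine; for $\exists r.D$ at depth-$<n$ elements $d$ the outgoing $\Jmc_i$-edges of $d$ coincide with its $\Imc_i$-edges, so $(\Leftarrow)$ follows from condition~(2) plus IH and $(\Rightarrow)$ from condition~(3) plus IH. A parallel argument shows that for elements $e$ living in an attached copy of $\Kmc_t$, $\Jmc_i$-satisfaction of $C\in\mn{sub}(\Tmc)$ agrees with $\Kmc_t$-satisfaction; together with $\Kmc_t\models\Tmc$ and the fact that $C_\Tmc\in t$ for every $t\in\Tmf$, this yields $\Jmc_i\models\Tmc$. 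An entirely analogous induction for the product, valid whenever $\mn{rd}(C)+\max(\delta_{\Imc_1}(d_1),\delta_{\Imc_2}(d_2))\leq n$, transfers the failure $(\rho_1,\rho_2)\notin C_\Tmc^{\Imc_1\times\Imc_2}$ to $\Jmc_1\times\Jmc_2$, so $\Tmc$ is not preserved under products.

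The main obstacle is the existential step of the $\Jmc_i$-claim at a depth-$n$ leaf $d$, together with the companion statement for elements inside the attached $\Kmc_t$: in both cases the identification $d=e_t$ creates new outgoing edges for $d$, and one must verify that concept satisfaction at $d$ and within $\Kmc_t$ still matches, respectively, the label $t$ and $\Kmc_t$-satisfaction alone. The crucial observation is that $d$ has no outgoing edges in $\Imc_i$ at all (as a leaf), so all $\Jmc_i$-successors of $d$ lie in the attached $\Kmc_t$, and forward-reachability from any element of an attached $\Kmc_t$ never escapes $\Kmc_t$; this insulates the attached component from the rest of $\Jmc_i$ as far as $\ALC$-semantics is concerned, reducing both equivalences to properties of $\Kmc_t$ alone.
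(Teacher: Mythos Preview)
Your proposal is correct and follows essentially the same approach as the paper: the forward direction truncates the tree-models provided by Lemma~\ref{lem:productunravel} at depth~$n$ and annotates with semantic types, while the backward direction grafts type-realizing models of $\Tmc$ onto the depth-$n$ leaves and verifies via a threefold induction (original nodes, grafted nodes, product pairs) that the resulting $\Jmc_i$ are models of $\Tmc$ whose product still falsifies $C_\Tmc$ at the root pair. The only cosmetic difference is that the paper phrases the product-level inductive hypothesis in terms of $\delta_{12}(d_1,d_2)$ (distance from the root pair inside the product), whereas you use $\max(\delta_{\Imc_1}(d_1),\delta_{\Imc_2}(d_2))$; on pairs reachable from the root these quantities coincide, so the two formulations are interchangeable for the conclusion needed.
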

\begin{proof}
  First assume that \Tmc is not preserved under products. By
  Lemma~\ref{lem:productunravel}, there are tree-shaped models
  $\Jmc_1$ and $\Jmc_2$ of \Tmc of maximum out-degree $2^{n^2+2}$ such
  that $\Jmc_1 \times \Jmc_2$ is not a model of \Tmc. Let $\rho_i$ be
  the root of $\Jmc_i$, for $i \in \{1,2\}$.  W.l.o.g., we can assume
  that $(\rho_1,\rho_2) \notin C_\Tmc^{\Jmc_1 \times \Jmc_2}$ (if this
  is not the case, replace $\Jmc_1$ and $\Jmc_2$ by suitable subtrees
  of these models).  Define $n$-initial interpretation trees $\Imc_1$
  and $\Imc_2$ by starting with $\Jmc_1$ and $\Jmc_2$, removing all
  nodes of depth exceeding $n$, and adding the annotations
  $\rho^{\Imc_i}=\rho_i$ and $t^{\Imc_i}$, where the latter is defined
  by setting $t^{\Imc_i}(d)= \mn{tp}_\Tmc^{\Jmc_i}(d)$ for all $d \in
  \Delta^{\Imc_i}$.  It remains to show that
  $(\rho^{\Imc_1},\rho^{\Imc_2}) \notin C_\Tmc^{\Imc_1 \times
    \Imc_2}$, which is an immediate consequence of $(\rho_1,\rho_2)
  \notin C_\Tmc^{\Jmc_1 \times \Jmc_2}$ and the following claim, whose
  proof is left to the reader. For all $(d_1,d_2) \in
  \Delta^{\Imc_1 \times \Imc_2}$, we use $\delta_{12}(d_1,d_2)$ to
  denote the length of the unique path from
  $(\rho^{\Imc_1},\rho^{\Imc_2})$ to $(d_1,d_2)$ in $\Imc_1 \times
  \Imc_2$ if such a path exists; otherwise, $\delta_{12}(d_1,d_2)$ is
  undefined. 
  \\[2mm]
  {\bf Claim}. For all $(d_1,d_2) \in \Delta^{\Imc_1 \times \Imc_2}$
  with $\delta_{12}(d_1,d_2)$ defined and $C \in \mn{sub}(\Tmc)$ with
  $\mn{rd}(C) \leq n-\delta_{12}(d_1,d_2)$, we have $(d_1,d_2)
  \in C^{\Jmc_1 \times \Jmc_2}$ iff $(d_1,d_2) \in C^{\Imc_1 \times
    \Imc_2}$.
  \\[2mm]
  Conversely, assume that there are $n$-initial interpretation trees $\Imc_1$
  and $\Imc_2$ as stated in the lemma. W.l.o.g., we assume that
  $\Delta^{\Imc_1} \cap \Delta^{\Imc_2} = \emptyset$. For $i \in
  \{1,2\}$, let $F_i= \{ d \in \Delta^{\Imc_i} \mid
  \delta_{\Imc_i}(d)=n \}$.  For each $d \in F_i$, choose a model
  $\Imc_d$ of \Tmc and a $\rho_d \in \Delta^{\Imc_d}$ such that
  $\mn{tp}_\Tmc^{\Imc_d}(\rho_d)=t^{\Imc_i}(d)$ (which exists by definition
  of \Tmf and initial interpretation trees). W.l.o.g., assume that
  $\Delta^{\Imc_d} \cap \Delta^{\Imc_e} = \emptyset$ whenever $d \neq
  e$ and $\Delta^{\Imc_i} \cap \Delta^{\Imc_d} = d$ for all $d \in
  F_i$, $i \in \{1,2\}$. Now let $\Jmc_i$ be the interpretation obtained
  by taking the union of $\Imc_i$ and all $\Imc^d$, $d \in F_i$. In detail:
  $$
  \begin{array}{rcl}
    \Delta^{\Jmc_i} &=& \Delta^{\Imc_i} \cup \bigcup_{d \in F_i} \Delta^{\Imc_d} \\[1mm]
    A^{\Jmc_i} &=& A^{\Imc_i} \cup \bigcup_{d \in F_i} A^{\Imc_d} \\[1mm]
    r^{\Jmc_i} &=& r^{\Imc_i} \cup \bigcup_{d \in F_i} r^{\Imc_d} 
  \end{array}
  $$
  The following claim can be proved by a straightforward induction. It
  implies that $\Jmc_1$ and $\Jmc_2$ are models of \Tmc, but $\Jmc_1
  \times \Jmc_2$ is not, whence $\Tmc$ is not closed under
  products. Details are left to the reader. For $(d_1,d_2) \in
  \Delta^{\Imc_1 \times \Imc_2}$, we use $\delta_{1,2}(d_1,d_2)$ to
  denote the length of the path $(d_1,d_2)$ from
  $(\rho^{\Imc_1},\rho^{\Imc_2})$; such a path need not exist (then 
  $\delta_{1,2}(d_1,d_2)$ is undefined), but it is unique if it exists.
  \\[2mm]
  {\bf Claim}. For all $C \in \mn{sub}(\Tmc)$, $d_1 \in
  \Delta^{\Imc_1}$, $d_2 \in \Delta^{\Imc_2}$, $e_1 \in
  \Delta^{\Imc_{d_1}}$, and $e_2 \in \Delta^{\Imc_{d_2}}$,
  we have
  \begin{enumerate}

  \item $C^{\Imc_i} \in t^{\Imc_i}(d_i)$ iff $d_i \in C^{\Jmc_i}$;

  \item $e_i \in C^{\Imc_{d_i}}$ iff $e_i \in C^{\Jmc_i}$;

  \item $(d_1,d_2) \in C^{\Imc_1 \times \Imc_2}$ iff $(d_1,d_2) \in C^{\Jmc_1 \times \Jmc_2}$
    whenever $\delta_{12}(d_1,d_2)$ is defined and $\mn{rd}(C) \leq n-\delta_{12}(d_1,d_2)$.

  \end{enumerate}
\end{proof}
By Lemma~\ref{lem:charproducts}, to decide whether a given \ALC-TBox
\Tmc is not preserved under products, it suffices to guess two initial
interpretation trees $\Imc_1$ and $\Imc_2$ whose size is bounded
exponentially in that of $|\Tmc|$, and then verifying that
$(\rho_1,\rho_2) \notin C_\Tmc^{\Imc_1 \times \Imc_2}$. It is not hard
to see that the latter can be done in time polynomial in the size of
$\Imc_1$ and $\Imc_2$, by explicitly forming the product and then
applying model checking. We have proved the upper bound stated in
Theorem~\ref{prodnew}.

The lower bound stated in Theorem~\ref{prodnew}
is proved by reduction of the $2^{n+1} \times 2^{n+1}$-tiling problem. 
\begin{definition}[Tiling System]\label{domsys}
  A \emph{tiling system} $\Tmf$ is a triple $(T,H,V)$, where $T =
  \{0,\dots,k-1\}$, $k \geq 0$, is a finite set of \emph{tile types}
  and $H,V \subseteq T \times T$ represent the \emph{horizontal and
    vertical matching conditions}.  Let \Tmf be a tiling system and $c
  = c_0,\dots,c_{n-1}$ an \emph{initial condition}, i.e.\ an $n$-tuple
  of tile types.  A mapping $\tau: \{0,\dots,2^{n+1}-1\} \times
  \{0,\dots,2^{n+1}-1\} \to T$ is a \emph{solution} for \Tmf and $c$
  iff for all $x,y < 2^{n+1}$, the following holds (where $\oplus_i$
  denotes addition modulo $i$):
  \begin{itemize}
  \item if $\tau(x,y) = t$ and $\tau(x \oplus_{2^{n+1}} 1,y) =
    t'$, then $(t,t') \in H$
  \item if $\tau(x,y) = t$ and $\tau(x,y \oplus_{2^{n+1}} 1) =
    t'$, then $(t,t') \in V$
  \item $\tau(i,0) = c_i$ for $i < n$.
  \end{itemize}
\end{definition}
To represent grid positions, we use a binary counter that is
implemented through concept names $X_0,\dots,X_{2(n+1)}$ and
$\overline{X}_0,\dots,\overline{X}_{2(n+1)}$ where truth of $X_i$
indicates that bit $i$ is set, truth of $\overline{X}_i$ indicates
that bit~$i$ is not set, the first $n+1$ bits represent the horizontal
value of the grid position, and the remaining $n+1$ bits the vertical
value. To reflect the latter, we will sometimes write $Y_0,\dots,Y_{n}$
instead of $X_{n+1},\dots,X_{2(n+1)}$, and likewise for
$\overline{Y}_0,\dots,\overline{Y}_{n}$.  Define
\newcommand{\ol}{\overline}
$$
\begin{array}{r@{\;}c@{\;}l}
  \mn{tree}_n &=& \midsqcap_{i \leq 2(n+1)} \!\forall r^i . ( \exists r . X_i \sqcap \exists r . \ol{X}_i) \, \sqcap \\[4mm]
  &&\midsqcap_{i < j < 2(n+1)} \!\!\!\!\!\!\! \forall r^j . \big ( \; ( X_i \rightarrow  \forall r . X_i) 
  \sqcap ( \ol X_i \rightarrow  \forall r . \ol X_i) \; \big ) \, \sqcap\\[4mm]
  && \forall r^{2(n+1)} . (\exists r . P  \sqcap \exists r . R \sqcap \exists r . U) \, \sqcap \\[1mm]
  && \forall r^{2(n+1)} . (\forall r . (P \rightarrow X{=}{=})  \sqcap 
\forall r . (P \rightarrow Y{=}{=})) \,  \sqcap  \\[1mm]
  && \forall r^{2(n+1)} . (\forall r . (R \rightarrow X{+}{+})  \sqcap 
\forall r . (R \rightarrow Y{=}{=})) \,  \sqcap  \\[1mm]
  && \forall r^{2(n+1)} . (\forall r . (U \rightarrow X{=}{=})  \sqcap 
\forall r . (U \rightarrow Y{+}{+}))  \, \sqcap \\[1mm]
  && \midsqcap_{j \leq 2(n+1)+1 \atop i \leq 2(n+1)} \forall r^j . \neg ( X_i \sqcap \ol X_i) 
\end{array}
$$
where $\forall r . (P \rightarrow X{=}{=})$ is a concept which
expresses that the horizontal value of all $r$-successors that satisfy
$P$ is identical to the horizontal value at the current node, $\forall
r . (U \rightarrow Y{+}{+})$ expresses that the vertical value of all
$r$-successors that satisfy $P$ can be obtained from the vertical
value at the current node by incrementation, and so on. It is left to
the reader to work out the details of these concepts, we only
give $\forall r . (R \rightarrow Y{=}{=})$ as an example:
$$
\midsqcap_{i \leq n} (Y_i \rightarrow \forall r . (R \rightarrow Y_i))
\sqcap
\midsqcap_{i \leq n} (\ol Y_i \rightarrow \forall r . (R \rightarrow
\ol Y_i)).
$$
Intuitively, the concept $\mn{tree}_n$ generates a tree that contains
all the grid positions, where each subtree rooted at level $2(n+1)$
represents a small fragment of the grid. More specifically, such a
subtree has depth 1 and represents a grid node (the $P$-leaf, where $P$
stands for `current position'), its right neighbor (the $R$-leaf), and
its upper neighbor (the $U$-leaf). To achieve that each such fragment
has a proper tiling, define
$$
\begin{array}{l}
 \mn{tiling}_{\Tmf,c} = \\[3mm]
~~~~~\forall r^{2(n+1)} . \big ( \; \midsqcup_{(t,t') \in H} ( \forall r . (P \rightarrow T_t) \sqcap
  \forall r . (R \rightarrow T_{t'}) \; \big ) \, \sqcap \\[4mm]
~~~~~   \forall r^{2(n+1)} . \big ( \; \midsqcup_{(t,t') \in V} ( \forall r . (P \rightarrow T_t) \sqcap
  \forall r . (U \rightarrow T_{t'}) \; \big ) \, \sqcap\\[4mm]
~~~~~ \forall r^{2(n+1)+1} . \big ( \; \neg \midsqcap_{t,t' \in T} (T_t \sqcap T_{t'}) \; \big ) \, \sqcap \\[5mm]
~~~~~ \forall r^{2(n+1)+1} . \big ( \; \midsqcap_{i < n} ( (P \sqcap
(X{=}{=}i) \sqcap (Y{=}{=}0)) \rightarrow T_{c_i} ) \; \big ) 
\end{array}
$$
where $c_i$ is the $i$-th bit of the initial condition $c$,
$(X{=}{=}i)$ is a concept expressing that the horizontal value at the
current node is identical to the constant $i$, and similarly for
$(Y{=}{=}0)$. 

Note that each position (except those on the fringes of the grid)
occurs at least three times in the tree: as a $P$-node, as an
$R$-node, and as a $U$-node.  To represent a proper solution to the
tiling system, it remains to ensure that multiple occurrences of the
same grid position are labelled with the same tile type. To achieve
this, we
use products. Assume there are \emph{two} tree interpretations of the
above form.  The following concept is true in the root of their
product interpretation iff the two component interpretations 
disagree on the tiling of some position:
$$
\begin{array}{rcl}
\mn{defect}_{n} &=& \exists r^{2(n+1)+1} . (\midsqcap_{i \leq
  2(n+1)} (X_i \sqcup \ol X_i) \sqcap \midsqcap_{t \in T} \neg T_t ) \\[3mm]
\end{array}
$$
To assemble all the pieces into a single concept, set $C_{\Tmf,c}=D_1
\sqcup D_2 \sqcup D_3$ where
$$
\begin{array}{rcl}
D_1 &=&(\mn{tree}_n \sqcap
  \mn{tiling}_{\Tmf,c} \sqcap M) \\[1mm]
D_2&=&  (\mn{tree}_n \sqcap
  \mn{tiling}_{\Tmf,c} \sqcap M')  \\[1mm]
D_3&=&  (\mn{tree}_n \sqcap
  \mn{defect}_{n})
\end{array}
$$
The above encoding of solutions of tiling systems works purely on the
level of concepts, and does not necessarily need TBoxes. We believe
that this is interesting and start with proving a strong form of
correctness: the following lemma shows that given a concept $C$, it is
co-\NExpTime-hard to decide whether $C$ is preserved under
products. In a subsequent step, we will raise this result to the level
of TBoxes.
\begin{lemma}
\label{lem:concprodredcorr}
  There is a solution for \Tmf and $c$ iff $C_{\Tmf,c}$ is not
  preserved under products.
\end{lemma}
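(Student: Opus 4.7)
My plan is to prove the two directions separately, leaning on the Horn-like preservation of $\mn{tree}_n$ and (partially) of $\mn{tiling}_{\Tmf,c}$ under products, together with the way the two disjoint markers $M$ and $M'$ prevent the product from satisfying $D_1$ or $D_2$.

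For the forward direction (solution $\Rightarrow$ not preserved), given a solution $\tau$ I would build a single tree interpretation $\Jmc$ of depth $2(n{+}1)+1$ faithfully encoding $\tau$: the branching on levels $0,\ldots,2(n{+}1)-1$ enumerates all grid positions via $X_i/\ol X_i$ labels, and each depth-$2(n{+}1)$ node representing position $p$ has three $r$-successors labelled $P$, $R$, $U$ carrying the tiles of $p$, its horizontal successor, and its vertical successor, all consistent with $\tau$. Let $\Imc_1$ be $\Jmc$ with its root additionally marked $M$ only and $\Imc_2$ be $\Jmc$ with root marked $M'$ only; the two roots then satisfy $D_1$ and $D_2$ respectively. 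In the product the root pair carries neither $M$ nor $M'$ (both components would have to share the marker), and because both trees encode the same $\tau$, every depth-$(2n{+}3)$ pair representing a common grid position has both components in a common $T_t$, so no defect arises. Hence the root pair falls outside $D_1\cup D_2\cup D_3$ and $C_{\Tmf,c}$ is violated in the product.

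For the converse, given witnesses $d_1, d_2$ to non-preservation, I would first verify that $\mn{tree}_n$ is preserved under products by inspecting each conjunct; the only disjunctions involved are of the form $\neg A \sqcup B$ with $B$ built from $\sqcap$, $\forall r$, $\exists r$ over concept literals, and such disjunctions transfer because $(v_1,v_2)\notin A$ whenever some component falsifies $A$. This gives $(d_1,d_2)\in\mn{tree}_n$, and together with $(d_1,d_2)\notin D_3$ it forces $(d_1,d_2)\notin\mn{defect}_n$. Next I would rule out $d_i$ satisfying $\mn{defect}_n$ individually: if $d_1$ had a depth-$(2n{+}3)$ descendant $f_1$ representing some position $p$ with no tile, then $\mn{tree}_n$ in $\Imc_2$ yields a descendant $f_2$ of $d_2$ also representing $p$, and $(f_1,f_2)$ would be a defect in the product, contradicting the previous step. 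So both $d_1,d_2\in D_1\cup D_2$ and in particular satisfy $\mn{tree}_n\sqcap\mn{tiling}_{\Tmf,c}$. The solution $\tau$ is then read off position-wise: for each grid position $p$, any depth-$(2n{+}3)$ $P$-descendant of $d_1$ representing $p$ exists (by $\mn{tree}_n$) and carries a unique tile $t$ (by the uniqueness clause of $\mn{tiling}_{\Tmf,c}$); by the absence of defects in the product when $f_1$ is paired with any corresponding $\Imc_2$-descendant, $t$ is the same across both interpretations, so $\tau(p):=t$ is well-defined. The horizontal and vertical matching conditions and the initial condition for $\tau$ then follow directly from the corresponding clauses of $\mn{tiling}_{\Tmf,c}$ evaluated at depth-$2(n{+}1)$ parents.

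The hard part will be the backward direction, where three facts must work in concert: preservation of $\mn{tree}_n$ under products, completeness of both trees forced by $\mn{tree}_n$ (so any local defect in $\Imc_i$ propagates to a product defect via the corresponding leaf in $\Imc_{3-i}$), and the observation that in the product a pair satisfies $T_t$ only if \emph{both} components do (so any tile disagreement across $\Imc_1,\Imc_2$ at the same grid position also yields a product defect). Together these force complete tile agreement between the two interpretations, from which $\tau$ can be extracted. The role of the two disjoint markers $M, M'$ is precisely to guarantee that the product does not re-enter $D_1$ or $D_2$ even when both $d_1, d_2$ belong there, routing the contradiction through the defect analysis.
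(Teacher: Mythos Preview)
Your proposal is correct and rests on the same three pillars as the paper's proof: preservation of $\mn{tree}_n$ under products, the propagation of an individual defect in one factor to a defect in the product (via a position-matching leaf in the other factor), and the observation that a tile disagreement at a common grid position between the two factors yields a product pair lying in no $T_t$.

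The one organisational difference is that you argue the backward direction in its contrapositive form (``not preserved $\Rightarrow$ solution''), extracting $\tau$ from the two trees, whereas the paper argues ``no solution $\Rightarrow$ preserved'' directly by showing that $(d_1,d_2)\in\mn{defect}_n^{\Imc_1\times\Imc_2}$ in all cases. The paper's route is slightly shorter because it avoids the well-definedness bookkeeping for $\tau$: in your extraction you must check not only that $\Imc_1$ and $\Imc_2$ agree at each position (which you state), but also that within $\Imc_1$ all $P$-, $R$-, and $U$-leaves at a common grid position carry the same tile---this is needed for the matching conditions, since $\tau(p{+}1)$ is defined via a $P$-leaf while the $H$-clause speaks about an $R$-leaf. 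It does follow (any two $\Imc_1$-leaves at position $q$ agree because each must agree with a fixed $\Imc_2$-leaf at $q$, else a product defect arises), but your phrase ``follow directly'' understates the work. Also, the claim that a $P$-leaf ``carries a unique tile'' uses not only the at-most-one clause of $\mn{tiling}_{\Tmf,c}$ but also your earlier exclusion of $d_1\in\mn{defect}_n$, which supplies at-least-one. With these two clarifications made explicit, your argument goes through and is equivalent to the paper's.
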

\begin{proof}
  First assume that \Tmf and $c$ have a solution $\tau$. We define tree
  interpretations $\Imc_1$ and $\Imc_2$ such that $\Imc_i$ is a model
  of $D_i$ for $i \in \{1,2\}$ (thus both $\Imc_1$ and $\Imc_2$ are
  models of $C_{\Tmf,c}$), but their product is not a model of
  $C_{\Tmf,c}$. For $i \in \{1,2\}$, define 
  $$
    \Delta^{\Imc_i} = \displaystyle \bigcup_{i \leq 2(n+1)} \{ 0,1 \}^i \cup \{0,1\}^{2(n+1)} \cdot \{ P, R, U \},
  $$
  i.e., $\Delta^{\Imc_i}$ is the set of all words over the alphabet
  $\{0,1\}$ of length at most $2(n+1)$ plus all words over $\{0,1\}$
  of length exactly $2(n+1)$ concatenated with a symbol from
  $\{P,R,U\}$. We will not distinguish between words of the former
  kind and numbers represented in binary, lowest bit first. We now
  define a function $\mu: \Delta^{\Imc_i} \rightarrow \Nbbm$ and extend
  $\tau$ to elements of $\Delta^{\Imc_i} \cap (\{0,1\}^{2(n+1)} \cdot \{ P, R, U \})$ as
  follows:
  \begin{itemize}

  \item for each $w \in \Delta^{\Imc_i} \cap \{0,1\}^*$, $\mu(w)=w$;

  \item for each $w \cdot P \in \Delta^{\Imc_i}$ with $w=w_x \cdot
    w_y$, where $w_x,w_y \in \{0,1\}^{n+1}$, set $\mu(w \cdot P)=
    \mu(w)$ and $\tau(w \cdot P)=\tau(w_x,w_y)$;
    
  \item for each $w \cdot R \in \Delta^{\Imc_i}$ with $w=w_x \cdot w_y$,
    where $w_x,w_y \in \{0,1\}^{n+1}$, set $\mu(w \cdot R)= \mu((w_x+1) \cdot w_y)$
    and $\tau(w \cdot R)=\tau(w_x+1,w_y)$;
    
  \item for each $w \cdot U \in \Delta^{\Imc_i}$ with $w=w_x \cdot w_y$,
    where $w_x,w_y \in \{0,1\}^{n+1}$, set $\mu(w \cdot U)= \mu(w_x \cdot (w_y+1))$
    and $\tau(w \cdot U)=\tau(w_x,w_y+1)$.

  \end{itemize}
  For $n \geq 0$, we use $\mn{bit}_j(n)$ to denote the $j$-th bit of
  the binary representation of the number~$n$. To complete the
  definition of $\Imc_1$ and $\Imc_2$ set for $i \in \{1,2\}$, $j
  \leq 2(n+1)$, $t \in T$, and $G \in \{P,R,U\}$:
    $$
    \begin{array}{r@{\,}c@{\,}l}
      r^{\Imc_i} &=& \{ (w,w \cdot c) \mid w \cdot c \in
      \Delta^{\Imc_i}, c \in \{0,1,P,R,U\} \}
      \\[1mm]
      X_j^{\Imc_i} &=& \{ w \mid w \in \Delta^{\Imc_i}, \mn{bit}_{j+1}(\mu(w)) = 1\} \\[1mm]
      \ol X_j &=& \{ w \mid w \in \Delta^{\Imc_i}, \mn{bit}_{j+1}(\mu(w)) = 0\} \\[1mm]
      T_t^{\Imc_i} &=& \{ w \cdot c \mid w \cdot c \in \Delta^{\Imc_i}, c \in \{ P, R, U\}, 
      \tau(w)=t\} \\[1mm]
      G^{\Imc_i} &=& \{ w \cdot G \mid w \cdot G \in \Delta^{\Imc_i} \} \\[1mm]
      M^{\Imc_1}&=&{M'}^{\Imc_2} = \{ \varepsilon \} \\[1mm]
      {M'}^{\Imc_1}&=&{M}^{\Imc_2} = \emptyset \\[1mm]
   \end{array}
    $$
    Now consider the element $(\varepsilon,\varepsilon) \in \Imc_1
    \times \Imc_2$. It is neither an instance of $M$ nor of $M'$, thus
    does neither satisfy $D_1$ nor $D_2$.  Finally, it is not hard to
    show that $(\varepsilon, \varepsilon) \notin \mn{defect}_n^{\Imc_1
      \times \Imc_2}$, essentially because all nodes in $\Imc_1$ and
    $\Imc_2$ that are on the same level and satisfy the same
    combination of $X_i$ and $\ol X_i$ concepts are labelled with the
    same concept name $T_t$. Thus, $(\varepsilon,\varepsilon)$ does
    not satisfy $D_3$, and thus also not $C_{\Tmf,c}$.

\medskip

  First assume that \Tmf and $c$ have a solution. Then take tree
  models \Imc and $\Imc'$ encoding that solution such that \Imc and
  $\Imc'$ are identical except that \Imc satisfies $M \sqcap \neg M'$
  at the root while $\Imc'$ satisfies $M' \sqcap \neg M$ there. In the
  product, $C_{\Tmf,n}$ is false: $D_1$ is false as $M$ is not
  satisfied at the root, $D_2$ is false as $M'$ is not satisfied at
  the root, and $D_3$ is false as there is no defect.

  \medskip

  Conversely, assume that there is no solution for \Tmf and $c$ and
  take two objects $(\Imc_1,d_1)$ and $(\Imc_2,d_2)$ such that $d_1
  \in C_{\Tmf,c}^{\Imc_1}$ and $d_2 \in C_{\Tmf,c}^{\Imc_2}$. We have
  to show that $(d_1,d_2) \in C_{\Tmf,c}^{\Imc_1 \times \Imc_2}$. As a
  preliminary, we state the following claim, which is easily proved
  by induction on the structure of the concept $C$.
  \\[2mm]
  {\bf Claim}. All concepts $C$ built from concept names, conjunction,
  existential restriction, universal restriction, and implication $A
  \rightarrow D$ with $A$ a concept name, are preserved under 
  products.
  \\[2mm]
  Using this claim, it is easy to show that the concept $\mn{tree}_n$
  is preserved under products: it suffices to consider each conjunct
  separately, all conjuncts except the last one are captured by the claim,
  and the last conjunct is clearly also preserved under products.

  \medskip
  \noindent
  We have $d_1 \in \mn{tree}_n^{\Imc_1}$ and $d_2 \in
  \mn{tree}_n^{\Imc_2}$, thus $(d_1,d_2) \in \mn{tree}_n^{\Imc_1
    \times \Imc_2}$. It thus suffices to show $(d_1,d_2) \in
  \mn{defect}_n^{\Imc_1 \times \Imc_2}$: then, $(d_1,d_2) \in
  D_3^{\Imc_1 \times \Imc_2}$, thus $(d_1,d_2) \in C_{\Tmf,c}^{\Imc_1
    \times \Imc_2}$. Distinguish the following cases.

  \medskip
  \noindent
  (i)~$d_1 \in D_3^{\Imc_1}$ or $d_2 \in D_3^{\Imc_2}$.
  
  \smallskip
  \noindent
  We only treat the case $d_1 \in D_3^{\Imc_1}$, as $d_2 \in
  D_3^{\Imc_2}$ is symmetric.  Since $d_1 \in \mn{defect}_n^{\Imc_1}$,
  there is an $e_1 \in \Delta^{\Imc_1}$ and a sequence
  $Z_0,\dots,Z_{2(n+1)}$, $Z_i \in \{X_i, \ol X_i\}$, such that $e_1$
  is reachable from $d_1$ in $2(n+1)+1$ steps along $r$-edges, $e_1
  \in Z_i^{\Imc_1}$ for all $i \leq 2(n+1)$, and $e_1 \notin
  T_t^{\Imc_1}$ for any $t \in T$.  Since $d_2 \in
  \mn{tree}_n^{\Imc_2}$, there is a node $e_2 \in \Delta^{\Imc_2}$
  such that $e_2$ is reachable from $d_2$ in $2(n+1)+1$ steps along
  $r$-edges and $e_2 \in Z_i^{\Imc_2}$ for all $i \leq 2(n+1)$. Then
  $(e_1,e_2)$ is reachable from $(d_1,d_2)$ in $2(n+1)+1$ steps along
  $r$-edges in $\Imc_1 \times \Imc_2$, witnessing that $(d_1,d_2) \in
  \mn{defect}_n^{\Imc_1 \times \Imc_2}$ as desired.

  \medskip
  \noindent
  (ii)~$d_1 \in (D_1 \sqcup D_2)^{\Imc_1}$ and $d_1 \in (D_1 \sqcup D_2)^{\Imc_2}$.

  \smallskip
  \noindent
  Then $d_1 \in \mn{tree}_n^{\Imc_1}$ and $d_1 \in
  \mn{tiling}_{\Tmf,c}^{\Imc_1}$.  Since there is no solution for \Tmf
  and $c$, at least one position of the grid must have non-unique tile
  types, i.e., there is a sequence $Z_0,\dots,Z_{2(n+1)}$, $Z_i \in
  \{X_i, \ol X_i\}$ and distinct $t,t' \in T$ such that 
  $$
  d_1 \in ( \exists r^{2(n+1)+1} . (\midsqcap_{i \leq 2(n+1)} Z_i \sqcap T_t))^{\Imc_1}
  $$
  and
  $$
  d_1 \in ( \exists r^{2(n+1)+1} . (\midsqcap_{i \leq 2(n+1)} Z_i \sqcap T_{t'}))^{\Imc_1}.
  $$
  Take witnesses $d_t$ and $d_{t'}$ for this, i.e., $d_t$ is reachable from
  $d_1$ in $2(n+1)+1$ steps along $r$-edges and satisfies the concept inside
  the upper existential restriction, and analogously for $d_{t'}$.
  Since $d_2 \in \mn{tree}_n^{\Imc_2}$, there is a node $e_2 \in
  \Delta^{\Imc_2}$ such that $e_2$ is reachable from $d_2$ in
  $2(n+1)+1$ steps along $r$-edges and $e_2 \in Z_i^{\Imc_2}$ for all
  $i \leq 2(n+1)$.   Since $d_2 \in \mn{tiling}_{\Tmf,c}^{\Imc_2}$, we do
  not have both $d_2 \in T_t^{\Imc_2}$ and $d_2 \in T_{t'}^{\Imc_2}$. It follows
  that $(d_t,e_2)$ or $(d_{t'},e_2)$ is a witness for $(d_1,d_2) \in
  \mn{defect}_n^{\Imc_1 \times \Imc_2}$ as desired.
\end{proof}
It is now easy to reproduce this on the level of TBoxes.
\begin{lemma}
\label{lem:tboxprodredcorr}
  There is a solution for \Tmf and $c$ iff the TBox $\{ \top
  \sqsubseteq \exists s . C_{\Tmf,c}\}$ is not
  preserved under products.
\end{lemma}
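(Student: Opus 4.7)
The plan is a straightforward lift from the concept-level reduction (Lemma~\ref{lem:concprodredcorr}) to the TBox level, using the fresh role $s$ as a ``wrapper'' that makes the concept $C_{\Tmf,c}$ reachable from every domain element while adding no interaction with the $r$-based machinery inside $C_{\Tmf,c}$. I would prove both directions separately and crucially rely on the fact that $s$ does not occur in $C_{\Tmf,c}$, so adding $s$-edges does not affect the extension of $C_{\Tmf,c}$.

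For the direction ``solution $\Rightarrow$ TBox not preserved under products'', I would take by Lemma~\ref{lem:concprodredcorr} two pointed interpretations $(\Imc_1,d_1)$ and $(\Imc_2,d_2)$ witnessing that $C_{\Tmf,c}$ is not preserved under binary products, i.e., $d_i\in C_{\Tmf,c}^{\Imc_i}$ for $i=1,2$ but $(d_1,d_2)\notin C_{\Tmf,c}^{\Imc_1\times\Imc_2}$. Define $\Jmc_i$ by expanding $\Imc_i$ with $s^{\Jmc_i}=\Delta^{\Imc_i}\times\{d_i\}$, so every element has a unique $s$-successor, namely $d_i$. Because $s$ is fresh, $C_{\Tmf,c}$ has the same extension in $\Jmc_i$ as in $\Imc_i$, hence $d_i\in C_{\Tmf,c}^{\Jmc_i}$ and $\Jmc_i\models\{\top\sqsubseteq\exists s.C_{\Tmf,c}\}$. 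In the product $\Jmc_1\times\Jmc_2$, however, the only $s$-successor of $(d_1,d_2)$ is $(d_1,d_2)$ itself, and by choice of the witnesses this point does not satisfy $C_{\Tmf,c}$; so $(d_1,d_2)\notin(\exists s.C_{\Tmf,c})^{\Jmc_1\times\Jmc_2}$, refuting the TBox.

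For the converse (contrapositive), assume there is no solution for $\Tmf$ and $c$; then by Lemma~\ref{lem:concprodredcorr}, $C_{\Tmf,c}$ is preserved under products. Let $\Jmc_1,\Jmc_2$ be arbitrary models of $\{\top\sqsubseteq\exists s.C_{\Tmf,c}\}$ and let $(d_1,d_2)\in\Delta^{\Jmc_1\times\Jmc_2}$ be arbitrary. Each $d_i$ has an $s$-successor $e_i\in C_{\Tmf,c}^{\Jmc_i}$; then $(e_1,e_2)$ is an $s$-successor of $(d_1,d_2)$ in the product, and by preservation of $C_{\Tmf,c}$ under products, $(e_1,e_2)\in C_{\Tmf,c}^{\Jmc_1\times\Jmc_2}$. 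Hence $(d_1,d_2)\in(\exists s.C_{\Tmf,c})^{\Jmc_1\times\Jmc_2}$, and $\Jmc_1\times\Jmc_2$ is a model of the TBox.

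There is essentially no obstacle: the only subtlety is ensuring that the auxiliary role $s$ is truly fresh with respect to $C_{\Tmf,c}$ (which is built only from $r$, the $X_i,\overline{X}_i$, $P,R,U$, $T_t$, $M,M'$), so that adding $s$-edges neither disturbs the internal tiling encoding nor contributes to $C_{\Tmf,c}$ in the product. This also explains why the construction in the forward direction must force $s$-successors to land on $d_i$ (rather than being surjective onto $\Delta^{\Imc_i}$): any additional $s$-edge would give the defect element $(d_1,d_2)$ further $s$-successors in the product that might accidentally satisfy $C_{\Tmf,c}$ and thus recover the TBox.
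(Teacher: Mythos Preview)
Your proof is correct and essentially carries out in full the argument the paper only sketches: the paper says ``it suffices to show that $C_{\Tmf,c}$ is preserved under products iff $\exists s . C_{\Tmf,c}$ is,'' and your two directions make explicit exactly how the fresh role $s$ transfers the concept-level (non)preservation of $C_{\Tmf,c}$ to the TBox level. Your construction $s^{\Jmc_i}=\Delta^{\Imc_i}\times\{d_i\}$ and the accompanying observation about why the $s$-edges must all point to the witness $d_i$ are precisely the content hidden behind the paper's word ``straightforward.''
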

\begin{proof}(sketch)
By Lemma~\ref{lem:concprodredcorr}, it suffices to show that 
$C_{\Tmf_c}$ is preserved under products iff $\exists s . C_{\Tmf,c}$ is.
This is straightforward.
\end{proof}
From Lemma~\ref{lem:tboxprodredcorr}, we obtain the desired lower bound
stated in Theorem~\ref{prodnew}.

\subsection{Proofs for DL-Lite}

\begin{theorem}
It is decidable in \ExpTime whether an $\ALCI$-TBox is invariant under
$\sim_{\text{DL-Lite}_{\sf horn}}$.
\end{theorem}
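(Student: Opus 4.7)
The plan is to adapt the type-elimination technique from the proof of Theorem~\ref{ALCItoALC}, exploiting the fact that $\sim_{\text{DL-Lite}_\mn{horn}}^{g}$ is much coarser than global bisimulation. Indeed, since DL-Lite$_\mn{horn}$-concepts are flat (conjunctions of basic concepts, no nested existentials), two interpretations are related by $\sim_{\text{DL-Lite}_\mn{horn}}^{g}$ iff they realize the same set of DL-Lite$_\mn{horn}$-types, which we call their \emph{signature}. The first step is a reduction lemma. Let $\Sigma_{\max}$ be the set of DL-Lite$_\mn{horn}$-types realizable in some model of $\Tmc$. Because $\ALCI$-TBoxes are invariant under disjoint unions, the disjoint union of models of $\Tmc$ realizing each $\sigma \in \Sigma_{\max}$ yields a ``universal'' model $\Imc^{\ast} \models \Tmc$ whose signature is exactly $\Sigma_{\max}$. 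Using disjoint unions once more, $\Tmc$ fails invariance under $\sim_{\text{DL-Lite}_\mn{horn}}^{g}$ iff there exists an interpretation $\Imc$ violating some CI of $\Tmc$ whose signature is contained in $\Sigma_{\max}$: given such $\Imc$, the pair $(\Imc^{\ast}, \Imc \sqcup \Imc^{\ast})$ witnesses non-invariance; conversely, any witness $\Imc_{2} \not\models \Tmc$ with the same signature as some $\Imc_{1} \models \Tmc$ automatically has signature inside $\Sigma_{\max}$.

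Next, I would give an \ExpTime algorithm for the reduced condition. Extend $\mn{sub}(\Tmc)$ by all basic concepts $A$, $\exists r.\top$, $\exists r^{-}.\top$ over $\Tmc$'s signature (a polynomial blow-up), so that every $\Tmc$-type $t$ uniquely determines its DL-Lite$_\mn{horn}$-reduct $\pi(t)$. Then: (i) run standard $\ALCI$ type elimination to compute $\mn{tp}(\Tmc)$ and set $\Sigma_{\max} = \{\pi(t) \mid t \in \mn{tp}(\Tmc)\}$; (ii) set $T_{0} = \{t \in \mn{tp} \mid \pi(t) \in \Sigma_{\max}\}$ and iteratively remove any $t \in T$ for which some existential $\exists r.C \in t$ (with $r$ a role or inverse) has no witness $t' \in T$ with $C \in t'$ and $t \leadsto_{r} t'$, yielding a fixpoint $T^{\ast}$; (iii) accept (declare $\Tmc$ non-invariant) iff $T^{\ast}$ contains a \emph{violating} type $t$, meaning $C \in t$ and $D \notin t$ for some CI $C \sqsubseteq D \in \Tmc$.

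Correctness follows from the reduction lemma together with standard type-elimination semantics: a violating $t \in T^{\ast}$ yields an interpretation realizing $t$ with all element types in $T^{\ast} \subseteq T_{0}$, hence a non-model of $\Tmc$ with signature in $\Sigma_{\max}$; conversely, any non-model $\Imc$ with signature inside $\Sigma_{\max}$ realizes a subset of $T_{0}$ closed under the required witnesses, so the type of a violating element survives elimination and ends up in $T^{\ast}$. The main obstacle is the reduction lemma---in particular, verifying that $\Sigma_{\max}$ is realized as the signature of a \emph{single} model of $\Tmc$, rather than only as a union across a family of models---which hinges on invariance of $\ALCI$-TBoxes under disjoint unions. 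Complexity-wise, there are $2^{O(|\Tmc|)}$ $\ALCI$-types and DL-Lite$_\mn{horn}$-types, and both type-elimination phases converge in exponentially many polynomial-time rounds, giving an overall \ExpTime procedure.
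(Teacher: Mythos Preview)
Your proposal is correct and follows essentially the same approach as the paper's proof: characterize $\sim_{\text{DL-Lite}_\mn{horn}}^{g}$ via the set of realized basic (b-)types, compute the set $\Sigma_{\max}$ of b-types realizable in models of $\Tmc$, and then run $\ALCI$ type elimination restricted to types whose b-projection lies in $\Sigma_{\max}$, accepting iff a surviving type is not in $\mn{tp}(\Tmc)$. Your reduction lemma (using disjoint unions to pass between ``signature contained in $\Sigma_{\max}$'' and ``same signature as a model of $\Tmc$'') is exactly the content of the paper's Claim~3, and your two-phase elimination mirrors the paper's construction of $Q$ and $P$.
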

\begin{proof}
Assume $\Tmc$ is given. Let ${\sf sig}(\Tmc)$ be the set of concept
and role names that occur in $\Tmc$ and denote by $B(\Tmc)$ the closure
under single negation of the set of basic concepts over ${\sf sig}(\Tmc)$.
In this proof, the set ${\sf tp}$ denotes the set of types over
${\sf sub}(\Tmc)\cup B(\Tmc)$; i.e., all subsets $t$ of 
${\sf sub}(\Tmc)\cup B(\Tmc)$ such that
\begin{itemize}
\item $C_{1} \sqcap C_{2}\in t$ iff $C_{1}\in t$ and $C_{2}\in t$, for all $C_{1}\sqcap C_{2} \in {\sf sub}(\Tmc)\cup B(\Tmc)$;
\item $\neg C\in t$ iff $C \not\in t$ iff $\neg C\in {\sf sub}(\Tmc)\cup B(\Tmc)$.
\end{itemize}
For $t \in {\sf tp}$, we set $t^{B}= t \cap B(\Tmc)$.
Let ${\sf tp}^{B}=\{ t^{B} \mid t\in {\sf tp}\}$
and call elements of ${\sf tp}^{B}$ \emph{b-types}.
The following is readily checked:

\medskip

\noindent
Claim 1. If $\Imc_{1},\Imc_{2}$ only interpret symbols in ${\sf sig}(\Tmc)$, then
$\Imc_{1} \sim_{\text{DL-Lite}_{\sf horn}}^{g} \Imc_{2}$ iff the sets of b-types realized
in $\Imc_{1}$ and $\Imc_{2}$ coincide.

\medskip

\noindent
Denote by ${\sf tp}(\Tmc)$ the set of $t \in {\sf tp}$ that are realizable in models of $\Tmc$
and set ${\sf tp}^{B}(\Tmc)= \{ t^{B} \mid t\in {\sf tp}(\Tmc)\}$.
Apply the following rule exhaustively to $Q=\{ t\in {\sf tp}\mid t^{B} \in {\sf tp}^{B}(\Tmc)\}$:
\begin{itemize}
\item If $\exists r.C\in t\in Q$ and there does no exists $s\in Q$ such that $t\leadsto_{r} s$
and $s\in Q$, then remove $t$ of $Q$.
\end{itemize}
Denote the resulting set by $P$. The following is readily checked.

\medskip

\noindent
Claim 2. $P$ consists of the set of all types that are realizable in interpretations
realizing b-types from ${\sf tp}^{B}(\Tmc)$ only.

\medskip

Observe that $P \supseteq {\sf tp}(\Tmc)$.

\medskip

\noindent
Claim 3. $P \not\subseteq {\sf tp}(\Tmc)$ iff $\Tmc$ is not invariant under
$\sim_{\text{DL-Lite}_{\sf horn}}$.

\medskip

\noindent
Assume $P\not\subseteq {\sf tp}(\Tmc)$ and take the disjoint union
$\Imc_{1}$ of interpretations $\Imc_{t}$, $t\in P$, that realize $t$ and realize b-types from
${\sf tp}^{B}(\Tmc)$ only. On the other, take a model $\Imc_{2}$ of $\Tmc$ that realizes
all b-types in ${\sf tp}^{B}(\Tmc)$. Now, $\Imc_{1}$ and $\Imc_{2}$ realize exactly the b-types
in ${\sf tp}^{B}(\Tmc)$, and since we may assume that $\Imc_{1}$ and $\Imc_{2}$ only interpret
symbols from ${\sf sig}(\Tmc)$, we obtain from Claim 1 that $\Imc_{1} \sim_{\text{DL-Lite}_{\sf horn}} \Imc_{2}$.
On the other hand, $\Imc_{2}$ is a model of $\Tmc_{1}$ but $\Imc_{1}$ is not, since it realizes a type
from ${\sf tp}\setminus{\sf tp}(\Tmc)$.

The converse direction is clear.

As ${\sf tp}(\Tmc)$ can be computed in exponential time (since satisfiability
of $\ALCI$-concepts w.r.t.~$\ALCI$-TBoxes is decidable in \ExpTime) and since $P$ is computed in exponential time,
we have proved the \ExpTime-upper bound. 
\end{proof}

\medskip
\noindent
{\bf Theorem~\ref{thm:dllitechar}.}
Let $\varphi$ be a first-order sentence.
Then the following conditions are equivalent:
\begin{enumerate}

\item $\varphi$ is equivalent to a DL-Lite$_\mn{core}$ TBox (resp.\ DL-Lite$^{d}_\mn{core}$ TBox);

\item  $\varphi$ is invariant under $\sim_{\text{DL-Lite}_\mn{horn}}^{g}$ and disjoint unions, and is 
preserved under products and unions (resp.\ compatible
unions). 

\end{enumerate}
%

\noindent
\begin{proof}
  The proof is a variation of the proof of Theorem~\ref{DLLite-TBox}.  
  We again concentrate on $2 \Rightarrow 1$, in particular on showing
  that $\mn{cons}(\vp) \models \vp$, where ${\sf cons}(\varphi)$ is the
  set of all DL-Lite concept inclusions that are a consequence of $\vp$.
  Assume to the contrary that $\mn{cons}(\vp) \not\models \vp$.

  Let $\mn{cons}^{\sqcap,\sqcup}(\vp)$ denote the set of extended
  \Lmc-CIs that are a consequence of $\mn{cons}(\vp)$, where an
  \emph{extended \Lmc-CI} has the form $$ B_{1}\sqcap \cdots \sqcap
  B_{m} \sqsubseteq D_{1}\sqcup \cdots \sqcup D_{n},
  $$
  with both the $B_i$ and the $D_i$ basic DL-Lite concepts.  Our aim
  is to construct interpretations $\Imc^-$ and $\Imc^+$ such that
  $\Imc^- \not\models \vp$, $\Imc^+ \models \vp$, and both $\Imc^-$
  and $\Imc^+$ satisfy precisely those extended \Lmc-CIs that are in
  ${\sf cons}^{\sqcap,\sqcup}(\varphi)$.

  $\Imc^-$ is constructed as in the proof of
  Theorem~\ref{DLLite-TBox}.  For every extended \Lmc-CI $C
  \sqsubseteq D \notin {\sf cons}^{\sqcap,\sqcup}(\varphi)$, take a
  model $\Imc_{C\not\sqsubseteq D}$ of ${\sf cons}(\varphi)$ that
  violates \mbox{$C \sqsubseteq D$}. Then $\Imc^-$ is the disjoint
  union of all $\Imc_{C\not\sqsubseteq D}$ and a model of
  \mbox{$\mn{cons}(\vp) \cup \{ \neg \vp \}$}.  Clearly, $\Imc^-$
  satisfies the desired properties.

  The main step in constructing~$\Imc^+$ is to build a model $\Imc'_{C
    \not\sqsubseteq D}$ of $\vp$ that violates $C \sqsubseteq D$, for
  every extended \Lmc-CI $C \sqsubseteq D \notin {\sf
    cons}^{\sqcap,\sqcup}(\varphi)$. When this is done, $\Imc^+$ will
  simply by the disjoint union of all $\Imc'_{C \not\sqsubseteq D}$.
  Let 
  $$ C \sqsubseteq D = B_{1}\sqcap \cdots \sqcap
  B_{m} \sqsubseteq D_{1}\sqcup \cdots \sqcup D_{n}.
  $$
  Let $i\leq m$ and $j\leq n$. Since $\varphi \not\models C
  \sqsubseteq D$, we also have $\vp \not\models B_i \sqsubseteq D_j$
  and thus there is a model $\Imc_{i,j}$ of $\varphi$ that violates
  $B_i \sqsubseteq D_j$. Assume that $d_{i,j} \in (B_i \setminus
 D_j)^{\Imc_{i,j}}$.  For $1 \leq i \leq m$, take the product
$$
\Imc_{i} = \prod_{1\leq j \leq n}\Imc_{i,j}.
$$
Since $\vp$ is preserved under products, $\Imc_i$ is a model of
$\vp$. Moreover, the element $\overline{d}_i: j \mapsto d_{i,j} \in
\Delta^{\Imc_i}$ satisfies $\overline{d}_i \in B_{i}^{\Imc_i}$ and
$\overline{d}_i \notin (D_{1}\sqcup \cdots \sqcup D_{n})^{\Imc_i}$. By
renaming domain elements, we can achieve that there is a $d \in
\bigcap_{1 \leq i \leq m} \Delta^{\Imc_i}$ such that $d \in
B_i^{\Imc_i} \setminus (D_{1}\sqcup \cdots \sqcup D_{n})^{\Imc_i}$ for
$1 \leq i \leq m$. Now, $\Imc'_{C \not\sqsubseteq D}$ is the union of
the interpretations $(\Imc_i)_{1 \leq i \leq m}$. Then $\Imc'_{C
  \not\sqsubseteq D}$ is a model of $\vp$ since $\vp$ is preserved
under unions and we have $d \in (B_{1}\sqcap \cdots \sqcap
B_{m})^{\Imc'_{C \not\sqsubseteq D}}$ and $d \notin (D_{1}\sqcup
\cdots \sqcup D_{n})^{\Imc'_{C \not\sqsubseteq D}}$, thus $\Imc'_{C
  \not\sqsubseteq D}$ is as required.

We can again assume w.l.o.g.\ that $\Imc^-$ and $\Imc^+$ are
$\omega$-saturated. It remains to show $\Imc^- \equiv^g_\Lmc \Imc^+$,
which can be done as in the proof of Theorem~\ref{DLLite-TBox}.
\end{proof}
\begin{theorem}
  Let $\Lmc_1 \in \mn{ExpDL}$ contain inverse roles and $\Lmc_2 \in
  \{$DL-Lite$_\mn{core}$, DL-Lite$_\mn{core}^d \}$. Then the
  complexity of $\Lmc_1$-to-$\Lmc_2$-TBox rewritability coincides with
  the complexity of TBox satisfiability in $\Lmc_1$.
\end{theorem}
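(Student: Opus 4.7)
The plan is to prove the theorem by a pair of polynomial-time reductions, as hinted just before the statement: $\Lmc_{1}$-to-$\Lmc_{2}$-TBox rewritability reduces to Boolean $\Lmc_{1}$-TBox unsatisfiability, and conversely $\Lmc_{1}$-TBox unsatisfiability reduces to $\Lmc_{1}$-to-$\Lmc_{2}$-TBox rewritability. Together with the fact already exploited in the excerpt that Boolean $\Lmc_{1}$-TBox satisfiability has the same complexity as $\Lmc_{1}$-TBox satisfiability for every $\Lmc_{1} \in {\sf ExpDL}$, these two reductions yield the claimed tight bounds.

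For the upper bound, the key observation is that over the signature of a given $\Lmc_{1}$-TBox $\Tmc$ there are only polynomially many candidate $\Lmc_{2}$-CIs: the inclusions $B_{1}\sqsubseteq B_{2}$ for DL-Lite$_\mn{core}$, together with the disjointness constraints $B_{1}\sqcap B_{2}\sqsubseteq \bot$ in the case of DL-Lite$_\mn{core}^{d}$, where $B_{1},B_{2}$ range over basic DL-Lite concepts. Enumerate them as $P_{1},\ldots,P_{k}$; since $\Lmc_{1}$ contains inverse roles, each $P_{i}$ is expressible in $\Lmc_{1}$. Clearly $\Tmc$ is $\Lmc_{2}$-rewritable iff $\{P_{i}\mid \Tmc\models P_{i}\}\models \Tmc$, i.e., iff no interpretation simultaneously satisfies all $\Lmc_{2}$-consequences of $\Tmc$ and refutes $\Tmc$ itself. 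To fold this equivalence into a \emph{single} Boolean $\Lmc_{1}$-TBox unsatisfiability instance I would adapt the partition-relativization trick from the proof of Theorem~\ref{Alg:Boolean0}: introduce fresh concept names $A_{0},A_{1},\ldots,A_{k}$ with a Boolean $\Lmc_{1}$-TBox $\chi$ forcing them to partition the domain into pairwise disjoint and mutually unconnected parts, and form
$$
\psi \;=\; \chi \;\wedge\; \neg \Tmc_{A_{0}} \;\wedge\; \bigwedge_{i=1}^{k} \bigl(\, (P_{i})_{A_{0}} \;\vee\; (\Tmc_{A_{i}}\wedge \neg(P_{i})_{A_{i}})\,\bigr),
$$
where $\Tmc_{A}$ and $(P_{i})_{A}$ denote the relativizations to the concept name $A$ in the sense of the proof of Theorem~\ref{Alg:Boolean0}. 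A model of $\psi$ encodes a ``main'' part $A_{0}$ refuting $\Tmc$ together with, for every $P_{i}$ violated in $A_{0}$, a part $A_{i}$ that is a model of $\Tmc$ refuting $P_{i}$; hence $\psi$ is satisfiable iff $\Tmc$ is \emph{not} $\Lmc_{2}$-rewritable.

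For the lower bound, given $\Tmc$ pick fresh concept names $A,B$ and set $\Tmc' = \Tmc \cup \{\top\sqsubseteq A\sqcup B\}$. If $\Tmc$ is unsatisfiable, so is $\Tmc'$, and $\Tmc' \equiv \{\top\sqsubseteq \bot\}$, which is already an $\Lmc_{2}$-TBox. If $\Tmc$ is satisfiable, I would argue that $\Tmc'$ is not $\Lmc_{2}$-rewritable by a short case analysis: since $A,B$ are fresh, the only $\Lmc_{2}$-CIs over the signature of $\Tmc'$ that $\Tmc'$ entails beyond what $\Tmc$ alone entails are the trivial ones. Hence any model $\Imc_{0}$ of $\Tmc$ in which $A$ and $B$ are interpreted as the empty set satisfies every $\Lmc_{2}$-CI entailed by $\Tmc'$ yet refutes $\top \sqsubseteq A\sqcup B$, so no set of $\Lmc_{2}$-CIs entailed by $\Tmc'$ can be equivalent to $\Tmc'$. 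The main obstacle lies in the upper bound: the naive algorithm makes polynomially many independent subsumption calls $\Tmc \models P_{i}$, which is fine for the ExpTime-complete cases but does not obviously preserve sharpness for $\mathcal{ALCQIO}$; the real work is in the partition-relativization encoding that collapses the entire rewritability question into one Boolean $\Lmc_{1}$-TBox (un)satisfiability query.
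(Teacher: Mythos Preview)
Your overall strategy (mutual polynomial reductions between rewritability and Boolean TBox unsatisfiability) matches the paper's, and your lower bound is a correct variant of the paper's: the paper simply fixes some non-$\Lmc_2$-rewritable TBox $\Tmc'$ and observes that $\Tmc$ is satisfiable iff $\Tmc\cup\Tmc'$ is not rewritable, whereas you spell out a concrete $\Tmc'=\{\top\sqsubseteq A\sqcup B\}$.

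The upper-bound encoding, however, has a genuine gap when $\Lmc_{1}$ contains nominals, i.e.\ for $\Lmc_{1}\in\{\mathcal{ALCIO},\mathcal{ALCQIO}\}$. Your construction partitions one domain into parts $A_{0},\ldots,A_{k}$ and relativizes $\Tmc$ and the $P_{i}$ to these parts \emph{without renaming symbols}. But any nominal $a$ occurring in $\Tmc$ is interpreted by a single element, which lies in exactly one $A_{j}$; in every other part the relativized concept $\{a\}$ is empty, so $\Tmc_{A_{i}}$ no longer expresses ``the $A_{i}$-part is a model of $\Tmc$''. Concretely, the direction ``$\Tmc$ not rewritable $\Rightarrow$ $\psi$ satisfiable'' fails: you cannot place independent models of $\Tmc$ in several $A_{i}$ simultaneously when they must all share the single element $a^{\Imc}$. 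The partition trick of Theorem~\ref{Alg:Boolean0} that you invoke is stated there only for $\Lmc$ \emph{without} nominals, for exactly this reason. The paper repairs this by taking \emph{signature-disjoint} relativized copies: for each candidate CI $\alpha$ (plus one extra index~$\bullet$) it introduces fresh symbols $A_{\alpha},r_{\alpha},a_{\alpha}$ and a relativization name $R_{\alpha}$, rewrites $\Tmc$ into $\Tmc_{\alpha}$ over this fresh signature, and sets
\[
\vp\;=\;\bigwedge_{\alpha\in\Gamma}\bigl(\Tmc_{\alpha}\wedge(\alpha_{\alpha}\rightarrow\alpha_{\bullet})\bigr)\;\wedge\;\neg\Tmc_{\bullet},
\]
where $\alpha_{\alpha}$ and $\alpha_{\bullet}$ are $\alpha$ written over the respective copies of the signature (both are $\Lmc_{1}$-CIs since $\Lmc_{1}$ has inverse roles). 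Because each copy carries its own nominals, the argument goes through uniformly. Your encoding is correct as stated only for $\mathcal{ALCI}$ and $\mathcal{ALCQI}$; for the nominal cases you need the renaming step rather than a plain domain partition.
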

\begin{proof}
  It it common knowledge that for all $\Lmc_1 \in \mn{ExpDL}$, TBox
  satisfiability and Boolean TBox satisfiability have the same
  complexity. It thus suffices to give a reduction from $\Lmc_1$-TBox
  unsatisfiability to $\Lmc_1$-to-$\Lmc_2$-TBox rewritability and from
  $\Lmc_1$-to-$\Lmc_2$-TBox rewritability to the unsatisfiability of
  Boolean $\Lmc_1$-TBoxes. The former is easy since an $\Lmc_1$-TBox
  \Tmc is satisfiable iff $\Tmc \cup \Tmc'$ is not $\Lmc_2$-rewritable,
  where $\Tmc'$ is any fixed TBox that is not $\Lmc_2$-rewritable.

  For the reduction of $\Lmc_1$-to-$\Lmc_2$-TBox rewritability to the
  unsatisfiability of Boolean $\Lmc_1$-TBoxes, fix an $\Lmc_1$-TBox
  \Tmc.  Let $\Sigma$ be the signature of \Tmc, i.e., the set of all
  concept names, role names, and nominals that occur in
  \Tmc. Moreover, let $\Gamma$ be the set of all $\Lmc_2$-concept
  inclusions over $\Sigma$, and
  $$
    \Gamma_\Tmc = \{ \alpha \in \Gamma \mid \Tmc \models \alpha \}.
  $$
  Note that $\Gamma$ is finite, and that its cardinality is polynomial
  in the size of $\Tmc$.
  \\[2mm]
  {\bf Claim 1}. $\Tmc$ is $\Lmc_2$-rewritable iff $\Tmc \equiv \Gamma_\Tmc$.
  \\[2mm]
  The ``if'' direction is trivial. For the ``only if'' direction,
  assume that \Tmc is $\Lmc_2$-rewritable and that $\Tmc'$ is an
  $\Lmc_2$-TBox that is equivalent to \Tmc. Clearly, every concept
  inclusion in $\Gamma_\Tmc$ is a consequence of $\Tmc'$. Conversely,
  every concept inclusion in $\Tmc'$ must also be in
  $\Gamma_\Tmc$. Thus, $\Gamma_\Tmc \equiv \Tmc_1 \equiv \Tmc$.

  \medskip

  By Claim~1, it suffices to reformulate the question `is \Tmc
  equivalent to $\Gamma_\Tmc$?' in terms of unsatisfiability of
  Boolean $\Lmc_1$-TBoxes. This is what we do in the following.
  First, we may assume w.l.o.g.\ that \Tmc is of the form $\{ \top
  \sqsubseteq C_\Tmc \}$ with $C_\Tmc$ and $\Lmc_1$-concept in
  \emph{negation normal form (NNF)}, i.e., negation is only applied to
  concept names and nominals. For each concept name $A \in \Sigma$
  (role name $r \in \Sigma$, nominal $a \in \Sigma$) and $\alpha \in
  \Gamma$, reserve a fresh concept name $A_\alpha$ (role name
  $r_\alpha$, nominal $a_\alpha$). Moreover, for each $\alpha \in
  \Gamma \uplus \{ \bullet \}$ reserve an additional concept name
  $R_\alpha$. The new symbols give rise to signature-disjoint and
  relativized copies $\Tmc_\alpha$ of \Tmc, for each $\alpha \in
  \Gamma \uplus \{ \bullet \}$, defined as follows:
  \begin{enumerate}

  \item replace in $C_\Tmc$ each concept name $A$ with $A_\alpha$,
    each role name $r$ with $r_\alpha$, and each nominal $a$ with
    $a_\alpha$; call the result $C_{\Tmc,\alpha}$;

    \item replace $\top \sqsubseteq C_{\Tmc,\alpha}$ with $R_\alpha \sqsubseteq C_{\Tmc,\alpha}$;

    \item replace each subconcept $\exists r . C$ in $C_{\Tmc,\alpha}$
      with $\exists r .  (R_\alpha \sqcap C)$ and each subconcept
      $\forall r . C$ in $C_{\Tmc,\alpha}$ with $\forall r .
      (R_\alpha \rightarrow C)$.

  \end{enumerate}
  Note that $R_\alpha$ is used for relativization, i.e., the TBox
  $\Tmc_\alpha$ does not `speak' about the entire domain, but only
  about the part identified by $R_{\alpha}$.

  Analogously, we introduce a renaming and relativization
  $\alpha_\alpha$ for each $\alpha \in \Gamma \uplus \{ \bullet \}$:
  first rename symbols as in Step~1 above, then replace $\alpha = B_1
  \sqsubseteq B_2$ with $R_\alpha \sqcap B_1 \sqsubseteq B_2$. Note
  that the modified $\alpha$ is not in $\Lmc_1$, but in $\Lmc_2$
  (since the latter contains inverse roles).

  Define a Boolean TBox
  $$
  \vp=  \bigwedge_{\alpha \in \Gamma} (\Tmc_\alpha \wedge (\alpha_\alpha
  \rightarrow \alpha_{\bullet})) \wedge \neg \Tmc_{\bullet}
%
  $$
  It suffices to prove the following
  \\[2mm]
  {\bf Claim 2}. \Tmc is $\Lmc_2$-rewritable iff $\vp$ is unsatisfiable.
  \\[2mm]
  ``if''. Let \Tmc not be $\Lmc_2$-rewritable. By Claim~1, we then
  have $\Gamma_\Tmc \not\models \Tmc$. For each $\alpha \in \Gamma$,
  take a model $\Imc_\alpha$ of $\Tmc_\alpha$ such that $\Imc_\alpha
  \models \alpha_\alpha$ iff $\alpha \in \Gamma_\Tmc$. Additionally,
  take a model $\Imc_\bullet$ of $\{ \alpha_{\bullet} \mid \alpha \in
  \Gamma_\Tmc\}$ with $\Imc_{\bullet} \not \models \Tmc_{\bullet}$. We can
  assume w.l.o.g.\ that $\Imc_\bullet$ and each $\Imc_{\alpha}$ have
  the same domain $\Delta$: if they don't, then the relativization to
  $R_\alpha$ allows us to extend each domain $\Delta^{\Imc_\alpha}$ to
  $\bigcup_{\alpha \in \Gamma \uplus \{ \bullet \}}
  \Delta^{\Imc_\alpha}$. Define a new interpretation \Imc as follows:
  \begin{itemize}

  \item $\Delta^\Imc = \Delta$;

  \item $A_\alpha^\Imc = A_\alpha^{\Imc_\alpha}$, $r_\alpha^\Imc =
    r_\alpha^{\Imc_\alpha}$, and $a_\alpha^\Imc =
    a_\alpha^{\Imc_\alpha}$ for all concept names $A$ including the
    relativization names $A_\alpha$, role names $r$, nominals $a$, and
    $\alpha \in \Gamma \uplus \{ \bullet \}$;

  \end{itemize}
  It is not hard to prove that \Imc is a model of $\Tmc_{\alpha}$ for
  all $\alpha \in \Gamma$, but not for $\Tmc_{\bullet}$. Moreover, it
  is a model of $\alpha_\alpha$ iff $\alpha \in
  \Gamma_\Tmc$ and a model of $\alpha_\bullet$ if $\alpha \in \Gamma_\Tmc$,
  for all $\alpha \in \Gamma$. Therefore, $\Imc$ satisfies~$\vp$.

  \smallskip

  ``only if''. Let \Tmc be $\Lmc_2$-rewritable. Then $\Tmc \equiv
  \Gamma_\Tmc$ by Claim~1. Assume to the contrary of what is to be
  shown that there is a model \Imc of $\vp$.  Since $\Tmc \models
  \Gamma_\Tmc$, it is easy to see that $\Tmc_\alpha \models
  \alpha_\alpha$ for all $\alpha \in \Gamma_\Tmc$. For this reason,
  the first conjunct of $\vp$ yields $\Imc \models \alpha_{\bullet}$
  for all $\alpha \in \Gamma_\Tmc$. Since $\Tmc \equiv \Gamma_\Tmc$,
  this yields $\Imc \models \Tmc_{\bullet}$, in contradiction to \Imc
  satisfying the last conjunct of~$\vp$.
\end{proof}
Note that, when inverse roles are not contained in $\Lmc_1$, the above
proof yields a reduction of $\Lmc_1$-to-$\Lmc_2$-TBox rewritability to
satisfiability of Boolean $\Lmc_1\Imc$-TBoxes, where $\Lmc_1\Imc$ is the
extension of $\Lmc_1$ with inverse roles.

\end{document}